\documentclass[11pt,a4paper]{article}
\usepackage{amssymb}
\usepackage{amsmath}
\usepackage{amsthm}
\usepackage{color,xcolor}
\usepackage{hyperref}
\usepackage{latexsym}
\usepackage[dvips]{epsfig}
\usepackage{mathrsfs}
\usepackage{bm}
\usepackage[utf8]{inputenc}
\usepackage[T1]{fontenc}

\theoremstyle{plain}
\newtheorem{proposition}{Proposition}

\newtheorem{theorem}{Theorem}
\newtheorem{assumption}{Assumption}

\newtheorem{corollary}{Corollary}

\newtheorem{remark}{Remark}

\setlength{\textwidth}{148mm}           % Width of text on page- max 148
\setlength{\textheight}{235mm}          % height of text on page-max 235
\setlength{\topmargin}{-5mm}            % Margin at top ofpage- max -5
\setlength{\oddsidemargin}{5mm}         % Odd page sidemargin max 15
\setlength{\evensidemargin}{5mm}

\font\SYM=msbm10

% Scri

\font\tenscr=rsfs10 scaled1100
\font\sevenscr=rsfs7 % scaled \magstep1
\font\fivescr=rsfs5 % scaled \magstep1
\skewchar\tenscr='177
\skewchar\sevenscr='177
\skewchar\fivescr='177
\newfam\scrfam
\textfont\scrfam=\tenscr
\scriptfont\scrfam=\sevenscr
\scriptscriptfont\scrfam=\fivescr

\newcommand{\Scri}{{\fam\scrfam I}}
\def\scri{{\fam\scrfam I}}

\def\Hubble{H}

\def\Ncf{n}
\def\NCF{\hat{n}}
\def\Acf{a}
\def\ACF{\hat{a}}

\newcommand{\tensor}[3]{_{#1\phantom{#2}#3}^{\phantom{#1}#2}}
\newcommand{\half}{\frac{1}{2}}
\newcommand{\third}{\frac{1}{3}}

\newcounter{mnotecount}[section]

\renewcommand{\themnotecount}{\thesection.\arabic{mnotecount}}

\newcommand{\mnote}[1]%{}%
{\protect{\stepcounter{mnotecount}}$^{\mbox{\footnotesize
$%\!\!\!\!\!\!\,
\bullet$\themnotecount}}$ \marginpar{%\color{red}%
\raggedright\tiny\em
$\!\!\!\!\!\!\,\bullet$\themnotecount: #1} }

%%___________________________________________________________
\def\be{\begin{equation}}
\def\ee{\end{equation}}
\def\bea{\begin{eqnarray}}
\def\eea{\end{eqnarray}}

%_________________________________________________

\def\s{\sigma}

\newcommand{\smfrac}[2]{{\textstyle{#1\over#2}}}

\def\half{\smfrac{1}{2}}
\def\third{\smfrac{1}{3}}

\def\a{\alpha}
\def\e{\epsilon}
\def\g{\gamma}
\def\b{\beta}
\def\c{\gamma}
\def\w{\omega}

\newcommand{\R}{\mathbb{R}}

%_________________________________________________

\begin{document}

\title{\textbf{Asymptotic structure and stability of spatially homogeneous space-times with a positive cosmological constant}}
\author{Christian L\"ubbe$^{\flat}\footnote{email: christian.luebbe@epfl.ch}$ and Filipe C. Mena$^\natural$$^{\ddagger}\footnote{email: filipecmena@tecnico.ulisboa.pt}$\\\\
$^\flat$ Ecole Polytechnique Fédérale de Lausanne (EPFL), Lausanne, Switzerland\\
$^\natural$ Centro de An\'alise Matem\'atica, Geometria e Sistemas Din\^amicos,\\
Instituto Superior T\'ecnico, Univ. Lisboa, Av. Rovisco Pais 1, 1049-001 Lisboa, Portugal\\
$^{\ddagger}$ Centro de Matem\'atica, Universidade do Minho, 4710-057 Braga, Portugal}

\maketitle

\begin{abstract}
We investigate the future asymptotics of spatially homogeneous space-times with a positive cosmological constant by using and further developing geometric conformal methods in General Relativity. For a large class of source fields, including fluids with anisotropic stress,
we prove that the space-times are future asymptotically simple and geometrically conformally regular. We use that result in order to show the global conformal regularity of the Einstein-Maxwell system as well as the Einstein-radiation, Einstein-dust, massless Einstein-Vlasov and particular Einstein-scalar field systems for Bianchi space-times. Taking into account previous results, this implies the future non-linear stability of some of those space-times in the sense that, for small perturbations, the space-times approach locally the de Sitter solution asymptotically in time. This extends some cosmic no-hair theorems to almost spatially homogeneous space-times. However, we find that the conformal Einstein field equations preserve the Bianchi type even at conformal infinity, so the resulting asymptotic space-times have {\em conformal hair}. 

\end{abstract}
Keywords: General Relativity; Conformal Geometry; Non-linear stability; Asymptotics
\newpage
\tableofcontents
\newpage
%%%%%%%%%%%%%%%%%%%%%%%%%%%%%%%%%%%%%%%%%%%%%%%%%%%%%%%%%%%%%%%%%%%%
\section{Introduction}

\subsection{Motivation and background}

Cosmological observations indicate that, on sufficiently large scales, the universe can be modeled by nearly spatially homogeneous space-times. Moreover, supernovae data motivates the inclusion of a positive cosmological constant $\Lambda$ on the Einstein field equations (EFE) to account for the observed cosmological acceleration. These observational facts motivate the mathematical analysis of almost spatially homogeneous cosmological models with $\Lambda>0$. 

In turn, there is a long standing conjecture, called the {\em cosmic no-hair conjecture}, which roughly states that cosmological models with $\Lambda>0$, approach the de Sitter space-time, locally asymptotically in time. Remarkably, a version of this conjecture has been proved by Wald \cite{Wal83} in the case of ever expanding exact spatially homogeneous (Bianchi) space-times satisfying the strong and dominant energy conditions. Extensions of this result to the Einstein-Vlasov system were shown in \cite{Lee,Nun12,Nun13, Tod-2007} and for the Einstein-scalar field system in \cite{Ren04b}. 

This led to the question of the robustness of these results against inhomogeneous deviations from the spatially homogeneous background. The literature about this problem is quite vast, particularly for methods using either exact (therefore, highly symmetric) solutions of the EFE or finite order (usually linear) perturbations about the spatially homogeneous and isotropic space-times of the Friedmann-Lema\^itre-Robertson-Walker (FLRW) family. 

We give a brief overview of some non-linear future stability results with $\Lambda>0$. Given initial data corresponding to an ever expanding space-time, the question is whether small (non-linear) perturbations give maximal globally hyperbolic developments which are future causally geodesically complete and stay close enough to the future of the background solution. 
The non-linear stability of de Sitter space within the class of vacuum space-times was proven by Friedrich using conformal methods \cite{Fri-vacuum}. 
This approach was generalised to stability of de Sitter space for the Einstein-Maxwell-Yang-Mills system \cite{Fri-EMYM} as well as space-times with massive scalar fields \cite{Fri-massive-scalar} or dust \cite{Fri-dust}.

Next consider FLRW backgrounds containing a perfect fluid with a linear equation of state $p=(\gamma-1)\rho$, where $p$ and $\rho$ are the fluid pressure and density. In this case, the future stability has been proven: 

(i) For $1< \gamma< 4/3$ \cite{Rod-Spe,Spe12}, and separately for $\gamma=1$ \cite{Haz-Spe} and $\gamma=4/3$ \cite{Spe13}, using $k=0$ as well as $[T_0,\infty)\times {\mathbb T}^3$ type topology and the harmonic gauge methods of Ringstr\"om \cite{Rin08}; 

(ii) For $\gamma=4/3$ and Gaussian spatial curvature $k=1$, by \cite{LueVal12b} using conformal methods; 

(iii) For $1< \gamma\le 4/3$, using a combination of conformal and harmonic methods \cite{Oli}.

(iv) In \cite{Fri-dust} Friedrich covers $\gamma = 1$ for $M = {\mathbb R} \times S$ with $S = {\mathbb S}^3, {\mathbb T}^3$ or ${\mathbb H}^3_*$.

(v) In \cite{Reula}, it was shown for a wide class of perfect fluid equations of state that in the expanding phase of a FLRW background all small enough non-linear perturbations decay exponentially in time.

A non-linear stability result was also obtained for the FLRW-scalar field system using conformal methods in \cite{Alho-Mena-Kroon}, although with more restrictions on the scalar field and its potential than the far reaching result of Ringstr\"om \cite{Rin08}.  More recently, the future dynamics of near-FLRW solutions to the Einstein-massless-scalar field system with a positive cosmological constant was analysed in \cite{Forno} where a future stability result has been proved. For results within spherical symmetry see \cite{Costa-Mena, Costa-Duarte-Mena}.

The work in \cite{Fri-dust} also shows the stability for Einstein-dust space-times wide class of non-spatially homogeneous background space-times. A specific application to space-times with self-gravitating dust balls is given in \cite{VK-dust-balls}.

\subsection{Objectives and main challenges}

The main goal of this paper is to go beyond the results for FLRW backgrounds in two ways: (i) Use spatially homogeneous but anisotropic backgrounds (Bianchi space-times, in particular) (ii) Analyse larger families of source fields.

Unlike FLRW models, the Bianchi space-times are not conformally flat. It is unclear whether the gauge harmonic methods could be adapted to non-conformally flat cases of large families of matter sources. Nevertheless, some future non-linear stability results have already been proved for the massive Einstein-Vlasov system with a positive cosmological constant on some Bianchi background geometries \cite{Ringstroembook, Nun15, Andreasson-Ring}.

One of the mathematical difficulties in going from FLRW to Bianchi backgrounds is, in fact, the need to control the Weyl tensor. In order to do that, in this paper, we shall sharpen previous estimates for the decay of some of the background quantities.
This allows us to use geometric conformal methods to prove two notions of conformal regularity for a large family of Bianchi spacetimes: i) asymptotic simplicity, and ii) regular solutions to the conformal Einstein field equations (CEFE) up to and beyond null infinity. This will in turn lead to our stability results.

The conformal methods in general relativity (see e.g. \cite{Fri-review, JVKbook, FraLR}) rely on the existence of a regular conformal embedding which can turn a global stability problem into a local problem for which standard PDE results can be applied. If a conformal extension with sufficient regularity exists, then the conformal Weyl tensor must vanish at conformal infinity. In turn, the conformal boundary is uniquely determined by the CEFE once suitable data is prescribed. The stability of the solutions relies on the analysis of the asymptotic structure and the conformal geometry near conformal infinity. 

However, for a generic space-time, such a conformal embedding may not exist. A well known example is the Nariai space-time which is geodesically complete but can not be conformally compactified \cite{Bey09a,Bey09b}. Hence, in order to attack the stability problem, the conformal regularity of a space-time, i.e. the existence of a regular conformal extension to conformal infinity, has to be proven first. 

In \cite{Fri-dust} this problem was avoided by prescribing so-called \emph{Cauchy data at infinity} for the CEFE on a Cauchy surface, that subsequently describes future null infinity, and then evolving this data to the past. However the concern is this class of space-times may be fairly limited within the wider space of solutions to the EFE. 
In this paper, we consider Bianchi space-times with a broad range of matter models satisfying the original conditions of Wald \cite{Wal83}. We will show that for these Bianchi space-times we have sufficient conformal regularity to exploit the approach of Cauchy data at infinity for several matter models.

As we have described, the stability analysis, so far, has been specific to a particular class of matter models. We shall, instead, analyse the conformal regularity\footnote{We will address asymptotic simplicity in section \ref{Sec:Penrose approach} and Bianchi space-times as regular solutions to the CEFE in section \ref{Sec:CEFE}.} of large families of space-times including perfect and non-perfect fluids, the massive and the massless Einstein-Vlasov system, the Einstein-Maxwell system and the massive Einstein-scalar field system.
In order to do so, we need to have adequate decay estimates for the system's variables. Some useful estimates were already derived in \cite{Wal83} and in \cite{Lee}. However, those estimates are not sufficient for our proposes, so part of our task will be to improve them. 

\subsection{Main results and outline of the paper}

Starting from two quite general assumptions for our Bianchi space-times (see sections \ref{sec2.2} and \ref{sec3.4}) we will show quite general decay rates for the background allowing us to treat bigger classes of matter sources. Those estimates are summarised in Proposition \ref{prop-decay1} and Proposition \ref{prop-decay2}. In turn, these will be used to prove the geometric conformal regularity of Bianchi spacetimes in Theorem \ref{prop-future-asymp-simple}, which can be more simply be stated as:
\\\\
{\bf Theorem (Asymptotically Simple Bianchi space-times):} 
\\
{\em Spatially homogeneous Bianchi space-times (not of type IX), initially expanding and satisfying the dominant and strong energy conditions, can be conformally extended to $\scri^+$ and are asymptotically simple, provided the matter anisotropic stresses are zero or decay sufficiently fast.} 
\\\\
As we will show, this result includes various cases of interest such as Vlasov matter, perfect fluids, scalar fields, Maxwell fields and other trace-free fields as summarized in Table \ref{table} at the end of this section.

After proving asymptotic simplicity, we shall study the future non-linear stability of solutions for our background space-times. To do that, we shall first analyse the asymptotic constraints in Proposition \ref{prop-constraints at Scri} and Proposition \ref{thm-constraints}. We then investigate the existence of regular solutions to the CEFE up to conformal infinity in Theorem \ref{space-times-regular-CEFE} and expand them using the approach of prescribing the asymptotic values as Cauchy data at infinity. This latter analysis is hampered by the fact that the formulation of regular CEFE has only been shown for few sources fields, such as radiation or dust fluids, the Einstein-Maxwell system and some massive scalar fields (see Table \ref{table}). For those settings, we shall rely on the results established in \cite{Fri-EMYM, Fri-massive-scalar, Fri-dust, LueVal12b}, which are based on Kato's theorem \cite{Kato}, to deduce the future non-linear stability of Bianchi space-times in suitable Sobolev norms, in Theorem \ref{Main theorem}. Our results in this respect can be roughly summarised as:
\\\\
{\bf Theorem (Conformal Einstein field equations and non-linear stability):}
\\
{\em Spatially homogeneous spacetimes as in the previous theorem and containing either Maxwell fields, dust fluids, radiation fluids or some massive scalar fields, give rise to regular solutions of the conformal Einstein field equations up to and including $\scri^+$. Furthermore, the regular solutions exist beyond the conformal boundary and are non-linearly stable against small perturbations. As a  result, these almost spatially homogeneous space-times locally approach the de Sitter space-time asymptotically in the future.} 
\\\\
The plan of the paper is as follows: Section \ref{setup} contains our preliminary analysis where we revisit Wald's result in terms of the explicit asymptotic decay estimates for the background system of differential equations. 
In Section \ref{Section unphysical variables}, we introduce conformally rescaled variables and use bootstrap arguments to improve the preliminary decay estimates. 
In Section \ref{Section - matter models}, we use the reformulated EFE in order to derive the estimates for different sources fields. 
The asymptotic simplicity and geometric conformal regularity of the space-times is analysed in Section \ref{Section - Conformal regularity}, while Section \ref{Sec:Friedrich approach} is devoted to the non-linear stability.       

We close this section by providing an overview in Table \ref{table} of the various matter models considered and the results we are able to deduce for our class of Bianchi space-times.
\\
\begin{table}[h]
\begin{tabular}{ | l | c | c | c | }
\hline
{\bf Matter models in Bianchi space-times}& \vtop{\hbox{\strut Asymptotic }\hbox{\strut~simplicity }} & \vtop{\hbox{\strut Conformal  }\hbox{\strut~regularity}} & Stability \\
\hline
Maxwell fields & $\checkmark$& $\checkmark$ & $\checkmark$  \\
\hline
Aligned radiation fluids & $\checkmark$& $\checkmark$ & $\checkmark$    \\
\hline
Aligned dust fluids & $\checkmark$& $\checkmark$ & $\checkmark$    \\
\hline
Massive scalar fields with potential \eqref{scalar-field-potential} & $\checkmark$ & $\checkmark$ & $\checkmark$ \\
\hline
Massless Vlasov matter  &$\checkmark$ & $\checkmark$  &      \\
\hline
Massive Vlasov matter & $\checkmark$ &  &     \\
\hline
General trace-free matter &$\checkmark$ &  &     \\
\hline
Aligned perfect fluids with $\gamma \in [\tfrac{2}{3},2]$ & $\checkmark$  & &     \\
\hline
Elastic matter with \eqref{elastic} and $\gamma \ge \frac{4}{3}$& $\checkmark$ &  &  \\
\hline
Viscous fluids with \eqref{viscous-T}, \eqref{viscous-example} and $\gamma \ge \frac{4}{3}$.& $\checkmark$ &  &   \\
\hline
\end{tabular}
\caption{Examples of the matter fields satisfying our main theorems. The study of the cases without a $\checkmark$ is hampered by the fact that the respective CEFE don't exist yet, at least with sufficient generality.}
\label{table}
\end{table}
%
%%%%%%%%%%%%%%%%%%%%%%%%%%%%%%%%%%%%%%%%%%%%%%%%%%%
\section{Setup and preliminary analysis}
\label{setup}
%%%%%%%%%%%%%%%%%%%%%%%%%%%%%%%%%%%%%%%%%%%%%%%%%%
In this section, we briefly revise ideas of spatial homogeneity and the energy conditions that will be used later on. We follow the conventions of \cite{vElUgg96}. The metric has signature $(-\, +\, +\, +)$. Space-time indices are denoted by Greek indices $\mu,\nu,\rho,\sigma=0,1,2,3$ and spatial coordinates by Latin indices $i,j,k= 1,2,3$. Orthonormal frame space-time indices are denoted by Latin indices $a,b,c,d=0,1,2,3$ and spatial frame indices by greek indices $\alpha,\beta,\gamma,\delta, \epsilon = 1,2,3$. This should also be clear from the context. We use geometrised units with $c=8\pi G=1$ and take a positive cosmological constant $\Lambda >0$. Hence, the Einstein field equations take the form
\begin{equation}
\label{Einstein equation}
G_{\mu\nu} + \Lambda g_{\mu\nu}= T_{\mu\nu},\quad\quad\Lambda  >0.
\end{equation}
In this article, we will express the cosmological constant in terms of $\lambda = \sqrt{\Lambda/3}$ to ease notation and readability.

We consider a space-time $(M,g)$ with a distinguished time-like direction given by the velocity
vector field ${\bf u}$. 
We use the formalism described in Ehlers \cite{Ehlers} and 
Ellis \cite{Ellis73} and define
a tensor which, at each point, projects into the space orthogonal to
${\bf u}$ by
\begin{equation}
\label{ola1}
h_{\mu\nu}=g_{\mu\nu}+u_\mu u_\nu,
\end{equation}
such that
$h_{~\mu}^{\rho} h_{~\rho}^\nu=h_{~\mu}^\nu,~~~h_{~\mu}^\nu u_\nu=0,~~~h_{~\mu}^\mu=3.$
The covariant derivative of ${\bf u}$ can be decomposed into its irreducible parts as
\begin{equation}
\label{ola3}
\nabla_\nu u_\mu = \Hubble h_{\mu\nu} + \s_{\mu\nu} - \omega_{\mu\nu} - A_\mu u_\nu,
\end{equation}
where
\begin{eqnarray}
\label{ola73}
&&\theta_{\mu\nu} :=\s_{\mu\nu}+\Hubble h_{\mu\nu},~~\sigma_{\mu\nu}:=\nabla_{(\nu} u_{\mu)}-\Hubble h_{\mu\nu}+A_{(\mu}u_{\nu)},~~~\w_{\mu\nu}:=-\nabla_{[\nu}u_{\mu]}-A_{[\mu}u_{\nu]},\nonumber\\
&&A_\mu := u^\nu \nabla_\nu u_{\mu},~~~~~~~
 \Hubble := \third \nabla_\mu u^\mu 
\end{eqnarray}
and $\quad
\s_{\mu\nu}=\s_{(\mu\nu)};~~~\s^{~\mu}_\mu=0;~~~\s_{\mu\nu}u^\nu=0;~~~
\omega_{\mu\nu}=\omega_{[\mu\nu]};~~~\omega_{\mu\nu}u^\nu=0.$
The tensor $\omega_{\mu\nu}$ is interpreted as the vorticity tensor,
$\s_{\mu\nu}$ as the shear, and
$\theta$ the expansion. It is also useful to define
$\s^2=\half\s_{\mu\nu}\s^{\mu\nu}$ and to recall that 
the vector field ${\bf u}$  is hypersurface forming if $\omega_{\mu\nu}=0$. In the context of this article, ${\bf u}$ will be aligned with the normal $\bf n$ of the surfaces of homogeneity.

The stress-energy tensor in General Relativity can decomposed with respect to ${\bf u}$ as
\begin{equation}
\label{ola10}
T_{\mu\nu}=\rho u_\mu u_\nu+2q_{(\mu}u_{\nu)}+ph_{\mu\nu}+\pi_{\mu\nu},
\end{equation}
where 
$q_\mu u^\mu=0;~~~\pi_{\mu\nu}u^\nu=0;~~~\pi^\mu_{~\mu}=0;~~~\pi_{\mu\nu}=\pi_{(\mu\nu)}$.
Interpreting this as an imperfect fluid, the energy density
is represented by $\rho$, the energy flux relative to ${\bf u}$ by $q^\mu$,
the isotropic pressure by $p$ and the anisotropic stress by $\pi_{\mu\nu}$.

We now recall the various energy conditions on the energy-momentum tensor: A condition is said to hold if and only if  the equations hold for any arbitrary causal (i.e. timelike or null) vectors $v^\mu$, $w^\mu$. 
\begin{itemize}
\item \textit{The dominant energy condition (DEC)}: $T_{\mu\nu} v^\mu w^\nu \ge 0$ or equivalently  $T_{\mu\nu} v^\mu $ is a causal vector and $T_{\mu\nu} v^\mu v^\nu \ge 0$.

\item \textit{The strong energy condition (SEC)}: $(T_{\mu\nu}-\half g_{\mu\nu} T ) v^\mu v^\nu \ge 0$.

\item \textit{The weak energy condition (WEC)}: $T_{\mu\nu} v^\mu v^\nu \ge 0$.

\end{itemize} 
The DEC implies the WEC. If the energy-momentum tensor is tracefree, then the SEC and the WEC are equivalent. 
For convenience, we set the initial time to be $t_*$ throughout our analysis and denote the initial value of a quantity with a suffix $*$, e.g. $L_*$. In particular, when dealing with a congruence we will denote a value that is constant along each individual curve by a subscript $*$.

In the following, we will express the time decay rates of different variables by bounding their long term behaviour by various functions. We recall the following notation
\begin{equation}
f(t) = O (g(t)) \quad \iff \quad \exists\, C>0, \, t_0\,\, \textmd{such that} \,\vert f(t) \vert \le C \vert g(t) \vert \quad \forall t>t_0.
\end{equation}
We typically do not specify the constants $C$ or $t_0$ explicitly. We may also write $f(t) - g(t) = O(h(t))$ as
\begin{equation}
f(t) = g(t) + O(h(t)).
\end{equation}
We note that if a function $f(t)$ satisfies
\begin{equation}
\frac{df}{dt}=O(e^{\lambda kt}), \qquad  {\text {then}} \qquad\left \{
\begin{array}{ll}
f=O(1) & \textmd{if } k<0 \\
f=O(t) & \textmd{if } k=0 \\
f=O(e^{\lambda kt}) & \textmd{if } k>0
\end{array}\right.
\end{equation}
%%%%%%%%%%%%%%%%%%%%%%%%%%%%%%%%%%%%%%%%%%%%%%%%%%%%%%%%%%
\subsection{Spatial homogeneity and Wald's theorem}
%%%%%%%%%%%%%%%%%%%%%%%%%%%%%%%%%%%%%%%%%%%%%%%%%%%%%%%%%%
We recall that an initial data set for the Einstein equations on a spacelike hypersurface ${\cal{S}}$ is called locally homogeneous if the naturally associated data set on the universal covering
 manifold $\tilde {\cal{S}}$ is homogeneous i.e. invariant under a transitive group action. The universal cover of the given space-time may not be spatially homogeneous but it can be extended to be so, in which case it has a preferred foliation by orbits with each leaf having constant mean curvature. The foliation is then topologically of the form ${\cal{S}}\times I$, where $I\subset\SYM \R$. 
Spatially homogeneous space-times then admit a $3-$dimensional group of isometries which acts on the
spacelike hypersurfaces ${\cal{S}}$. If the group acts simply (resp. multiply) transitively, the space-time is called Bianchi (resp. Kantowski-Sachs) space-time. In this paper, we assume that the group acts simply transitively.

Bianchi space-times can be classified according to the
Lie algebra of Killing vector fields and its associated isometry group $G_3$. 
The problem then reduces
to classifying the Lie algebra structure constants 
which satisfy the algebraic restrictions given by the Jacobi identities. Let $\bm \xi_\a,
\a=1,2,3$  
represent the basis of Killing vector fields associated to the group $G_3$.   
The Killing vector fields satisfy $[ {\bm \xi_\a}, {\bm\xi_\b}]= C^{\c}_{~\a\b} {\bm \xi_\c}$, where $C\tensor{}{\c}{\a\b} $ are the structure constants of the Lie algebra.
If $\bm n$ denotes the future-pointing unit vector normal to the surfaces of homogeneity, then ${\bf n}$ is group-invariant and $[\bm n, \bm\xi_\b]=0.$

In this article, we consider the orthonormal frame formalism described in Appendix \ref{sec1+3}. In particular, we consider a basis $\{\bm e_0,\bm e_\a\}$ where the timelike vector $\bm e_0$ is chosen to coincide with ${\bf n}$. The triad of spatial vectors $\bm e_\a$ can be chosen to be tangent to the group orbits and to commute with the Killing vector fields, 
$[\bm e_\a, \bm\xi_\b]=0.$
We will refer to an observer in a Bianchi space-time, who follows the integral curves of $\bm e_0$ as a \textit{canonical observer}.
Since the congruence $\bm e_0$ is hypersurface orthogonal, the vector fields $\bm e_\a$ generate a Lie algebra (with structure constants say $\gamma^\a_{~\b\g}$) which is equivalent to the Lie algebra of the Killing vector fields (with structure constants $C^\a_{~\b\g}$) \cite{MacCallum73}. One can thus classify the Bianchi space-times using the structure constants of either algebra, which we decompose as   
\begin{eqnarray}
\g^\a{}_{\b\g}=\e_{\b \g \delta} n^{\delta \a}+\delta^\a_{~\g} a_\b-\delta^\a_{~\b} a_\g,
\end{eqnarray}
where $n^{\a \b}=n^{(\a \b)}$ and $\varepsilon_{\b \g \delta}$ and $\delta^\a_{~\b}$ are the Levi--Civita and
the Kronecker delta symbols, respectively.

Now, one finds, using the 
Jacobi identities, that the rotation coefficients in the commutators are spatially constant i.e.
${\bf e}_\a(\gamma^a_{~bc})=0.$
Therefore, the Jacobi identities (\ref{ola22}) give
\begin{equation}
\label{n&a1}
n^{\a\b}a_\b=0.
\end{equation}
One still has the freedom, in the frame choice, of
a time-dependent rotation of the triad ${\bm e_\a}$ in a surface of constant coordinate time which
is usually used to set:
\begin{equation}
\label{n&a}
n_{\a\b}={\rm diag}(n_1,n_2,n_3),~~~a_\a=(a,0,0)
\end{equation}  
and one can divide the 
Lie algebras into class $A$ ($a=0$) and $B$ ($a\ne 0$), which correspond  to the unimodular and non-unimodular Lie algebras, respectively. However, we will not make this choice here.

By choosing our orthonormal frame ${\bf e}_a$ such that ${\bf n} = {\bf e}_0$, 
it follows that ${\bf n}$ is tangent to a congruence of curves with vanishing acceleration $A^a $ and vanishing vorticity $\omega_{ab}$. The structure constants $\Acf_a$ and $\Ncf_{ab}$ are equivalent to those derived from the basis of Killing vectors.
Thus, in our setting, a \emph{canonical observer} in the Bianchi space-time thus experiences no acceleration $A^\mu$, no vorticity $\omega^\alpha $ and uses a Fermi-propagated frame, i.e. the quantity $\Omega_\alpha$ defined in \eqref{Omega_definition} vanishes. We will refer to this as our {\em gauge choice}.

Now, it is known that ever expanding spatially homogeneous space-times are geodesically complete, in particular all inextendible causal geodesics are complete in the future direction \cite{Lee, Ren95}. Another important known result states that  the de Sitter solution is an attractor for ever-expanding Bianchi space-times \cite{Wal83}:
\\\\
{\bf Theorem (Wald):} 
\\
{\em  Consider a Bianchi space-time $(M,g)$, with a positive cosmological constant $\Lambda$, initially expanding and satisfying the dominant and strong energy conditions. If $(M,g)$ is of Bianchi types I-VIII, 
then $(M,g)$ locally asymptotically approaches de Sitter.}
%
%%%%%%%%%%%%%%%%%%%%%%%%%%%%%%%%%%%%%%%%%%%%%%%%%%%%%%%%%%%%%%%%
\subsection{Preliminary estimates}
\label{sec2.2}
%%%%%%%%%%%%%%%%%%%%%%%%%%%%%%%%%%%%%%%%%%%%%%%%%%%%%%%%%%%%%%%%
Following Wald \cite{Wal83}, throughout this article we assume that:
\begin{assumption}
	\label{Assumptions Wald+}
	~
	\begin{enumerate}
		\item[1.1] The cosmological constant $\Lambda$ is strictly positive and we set $\lambda = \sqrt{\third \Lambda}$.
		\item[1.2] The matter model satisfies the DEC and SEC.
		\item[1.3] The space-time is initially expanding.
		\item[1.4] The space-time is spatially homogeneous of Bianchi type and the scalar curvature of the surfaces of homogeneity satisfies $^{(3)}\hspace{-0.05cm}R \le 0$.
	\end{enumerate}
\end{assumption}
\begin{remark} We note that $^{(3)}\hspace{-0.05cm}R$ is given explicitly in \eqref{efe7SH} in terms of the structure coefficients.
	Condition  $^{(3)}\hspace{-0.05cm}R \le 0$ is automatically satisfied for all Bianchi types except type IX.
\end{remark} 
In order to study the conformal regularity and future stability, we must prove strong enough decay estimates for the background variables. Based on \cite{Wal83} and \cite{Ren95}, some useful asymptotic decay estimates were already given by Lee \cite{Lee} for the metric and shear tensors, under Assumptions \ref{Assumptions Wald+}. However, those estimates are not sharp enough for our purposes and, thus, we need to improve them first.

For our analysis we shall use the Einstein equations \eqref{Einstein equation} expressed in terms of kinematic quantities \eqref{ola73}, \eqref{ola10} and group structure quantities \eqref{n&a}. See Appendix \ref{appendix-equations} for details. 

In the case of spatially homogeneous background space-times, the Einstein equations are ODEs having time $t$ as independent variable.    
%%%%%%%%%%%%%%%%%%%%%%%%%%%%%%%%%%%%%%%%%%%%%%%%%%%%%%%%%%%%%%
\subsubsection*{Expansion}
%%%%%%%%%%%%%%%%%%%%%%%%%%%%%%%%%%%%%%%%%%%%%%%%%%%%%%%%%%%%%%
Throughout this article we use the expansion variable 
$$\Hubble:=\frac{1}{3}\theta=\frac{1}{3}\nabla_\mu(e_0^\mu).$$
From \eqref{efe0SH} and \eqref{efe6SH}, using Assumptions 1.1, 1.2 and 1.4, we derive
\begin{equation}
\label{Hubble inequalities}
{\bf e}_0(\Hubble) \le \lambda^2 - \Hubble^2 \le 0.
\end{equation}
By Assumption 1.3, we have that 
\begin{equation}
\label{Hubble simple bound}
\Hubble_*:=\Hubble(t=t_*) >0 \, \Rightarrow \, \Hubble_* \ge \Hubble(t) > \lambda >0, \quad\forall t\ge 0.
\end{equation}
Thus, an initially expanding universe is eternally expanding. 
Using  \eqref{Hubble inequalities} in two different ways, one finds
\begin{equation}
\label{Hubble lower and upper bound}
\lambda \le \Hubble \le \lambda \coth (\lambda t) \quad \implies \quad H =\lambda+O(e^{-2\lambda t}),
\end{equation}
which, in turn, implies
\begin{equation}
\label{bound for lambda squared minus Hubble squared}
\Hubble^2 - \lambda^2 = (\Hubble-\lambda)( \Hubble + \lambda) = O(e^{-2\lambda t}).
\end{equation}
%%%%%%%%%%%%%%%%%%%%%%%%%%%%%%%%%%%%%%%%%%%%%%%%%%%%%%%%%%%
\subsubsection*{Length scale}
We define the length scale $L$ by 
$$H=\frac{1}{L}\frac{dL}{dt}$$
which implies that $L$ is a strictly increasing function with
\begin{equation}
\label{definition of average length scale}
L=L_* e^{\int_0^t H(s) ds}.
\end{equation}
The lower bound of \eqref{Hubble lower and upper bound} gives $L_* e^{\lambda t} \le L  $. Using the upper bound of \eqref{Hubble simple bound} on an initial time interval, and the upper bound of \eqref{Hubble lower and upper bound} for a late time interval, implies that for some constant $C$
\begin{equation}
\label{bounds for L}
L_* e^{\lambda t} \le L \le C e^{\lambda t}, \quad \forall t>0.
\end{equation}
Hence, 
$L e^{-\lambda t} $ is bounded away from zero, which is a stronger statement than $L=O(e^{\lambda t})$.
%%%%%%%%%%%%%%%%%%%%%%%%%%%%%%%%%%%%%%%%%%%%%%%%%%%%%%%%%%%
\subsubsection*{Shear: intermediate decay rate}
Combining \eqref{efe6SH} with \eqref{Hubble lower and upper bound} implies
$$
0 \le \sigma^2 \le 3(H^2 - \lambda^2) \le \frac{3\lambda^2}{\sinh^2(\lambda t )}
$$
and
\begin{equation}
\label{bound for shear 1}
 \vert \sigma_{\alpha\beta} \vert \le \frac{\sqrt{6}\lambda}{\sinh(\lambda t )}
 \quad \implies \quad \sigma_{\alpha\beta}  = O(e^{-\lambda t}). % = O(L^{-1})
\end{equation}
%%%%%%%%%%%%%%%%%%%%%%%%%%%%%%%%%%%%%%%%%%%%%%%%%%%%%%%%%%%%%%
\subsubsection*{Matter contents}
By similar methods, we find an intermediate bound for the density
\begin{equation}
\label{bounds for density 1}
\rho = T_{\a\b} n^\a n^\b = T_{00} \le 3(H^2-\lambda^2)  \quad \implies \quad \rho=O(e^{-2\lambda t}).
\end{equation}
The DEC implies that $T_{00}$ dominates the other components of $T_{\a\b}$ (see e.g. \cite{HawkingEllis}, page 91) and, hence, we get the same intermediate bound for $q_\a, \pi_{\a\b}$ and $p$. Hence, for a spatially homogeneous space-time with a matter model satisfying Assumptions 1.2, we find that
\begin{equation}
\label{minimum matter bound}
\rho, \,\, p, \,\, q_\a, \,\,  \pi_{\a\b} =O(e^{-2\lambda t})\quad \implies \quad T_{\a\b}=O(e^{-2\lambda t}).
\end{equation}
%%%%%%%%%%%%%%%%%%%%%%%%%%%%%%%%%%%%%%%%%%%%%%%%%%%%%%%%%%%%%%%%%
\subsubsection*{Spatial connection coefficients and 3-curvature}

We define $\ACF_{\alpha} := L \Acf_{\alpha}  $ and $\NCF_{ab} := L \Ncf_{ab}  $ and rewrite 
\eqref{jac4SH} and \eqref{jac5SH}
as
\begin{eqnarray}
\label{Acf2 equation}
e_0(\ACF_{\alpha})  &=&  - \sigma\tensor{\alpha}{\beta}{} \ACF_{\beta}   \\
\label{Ncf2 equation}
e_0(\NCF_{\alpha\beta}) &=& 2 \sigma\tensor{(\alpha}{\gamma}{} \NCF_{\beta)\gamma},
\end{eqnarray}
where $\ACF^\alpha$ is a vector in a Euclidean 3-space with norm $\Vert \cdot \Vert$. Its length $z=\Vert \ACF \Vert$ satisfies
\begin{equation}
\label{z with shear norm}
z\dot{z} = - \sigma\tensor{}{\alpha\beta}{} \ACF_\alpha \ACF_{\beta} 
\le C \Vert  \sigma \Vert z^2,
\end{equation}
for some constant $C>0$, where $\Vert  \sigma \Vert$ represents the euclidean norm of a matrix with entries $\sigma_{\alpha\beta}$.
Since $\Vert  \sigma \Vert$ is bounded by \eqref{bound for shear 1}, then
$e^{\int_0^t  \Vert  \sigma \Vert  ds}=O(1) $. This implies that $z=O(1) $, from which it follows that $\ACF_\alpha=O(1)$. A similar argument can be used to derive $\NCF_{\alpha\beta}=O(1)$.

We, hence, have the following bounds for the spatial connection coefficients
\begin{eqnarray}
\label{bound Acf}
\Acf_\alpha &=& O(e^{-\lambda t}),\\
\label{bound Ncf}
\Ncf_{\alpha\beta} &=& O(e^{-\lambda t}).
\end{eqnarray}
%%%%%%%%%%%%%%%%%%%%%%%%%%%%%%%%%%%%%%%%%%%%%%%%%%%%%%%%%%%%%%%
\subsubsection*{Shear: improved rate}
Using the estimates obtained so far, \eqref{efe2SH} can be written in the form
\begin{equation}
\label{asdfg}
{\bf e}_0(\sigma^{\a\b})=-3\Hubble\sigma^{\a\b} + O(e^{- 2 \lambda t}),
\end{equation}
which implies ${\bf e}_0(L^3\sigma^{\a\b}) = O(e^{ \lambda t})$, and then
\begin{equation}
\sigma^{\a\b} = O(e^{- 2 \lambda t}).\label{estimates-shear}
\end{equation}
Hence, $\Sigma_{\a\b} :=L^2 \sigma_{\a\b} = O(1)$ is bounded.
\begin{remark}
	When the shear can be diagonalised as $diag(\sigma_1, \sigma_2, \sigma_3) $,
	the second fundamental form 
	$\theta\tensor{\a}{\b}{} = \Hubble \delta\tensor{\a}{\b}{} +\sigma\tensor{\a}{\b}{} $ is diagonal and has eigenvalues $ \lambda_\a = \Hubble + \sigma_\a$.
	Bounds \eqref{Hubble lower and upper bound} and \eqref{estimates-shear} then imply that the generalised Kasner exponents $p_\a = \lambda_\a/(3H)$ satisfy
	\begin{equation}
	\label{Kasner exponents}
	p_\a = \frac{1}{3} + O(e^{-2\lambda t}).
	\end{equation}
\end{remark}	
%%%%%%%%%%%%%%%%%%%%%%%%%%%%%%%%%%%%%%%%%%%%%%%%%%%%%%%%%%%%
\section{Unphysical variables and improved estimates}\label{Section unphysical variables}

In the estimates \eqref{asdfg}-\eqref{estimates-shear}, we observed that the dominant term on the right hand side, the term including $\Hubble$, could be eliminated by rescaling the variable by a suitable power of $L$. As a result, we could exploit the decay rates of the remaining terms in the evolution equations and obtain new estimates by integration.

For that reason, we will introduce unphysical rescaled quantities. Our guiding principle will be to eliminate the terms including $\Hubble$ from the evolution equations. 
We note that some of these unphysical quantities play an important role in the conformal geometry of the space-time. We will discuss this aspect in more detail in Section \ref{Section - Conformal regularity} where we use it to prove certain notions of conformal regularity.
%%%%%%%%%%%%%%%%%%%%%%%%%%%%%%%%%%%%%%%%%%%%%%%%%%%%%%%%%%%%
\subsubsection*{Unphysical variables}
Above, we defined the variables
\begin{equation}
\label{ACF_NCF_Sigma}
  \ACF_\a := L \Acf_\a, \qquad 
  \NCF_{\a\b} := L \Ncf_{\a\b}, \qquad 
  \Sigma_{\a\b} :=L^2 \sigma_{\a\b}
\end{equation}
and established that they are bounded, i.e.  
\begin{equation}
\label{ACF_NCF_Sigma bounded} 
\ACF_{\a} ,\NCF_{\a\b} ,\Sigma_{\a\b} = O(1) .
\end{equation}
We define the following additional rescaled variables:
\begin{align}
  ^{(3)}\hspace{-0.05cm}\hat S_{\a\b} &:= L^2 \,\,^{(3)}\hspace{-0.05cm}S_{\a\b} & ^{(3)}\hspace{-0.05cm}\hat R:= L^2 \,\,^{(3)}\hspace{-0.05cm}R \nonumber\\
  \label{unphysical variables defined}
  {\cal E}_{\a\b}&:=L^3E_{\a\b} &{\cal H}_{\a\b}:=L^3H_{\a\b}\\
  Q_\a&:=L^3q_\a &\Pi_{\a\b}:=L^3\pi_{\a\b}, \nonumber
\end{align}
where $E_{\a\b}$ and $H_{\a\b}$ are the electric and magnetic parts of the Weyl tensor, as defined in Appendix \ref{sec1+3}, and $^{(3)}\hspace{-0.05cm}S_{\a\b}$ is the trace free part of the 3-Ricci curvature as given in Appendix \ref{appendix-equations}. 
Our aim is to use the Einstein equations to derive estimates for all these quantities. 
%%%%%%%%%%%%%%%%%%%%%%%%%%%%%%%%%%%%%%%%%%%%%%%%%%%%%%%%%%%%%
\subsubsection*{Conformal frame and conformal time}
For our analysis, we define the rescaled frame $\hat {\bf  e}_a $ and the conformal time $\tau$ by
\begin{equation}
\label{conformal time}
\hat{\bf e}_a := L {\bf e}_a\,, \qquad\qquad\tau := \int_0^t \frac{1}{L(s)} ds.
\end{equation}
The bounds in \eqref{bounds for L} imply that $\tau$ is a bounded, strictly increasing function of $t$ with limit $\tau_\infty$ as $t\to \infty$ where
$$
\frac{1}{C \lambda} \le \tau_\infty \le \frac{1}{L_* \lambda}.
$$
Thus, the infinite time interval $[0,\infty)$ for $t$ is mapped to the finite interval $[0,\tau_\infty)$ for $\tau$. As a result, we are able to 
study the asymptotic behaviour in terms of the local behaviour near $\tau=\tau_\infty$.

The analysis of the conformal boundary in Section \ref{Section - Conformal regularity} will require functions to have a finite limit as $\tau\to\tau_\infty$. Hence, boundedness is insufficient for our purposes and we will focus part of our analysis on showing that a given quantity $F$ has a well-defined limit $F(\tau_\infty)$ as $\tau\to\tau_\infty$. We observe that for $k<0$, $F=O(L^k) \implies F(\tau_\infty)=0$, whereas ${\bf e}_0(F)=O(L^k)$ with $k<0$ only provides boundedness on $F$. However, for our purposes it is sufficient to show that $\hat{\bf e}_0 (F)=O(1)$, as this implies that $F(\tau_\infty)$ is finite and well defined. Hence, we will often work with derivatives along $\hat{\bf e}_0$ in order to deduce the existence of the relevant limits. Due to the rescaling of the variables \eqref{ACF_NCF_Sigma} - \eqref{unphysical variables defined} we have a regular occurrence of powers of $L$ in the evolution equations and, hence, we will characterize decay rates in terms of $L$ using $O(e^{k \lambda t}) \equiv O(L^k)$.
%%%%%%%%%%%%%%%%%%%%%%%%%%%%%%%%%%%%%%%%%%%%%%%%%%%%%%%%%%%%%%%%%%
\subsection{The unphysical constraint and evolution equations}
We now consider the rescaled Einstein field equations. Substituting \eqref{ACF_NCF_Sigma}-\eqref{unphysical variables defined} into \eqref{efe70SH}-\eqref{efe1SH} and \eqref{ola26SH}-\eqref{ola28SH} leads to the following constraint equations
\begin{eqnarray}
\label{efe70SHR}
\hat S_{\a\b} &=& 2\hat n_{\a\gamma}\hat n^{\gamma}_{~\beta}-\hat n^{\gamma}_{~\gamma}\hat n_{\a\b}-\frac{1}{3}\delta_{\a\b} [ 2\hat n^{\a\b}\hat n_{\a\b}-(\hat n^\a_{~\a})^2 ] + 2 \varepsilon^{\gamma\delta}_{~~(\a}\hat a_{|\gamma|} \hat n_{\b)\delta}
\\
\label{efe7SHR}
\hat R &=& -6\hat a_\a \hat a^\a-\hat n^{\a\b}\hat n_{\a\b}+\frac{1}{2}(\hat n^\a_{~\a})^2
\\
\label{efe1SHR}
Q^\a &=& 3\Sigma^{\b\a} \hat a_\b - \varepsilon^{\a\b\gamma} \hat n_{\gamma\delta}\Sigma_{~\b}^\delta
\\
\label{ola26SHR}
{\cal E}_{\a\b}&=&-\half\Pi_{\a\b}
+L(\Hubble \Sigma_{\a\b}+\hat S_{\a\b})
-\frac{1}{L} \left(\Sigma_{\a\g}\Sigma^\g_{~\b} - \frac{2}{3} \delta_{\a\b}\Sigma^2\right) 
\\
%%%%%
\label{ola28SHR}
{\cal H}_{\a\b}&=&
\frac{1}{2} \hat n^\g_{~\g}\Sigma_{\a\b} - 3 \hat n_{~(\a}^{\g}\Sigma_{\b)\g} + \delta_{\a\b} \hat n_{\g\delta}\Sigma^{\g\delta}
- \varepsilon^{\g\delta}{}_{\a}\hat a_{|\g|} \Sigma_{\b)\delta},
\end{eqnarray}
where $\Sigma^2=\frac{1}{2}\Sigma_{\a\b}\Sigma^{\a\b}$.

Substitution of \eqref{ACF_NCF_Sigma bounded} into \eqref{jac4SH}, \eqref{jac5SH}, \eqref{efe51SH}, \eqref{efe2SHrewritten}, \eqref{EpropSH} and \eqref{HpropSH} leads to the following propagation equations with respect to $\hat{\bf e}_0$:
\begin{eqnarray}
\label{jac4SHR}
{\bf \hat e}_0 (\ACF^\a)
&=&
-\frac{1}{L} \ACF_\b \Sigma^{\a\b}
\\
\label{jac5SHR}
{\bf \hat e}_0 (\NCF^{\a\b}) 
&=&
-\frac{1}{L} \NCF^{(\a}_{~~~\g} \Sigma^{\b)\g}
\\
\label{efe2SHR}
{\bf \hat e}_0(\Sigma^{\a\b})
\label{efe2SHR2}
&=&
- {\cal E}_{\a\b} +\frac{1}{2} \Pi^{\a\b}
-\frac{1}{L}\Sigma_{\a\g}\Sigma^\g_\b +\frac{2}{3L} \delta_{\a\b}\Sigma^2 
\\
%%%%%
\label{efe51SHR2}
{\bf \hat e}_0 (LQ_\a)&=&
3L\hat a_\b\Pi^\b_{~\a} + L\Pi^\gamma_\b\varepsilon_{\gamma\a\delta}\hat n^{\b\delta} - \Sigma_{\a\b}Q^\b 
\\
%%%%%
\label{EpropSHR}
{\bf \hat e}_0({\cal E}^{\alpha\beta}+\frac{1}{2}\Pi^{\alpha\beta})
&=&
\Hubble L\Pi^{\a\b} + \frac{3}{L}\Sigma^{(\alpha}_{~~\gamma}({\cal E}^{\beta)\gamma}-\frac{1}{6}\Pi^{\beta)\gamma})-\frac{1}{L}\delta^{\alpha\beta} \Sigma_{\gamma\delta}({\cal E}^{\gamma\delta}-\frac{1}{6}\Pi^{\gamma\delta})\nonumber\\
&&-\varepsilon^{\gamma\delta(\alpha}\hat a_\gamma {\cal H}^{\beta)}_{~\delta}+
\frac{1}{2}\hat n^{\gamma}_{~\gamma}{\cal H}^{\alpha\beta}-3\hat n^{(\alpha}_{~~\gamma}{\cal H}^{\beta)\gamma}+\delta^{\alpha\beta}\hat n_{\gamma\delta}{\cal H}^{\gamma\delta}\nonumber\\
& &-\frac{1}{2}L^2( \rho + p)\Sigma^{\a\b} -\frac{1}{2} \hat a^{(\alpha}Q^{\beta)}+\frac{1}{6}\delta^{\alpha\beta} \hat a_\gamma Q^\gamma  +\varepsilon^{\gamma\delta(\alpha}\left(\frac{1}{2} \hat n^{\beta)}_{~~\gamma}Q_\delta \right)
\\
%%%%%
\label{HpropSHR}
{\bf \hat e}_0({\cal H}^{\a\b}) 
&=&
\frac{3}{L}\Sigma^{(\alpha}_{~~\gamma}{\cal H}^{\beta)\gamma}
-\frac{1}{L}\delta^{\a\b}\Sigma_{\gamma\delta}{\cal H}^{\gamma\delta}
+ \frac{1}{L}\varepsilon^{\gamma\delta(\alpha} \Sigma^{\beta)}_{~~\gamma} Q_\delta 
+\varepsilon^{\gamma\delta(\alpha} \hat a_\gamma({\cal E}^{\beta)}_{~~\delta}-\frac{1}{2}\Pi^{\beta)}_{~~\delta})
\nonumber
\\
&&\hspace{-0.45cm}-\frac{1}{2}\hat n^\gamma_{~\gamma}({\cal E}^{\a\b}-\frac{1}{2}\Pi^{\a\b})
-\delta^{\a\b}\hat n_{\gamma\delta}({\cal E}^{\gamma\delta}-\frac{1}{2}\Pi^{\gamma\delta}) 
+3 \hat n^{(\alpha}_{~~\gamma}({\cal E}^{\beta)\gamma}-\frac{1}{2}\Pi^{\beta)\gamma})\label{fourty}
\end{eqnarray}
%%%%%%%%%%%%%%%%%%%%%%%%%%%%%%%%%%%%%%%%%%%%%%%%%%%%%%%%%%%%%%%
\subsection{Decay rates for Assumption 1}

Our aim is to improve the decay estimates of Section 2.2. 
We start our analysis using only Assumption \ref{Assumptions Wald+} before making additional assumptions to derive stronger decay rates.

Using the estimates found so far, we get from \eqref{jac4SHR} and \eqref{jac5SHR}
\begin{eqnarray}
{\bf \hat e}_0 (\hat a^\a)
= O(L^{-1}),
\quad\quad
{\bf \hat e}_0 (\hat n^{\a\b})
= O(L^{-1}) \nonumber .
\end{eqnarray}
Integrating these equations implies that $\ACF_\a$ and $\NCF_{\a\b}$ are finite at $\tau=\tau_\infty$ as are $\hat{S}_{\a\b} $ and $\hat R$ by \eqref{efe70SHR} and \eqref{efe7SHR}. Hence, the space-time metric induces a finite intrinsic connection and 3-curvature on the hypersurface $\tau = \tau_\infty $.

The unphysical constraints \eqref{efe1SHR}-\eqref{ola28SHR} combined with $ \Pi_{\a\b} = O(L)$, imply that 
\begin{equation}
\label{basic decay rates Q E H}
Q_\a = O(1), \quad {\cal E}_{\a\b} = O(L), \quad {\cal H}_{\a\b} =O(1).
\end{equation}
Combining these decay rates with \eqref{Hubble lower and upper bound}, \eqref{minimum matter bound} and \eqref{estimates-shear}, and using the definitions of the unphysical variables \eqref{ACF_NCF_Sigma}-\eqref{unphysical variables defined}, we readily obtain the following decay rates: 
\begin{proposition}
	\label{prop-decay1}
	Suppose Assumption \ref{Assumptions Wald+} holds, then as $t\to\infty$	
	\begin{enumerate}
		\item $\hat a_\b,\, \hat n_{\a\b},\, \hat S_{\a\b},\, \hat R$  all have finite limits.
		\item $\Sigma_{\a\b},\, Q_\a,\, {\cal H}_{\a\b}$ are bounded.
		\item ${\cal E}_{\a\b}=O(L)$.
	\end{enumerate}
	Hence, the following decay rates hold for the physical variables:
	
	\medskip
	\begin{tabular}{lclclcl}
		$\Acf_\a = O(e^{-\lambda t})$ &&
		$\Ncf_{\a\b} = O(e^{-\lambda t})$ &&
		$^{(3)}\hspace{-0.05cm}S_{\a\b} = O(e^{-2\lambda t})$ &&
		$^{(3)}\hspace{-0.05cm}R = O(e^{-2\lambda t})$ \\
		&& && &&\\
		$\Hubble = \lambda + O(e^{-2\lambda t})$ &&
		$\sigma_{\a\b} = O(e^{-2\lambda t})$ &&
		$E_{\a\b} = O(e^{-2\lambda t})$ &&
		$H_{\a\b} = O(e^{-3\lambda t})$ \\
		&& && &&\\
		$\rho = O(e^{-2\lambda t})$ &&
		$p = O(e^{-2\lambda t})$ &&
		$q_\a = O(e^{-3\lambda t})$ &&
		$\pi_{\a\b} = O(e^{-2\lambda t})$.
	\end{tabular}
\end{proposition}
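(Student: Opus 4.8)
The plan is to assemble Proposition \ref{prop-decay1} directly from the estimates already derived in Sections \ref{setup} and \ref{Section unphysical variables}, organizing them into the three claims about the unphysical variables and then translating back via the definitions \eqref{ACF_NCF_Sigma}--\eqref{unphysical variables defined} together with the bounds \eqref{bounds for L} on $L$. The structure is: (i) establish finite limits for $\hat a_\b, \hat n_{\a\b}, \hat S_{\a\b}, \hat R$; (ii) establish boundedness of $\Sigma_{\a\b}, Q_\a, {\cal H}_{\a\b}$; (iii) establish ${\cal E}_{\a\b}=O(L)$; (iv) convert each unphysical statement to the corresponding physical decay rate.

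For step (i), I would start from the rescaled Jacobi equations \eqref{jac4SHR} and \eqref{jac5SHR}. Using $\Sigma^{\a\b}=O(1)$ from \eqref{estimates-shear} (equivalently \eqref{ACF_NCF_Sigma bounded}) together with the lower bound $L\ge L_* e^{\lambda t}$ from \eqref{bounds for L}, the right-hand sides are $O(L^{-1})=O(e^{-\lambda t})$, so $\hat{\bf e}_0(\hat a^\a)=O(L^{-1})$ and $\hat{\bf e}_0(\hat n^{\a\b})=O(L^{-1})$. Since $\tau$ ranges over the finite interval $[0,\tau_\infty)$ and $\hat{\bf e}_0 = L\,{\bf e}_0 = \tfrac{d}{d\tau}$ along the congruence, integrating in $\tau$ shows $\hat a_\b$ and $\hat n_{\a\b}$ extend continuously to $\tau=\tau_\infty$ with finite limits there (the $\tau$-integral of an $O(1)$ function over a finite interval is bounded and convergent). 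Finiteness of $\hat S_{\a\b}$ and $\hat R$ at $\tau_\infty$ then follows because \eqref{efe70SHR} and \eqref{efe7SHR} express them as polynomials in $\hat a_\b, \hat n_{\a\b}$.

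For steps (ii) and (iii), I would first note that $\pi_{\a\b}=O(e^{-2\lambda t})$ from \eqref{minimum matter bound}, hence $\Pi_{\a\b}=L^3\pi_{\a\b}=O(L)$ (using $L=O(e^{\lambda t})$). Feeding this together with $\Sigma_{\a\b}=O(1)$, $\hat a_\b=O(1)$, $\hat n_{\a\b}=O(1)$, $\hat S_{\a\b}=O(1)$ into the unphysical constraints \eqref{efe1SHR}, \eqref{ola26SHR} and \eqref{ola28SHR} gives exactly $Q_\a=O(1)$, ${\cal E}_{\a\b}=O(L)$ (the $L\hat S_{\a\b}$ and $L\Hubble\Sigma_{\a\b}$ terms being the dominant ones, while $\Pi_{\a\b}=O(L)$ and $L^{-1}\Sigma^2=O(L^{-1})$ are compatible), and ${\cal H}_{\a\b}=O(1)$; this is precisely \eqref{basic decay rates Q E H}. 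Boundedness of $\Sigma_{\a\b}$ is already \eqref{estimates-shear}. Finally, the physical rates are read off: $\Acf_\a = L^{-1}\ACF_\a = O(e^{-\lambda t})$, $\Ncf_{\a\b}=O(e^{-\lambda t})$, $^{(3)}S_{\a\b}=L^{-2}\hat S_{\a\b}=O(e^{-2\lambda t})$, $^{(3)}R=O(e^{-2\lambda t})$, $\sigma_{\a\b}=L^{-2}\Sigma_{\a\b}=O(e^{-2\lambda t})$, $E_{\a\b}=L^{-3}{\cal E}_{\a\b}=O(L^{-2})=O(e^{-2\lambda t})$, $H_{\a\b}=L^{-3}{\cal H}_{\a\b}=O(e^{-3\lambda t})$; the matter rates $\rho,p=O(e^{-2\lambda t})$, $\pi_{\a\b}=O(e^{-2\lambda t})$ are \eqref{minimum matter bound}, and $q_\a=L^{-3}Q_\a=O(e^{-3\lambda t})$; the Hubble rate is \eqref{Hubble lower and upper bound}.

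In truth there is no serious obstacle here — the proposition is essentially a bookkeeping collation of the estimates proved in the two preceding (sub)sections — so the only thing requiring care is the consistent use of the two-sided bound \eqref{bounds for L}: for the ``finite limit'' statements one needs $L\gtrsim e^{\lambda t}$ (to make the $\tau$-integrals converge and the $O(L^{-1})$ terms integrable on the finite $\tau$-interval), while for translating unphysical $O(1)$ bounds into physical $O(e^{-k\lambda t})$ rates one needs $L\lesssim e^{\lambda t}$. The mild subtlety worth flagging is that $\hat{\bf e}_0(F)=O(1)$ yields a finite limit $F(\tau_\infty)$ whereas ${\bf e}_0(F)=O(L^{-1})$ alone would only give boundedness; this is exactly the reason the unphysical variables and the $\hat{\bf e}_0$ derivative were introduced, and it is what upgrades ``bounded'' to ``has a finite limit'' for $\hat a_\b, \hat n_{\a\b}$ (and hence $\hat S_{\a\b}, \hat R$) while leaving $\Sigma_{\a\b}, Q_\a, {\cal H}_{\a\b}$ merely bounded at this stage.
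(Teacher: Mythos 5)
Your proposal is correct and follows essentially the same route as the paper: the paper likewise derives $\hat{\bf e}_0(\hat a^\a)=O(L^{-1})$ and $\hat{\bf e}_0(\hat n^{\a\b})=O(L^{-1})$ from \eqref{jac4SHR}--\eqref{jac5SHR}, integrates over the finite $\tau$-interval to obtain finite limits (hence finiteness of $\hat S_{\a\b}$, $\hat R$ via \eqref{efe70SHR}--\eqref{efe7SHR}), then feeds $\Pi_{\a\b}=O(L)$ into the constraints \eqref{efe1SHR}--\eqref{ola28SHR} to get \eqref{basic decay rates Q E H}, and converts back using \eqref{ACF_NCF_Sigma}--\eqref{unphysical variables defined} and \eqref{bounds for L}. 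Your closing remark on the distinction between $\hat{\bf e}_0(F)=O(1)$ giving a finite limit versus mere boundedness matches the paper's own discussion of why the conformal frame is introduced.
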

\begin{remark}
This result improves some of the decay rates obtained in \cite{Wal83, Lee} and \cite{Ren95}. For example, the rate for the shear in \cite{Lee} is $O(e^{-\lambda t})$. Our improvement is crucial to sharpen the decay rates for the Weyl tensor and, in turn, to apply the conformal methods in sections 5 and 6.
\end{remark}
The estimates established above imply that, in the orthonormal frame of an observer, the components of the space-time curvature and the curvature of the surfaces of homogeneity decay like
\begin{equation}
\label{CNHT curvature decays}
C_{abcd} = O(e^{-2\lambda t}), \quad 
R_{ab} = \Lambda g_{ab} + O(e^{-2\lambda t}), \quad
^{(3)}\hspace{-0.05cm}S_{ab} = O(e^{-2\lambda t}), \quad
^{(3)}\hspace{-0.05cm}R = O(e^{-2\lambda t}).
\end{equation}
At late times, the observer will be less and less able to distinguish the curvatures of the Bianchi space-time from those of the de Sitter space-time. Hence, the above provides a precise form in which the Bianchi space-times considered approach, locally in space, de Sitter space-time at late times \cite{Wal83}.
%%%%%
The fact that the space-time approaches de Sitter asymptotically and, in particular, that the structure constants $\Acf_\a, \Ncf_{\a\b}$ decay to zero could suggest that any information about the space-time Bianchi type is ultimately lost as the Lie algebra of the surfaces of homogeneity approach that of Bianchi type I. However, as we  will show in Section \ref{sec:conformal hair}, the rescaled structure constants $\ACF_\a, \NCF_{\a\b}$ preserve the information about the Bianchi type up to and beyond $\tau_\infty$.

Given the estimates of Proposition \ref{prop-decay1}, the remaining evolution equations \eqref{efe2SHR2}-\eqref{fourty} take the form 
\begin{eqnarray}
\label{efe2SHREst}
{\bf \hat e}_0(\Sigma^{\a\b})&=&
- {\cal E}_{\a\b} +\frac{1}{2} \Pi^{\a\b} + O(L^{-1})
\\
%%%%%
\label{efe51SHR2Est2}
{\bf \hat e}_0 (LQ_\a)&=&
3L\hat a_\b\Pi^\b_{~\a} + L\Pi^\gamma_{~\b}\varepsilon_{\gamma\a\delta}\hat n^{\b\delta} + O(1)
\\
%%%%%
\label{EpropSHR2Est}
{\bf \hat e}_0({\cal E}^{\alpha\beta}+\frac{1}{2}\Pi^{\alpha\beta})
&=&  
\Hubble L\Pi^{\a\b}  + O(1)
\\
%%%%%
\label{HpropSHR2Est}
{\bf \hat e}_0({\cal H}^{\a\b}) 
&=&
3 \hat n^{(\alpha}_{~~\gamma}({\cal E}^{\beta)\gamma}-\frac{1}{2}\Pi^{\beta)\gamma})-\frac{1}{2}\hat n^\gamma_{~\gamma}({\cal E}^{\a\b}-\frac{1}{2}\Pi^{\a\b})-\delta^{\a\b}\hat n_{\gamma\delta}({\cal E}^{\gamma\delta}-\frac{1}{2}\Pi^{\gamma\delta})\nonumber\\
&&+\varepsilon^{\gamma\delta(\alpha} \hat a_\gamma({\cal E}^{\beta)}_{~~\delta}-\frac{1}{2}\Pi^{\beta)}_{~~\delta}) +O(L^{-1}).
\end{eqnarray}
We note that, in general, we have no isolated evolution equation for the anisotropic stress.\footnote{Equation \eqref{EpropSHR} gives an evolution equation for ${\cal{E}}_{\a\b}+\half \Pi_{\a\b}$. However, in our view, the evolution of the Weyl curvature is determined by the matter content here.}
However, we can see that the decay of $\Pi_{\a\b}$ plays an important role in improving our decay rates of other variables, in particular for ${\cal{E}}_{\a\b}$, or could be an obstruction. 
In turn, $\Pi_{\a\b}$ depends on the matter fields to be considered and we will study particular cases in the next section. But, before that, and still keeping enough generality on the matter content, we will make additional assumptions regarding the decay of $\Pi_{\a\b}$ in order to derive stronger decay rates.
%%%%%%%%%%%%%%%%%%%%%%%%%%%%%%%%%%%%%%%%%%%%%%%%%%%%%%%
\subsection{Decay rates:  Case with vanishing anisotropic stress}
For certain matter fields, such as aligned perfect fluids and scalar field space-times, the anisotropic stresses vanish and the system of evolution equations simplifies significantly. We thus start our analysis with the case $\Pi_{\a\b}=0$ in order to gain better insight into the analysis required to derive improved decay rates in more general cases.

When $\Pi_{\a\b}=0$, the evolution equations \eqref{efe2SHR}-\eqref{EpropSHR} take the form:
\begin{eqnarray}
\label{efe2SHRnopi}
{\bf \hat e}_0(\Sigma_{\a\b})&=&
- {\cal E}_{\a\b}
-\frac{1}{L}\Sigma_{\a\g}\Sigma^\g_\b +\frac{2}{3L} \delta_{\a\b}\Sigma^2  
\\
%%%%%
\label{efe51SHRnopi}
{\bf \hat e}_0 (LQ_\a)&=&
- \Sigma_{\a\b}Q^\b 
\\
%%%%%
\label{EpropSHRnopi}
{\bf \hat e}_0({\cal E}^{\alpha\beta})
&=&
 \frac{3}{L}\Sigma^{(\alpha}_{~~\gamma}{\cal E}^{\beta)\gamma}-\frac{\delta^{\alpha\beta}}{L} \Sigma_{\gamma\delta}{\cal E}^{\gamma\delta}-\varepsilon^{\gamma\delta(\alpha}\hat a_\gamma {\cal H}^{\beta)}_{~\delta}+
\frac{1}{2}\hat n^{\gamma}_{~\gamma}{\cal H}^{\alpha\beta}-3\hat n^{(\alpha}_{~~\gamma}{\cal H}^{\beta)\gamma}\\
&&+\delta^{\alpha\beta}\hat n_{\gamma\delta}{\cal H}^{\gamma\delta}
-\frac{L^2}{2}( \rho + p)\Sigma^{\a\b} -\frac{1}{2} \hat a^{(\alpha}Q^{\beta)}+\frac{1}{6}\delta^{\alpha\beta} \hat a_\gamma Q^\gamma  +\varepsilon^{\gamma\delta(\alpha}\left(\frac{1}{2} \hat n^{\beta)}_{~~\gamma}Q_\delta \right)\nonumber
\end{eqnarray}
Using the results of Proposition \ref{prop-decay1}, we see that ${\bf \hat e}_0 (LQ_\a)$ is bounded. Thus
$LQ_\a$ is finite at $\tau_\infty$ and hence $Q_\a $ must vanish at $\tau_\infty$. 
Similarly,  ${\bf \hat e}_0({\cal E}^{\alpha\beta})$ is bounded and hence ${\cal E}^{\alpha\beta}$ is finite at $\tau_\infty$. These estimates, in turn, imply that ${\bf \hat e}_0(\Sigma_{\a\b})$ is finite at $\tau_\infty$ and so is $\Sigma_{\a\b}$. It follows directly from \eqref{ola28SHR} and \eqref{HpropSHR2Est}  that ${\cal H}^{\alpha\beta}$ and ${\bf \hat e}_0({\cal H}^{\alpha\beta})$ are finite at $\tau_\infty$.
Note that for ${\bf \hat e}_0 ({\cal E}^{\alpha\beta})$ to be finite we require $L^2(\rho+p)$ to have a finite limit. As we shall see, this will be the case of e.g. perfect fluids with a linear equation of state. 
%%%%%%%%%%%%%%%%%%%%%%%%%%%%%%%%%%%%%%%%%%%%%%%%
\subsection{Decay rates: Case with slowly decaying anisotropic stress} 
\label{sec3.4}

For some matter models that we will consider ahead, such as the Einstein-Maxwell fields, the anisotropic stress does not vanish.
Since we assume that the density $\rho$ dominates over the anisotropic stress, then $\pi_{\a\b}$ always decays at least as fast as $\rho$. For some specific matter models, we may be able to use the specific form of the energy-momentum tensor to improve this decay rate even further. 
Here, we consider the following general assumption on the decay rate for $\Pi_{\a\b}$:
\begin{assumption}
	\label{Assumption for Pi}
	The rescaled anisotropic stress $\Pi_{\a\b}$ satisfies $\Pi_{\a\b}=O(L^k)$ for some value $k<1$.
\end{assumption}
This assumption is equivalent to $\pi_{\a\b}=O(e^{m \lambda t})$ with $m<-2$, as $m=k-3$. Deriving or justifying such a decay rate will depend on the individual matter model. Throughout the rest of this article we will see that the strength of the results depend specifically on the value of $k$ and, hence, we often proceed on a case by case basis.
\begin{proposition}
\label{prop-decay2}
	Suppose Assumptions 1 and 2 hold. Then, the decay rates of Proposition \ref{prop-decay1} hold and $\Sigma_{\a\b}, {\cal{H}}_{\a\b}, Q_\a $ have finite limits at $\tau_\infty$. Furthermore, depending on the value of $k$ the following decay rates hold:
	\begin{enumerate}
		\item If $0<k<1$, then
			${\cal{E}}_{\a\b} = O(L^k)$.
		\item If $k=0$, then ${\cal{E}}_{\a\b} = O(t)$. 
		\item If $-1<k<0$, then 
		\begin{enumerate}
			\item ${\cal{E}}_{\a\b} = O(1)$.
			\item $Q_\a = O(L^{k})$ and hence $Q_\a=0$ at $\tau_\infty$.
		\end{enumerate} 
		\item If $k\le-1$ or $\pi_{\a\b}=0 $, then
		\begin{enumerate}
			\item ${\cal{E}}_{\a\b}$ is finite at $\tau_\infty$.
			\item $LQ_\a $ is finite at $\tau_\infty$.
		\end{enumerate}
	\end{enumerate}      	
\end{proposition}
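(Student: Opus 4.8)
The plan is to read the four regimes essentially straight off the two evolution equations that have already been reduced via Proposition~\ref{prop-decay1}, together with the unphysical constraint \eqref{ola26SHR}. The only tools needed are: the identity $\hat{\bf e}_0 = L\,{\bf e}_0 = L\,\tfrac{d}{dt}$; the integration rule for $\tfrac{df}{dt} = O(e^{m\lambda t})$ from Section~\ref{setup}; and the principle noted in Section~\ref{Section unphysical variables} that $\hat{\bf e}_0(F) = O(1)$ forces a finite limit $F(\tau_\infty)$, while ${\bf e}_0(F) = O(L^{m})$ with $m<0$ yields only boundedness. That the rates of Proposition~\ref{prop-decay1} still hold is immediate, since Assumption~\ref{Assumption for Pi} is only an additional hypothesis.

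First I would treat the electric Weyl tensor. Inserting $\Hubble = O(1)$ and $\Pi_{\a\b} = O(L^k)$ with $k<1$ into \eqref{EpropSHR2Est} gives $\hat{\bf e}_0\big({\cal E}_{\a\b} + \tfrac12\Pi_{\a\b}\big) = \Hubble L\Pi_{\a\b} + O(1) = O(L^{1+k}) + O(1)$, hence $\tfrac{d}{dt}\big({\cal E}_{\a\b} + \tfrac12\Pi_{\a\b}\big) = \Hubble\Pi_{\a\b} + O(L^{-1}) = O(L^{k}) + O(L^{-1})$. Integrating: if $0<k<1$ the derivative is $O(L^k)$, so ${\cal E}_{\a\b} + \tfrac12\Pi_{\a\b} = O(L^k)$; if $k=0$ the derivative is $O(1)$, so the combination is $O(t)$; if $-1<k<0$ the derivative is $O(L^k)$ with $k<0$, so the combination is $O(1)$; and if $k\le -1$ (or $\Pi_{\a\b}=0$) the right-hand side of \eqref{EpropSHR2Est} is already $O(1)$, so $\hat{\bf e}_0\big({\cal E}_{\a\b}+\tfrac12\Pi_{\a\b}\big)=O(1)$ and the combination admits a finite limit at $\tau_\infty$. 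Subtracting $\tfrac12\Pi_{\a\b}=O(L^k)$ then yields statements 1, 2, 3(a) and 4(a) for ${\cal E}_{\a\b}$.

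Next I would pin down the remaining quantities. Rearranging \eqref{ola26SHR} gives $\Hubble\Sigma_{\a\b} + \hat S_{\a\b} = \tfrac1L\big({\cal E}_{\a\b} + \tfrac12\Pi_{\a\b}\big) + O(L^{-2})$; by the previous step the right-hand side tends to zero in every regime, so, since $\hat S_{\a\b}$ and $\Hubble\to\lambda$ have finite limits by Proposition~\ref{prop-decay1}, $\Sigma_{\a\b}$ converges to $-\hat S_{\a\b}(\tau_\infty)/\lambda$; because ${\cal H}_{\a\b}$ and $Q_\a$ are algebraic in $\hat a_\a,\hat n_{\a\b},\Sigma_{\a\b}$ through \eqref{ola28SHR} and \eqref{efe1SHR}, they inherit finite limits. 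This is the blanket claim. For the sharper flux estimates I would use \eqref{efe51SHR2Est2}: with $\hat a_\a,\hat n_{\a\b}=O(1)$ and $\Pi_{\a\b}=O(L^k)$ it reads $\hat{\bf e}_0(LQ_\a) = O(L^{1+k}) + O(1)$ (and $\hat{\bf e}_0(LQ_\a) = -\Sigma_{\a\b}Q^\b = O(1)$ directly when $\Pi_{\a\b}=0$, by \eqref{efe51SHRnopi}). If $-1<k<0$ then $1+k>0$, so $\tfrac{d}{dt}(LQ_\a)=O(L^k)$ with $k<0$ integrates to $LQ_\a=O(1)$, i.e.\ $Q_\a=O(L^{-1})=O(L^k)$ with $Q_\a(\tau_\infty)=0$, which is 3(b); if $k\le -1$ (or $\Pi_{\a\b}=0$) then $1+k\le 0$, so $\hat{\bf e}_0(LQ_\a)=O(1)$ and $LQ_\a$ has a finite limit, which is 4(b).

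The main point requiring care is not analytic depth but bookkeeping: the content of the proposition is effectively built into the reduced equations of Section~\ref{Section unphysical variables}, where \eqref{EpropSHR2Est} already isolates $\Hubble L\Pi_{\a\b}$ as the only source not a priori $O(1)$. What must be handled carefully is the trichotomy in the integration rule across the four $k$-ranges, and --- the genuine subtlety --- the distinction between ``bounded'' and ``admits a finite limit at $\tau_\infty$'', which needs the stronger control $\hat{\bf e}_0(\cdot)=O(1)$ (equivalently, an $O(L)$ source in the relevant evolution equation); this is exactly why the threshold $k=-1$ and the case $\Pi_{\a\b}=0$ behave qualitatively better than $-1<k<0$.
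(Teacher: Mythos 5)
Your proposal is correct and follows essentially the same route as the paper: integrate \eqref{EpropSHR2Est} for ${\cal E}_{\a\b}+\tfrac12\Pi_{\a\b}$ across the $k$-trichotomy, feed the result into the constraint \eqref{ola26SHR} to get $\lambda\Sigma_{\a\b}+\hat S_{\a\b}\to 0$ and hence the finite limits of $\Sigma_{\a\b}$, ${\cal H}_{\a\b}$, $Q_\a$ via \eqref{ola28SHR} and \eqref{efe1SHR}, and integrate \eqref{efe51SHR2Est2} for the flux statements. Your derivation of 3(b) via $LQ_\a=O(1)\Rightarrow Q_\a=O(L^{-1})\subset O(L^k)$ is in fact slightly cleaner than the paper's corresponding step, but the argument is the same.
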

~
\proof 
~ 
\begin{enumerate}
	\item Equation \eqref{EpropSHR2Est} can be written as 
	${\bf e}_0({\cal{E}}_{\a\b} + \half \Pi_{\a\b}) = -\Hubble \Pi_{\a\b} + O(L^{-1})= O(L^k)$, for $0<k<1$.
	Integrating the previous equation we get
	 ${\cal{E}}_{\a\b}= O(L^k)$. Substituting this into \eqref{ola26SHR} implies
	$\lambda\Sigma_{\a\b} + \hat S_{\a\b} = O(L^{k-1})$.
	Thus, the left hand side vanishes at $\tau_\infty $. Since $\hat S_{\a\b} $ has a finite limit at $\tau_\infty $ so does $\Sigma_{\a\b} $. Now, it follows directly from \eqref{efe1SHR}, for $Q_\a$, and from \eqref{ola28SHR}, for ${\cal{H}}_{\a\b}$, that both quantities have finite limits.
	\item If $k=0$, then from \eqref{EpropSHR}, we get ${\bf e}_0({\cal{E}}_{\a\b} + \half \Pi_{\a\b})  = O(1)$ which implies $({\cal{E}}_{\a\b} + \half \Pi_{\a\b}) = O(t)$, respectively ${\cal{E}}_{\a\b} = O(t)$. Adapting the argument of the proof of point 1, we get finite limits for $\Sigma_{\a\b}, {\cal{H}}_{\a\b}$ and $Q_\a $.   
	
	\item If $k<0$, then the previous argument gives ${\cal{E}}_{\a\b} = O(1)$. Moreover ${\bf e}_0(LQ_\a) = O(L^{k})$ implies $Q_\a = O(L^{k-1}) $ and, hence, $Q_\a=0$ at $\tau_\infty$.
	
	\item If $k\le -1$, then \eqref{EpropSHR2Est} implies that $\hat{\bf e}_0({\cal{E}}_{\a\b} +  \frac{1}{2}\Pi_{\alpha\beta} )$ is bounded and hence ${\cal{E}}_{\a\b}+\frac{1}{2}\Pi_{\alpha\beta}$ is finite. Since $\Pi_{\a\b}\to 0$ by assumption, it follows that ${\cal{E}}_{\a\b}$ has a finite limit at $\tau_\infty$. Similarly, \eqref{efe51SHR2Est2} implies that $\hat{\bf e}_0(LQ_\a)$ is bounded and, hence, $LQ_\a$ is finite at $\tau_\infty$.
	
	We have already analysed the case $\pi_{\a\b}=0$ in detail. Alternatively, we note that it satisfies $\Pi_{\a\b}=O(L^k)$ for any $k<-1$.
	\endproof
\end{enumerate} 
The case $k\le -1$ allows us to establish the strongest decay rates and, in particular, that ${\cal{E}}_{\a\b}$ and ${\cal{H}}_{\a\b}$ are finite at $\tau_\infty$. Hence, in our analysis, we will focus on identifying matter models for which $\pi_{\a\b}=O(L^{-4})$.
%%%%%%%%%%%%%%%%%%%%%%%%%%%%%%%%%%%%%%%%%%%%%%%%%%%%%%%%%%%%%%%
\section{Decay rates for different matter models}
%%%%%%%%%%%%%%%%%%%%%%%%%%%%%%%%%%%%%%%%%%%%%%
\label{Section - matter models}
We now discuss how different matter models satisfy the conditions of Proposition \ref{prop-decay2}. The matter models under consideration are Vlasov matter, scalar fields, perfect fluids, anisotropic matter (including viscous fluids and elastic matter) and trace-free matter models (including radiation fluids and Maxwell fields).
%
%%%%%%%%%%%%%%%%%%%%%%%%%%%%%%%%%%
\subsection{Fluid space-times}
\label{Fluid space-times}
%%%%%%%%%%%%%%%%%%%%%%%%%%%%%%%%%%
We suppose that Assumption \ref{Assumptions Wald+} holds and that we have a matter model for which 
$$\rho + p = \gamma \rho + O(\rho^2),$$
for some constant value of $\gamma$. We note that $q_\a$ and $\pi_{\a\b}$ need not vanish and an example will be considered in Section \ref{viscous-subsec}.
Our strategy is to deduce bounds on the density $\rho$ through repeated improvement of the decay rates, by feeding intermediate decay rates back into the evolution equation. 

Suppose $\rho=O(L^{-m})$ for some $m$, then $p,  q_\a, \pi_{\a\b}=O(L^{-m}) $. In particular, \eqref{minimum matter bound}  gives $m\ge 2 $ and $\rho^2=O(L^{-(m+1)})$ holds. 
In turn, the evolution equation \eqref{ids1SH} gives
\begin{eqnarray}
{\bf e}_0({\rho}) &=&- 3\Hubble \gamma\rho  + O(\rho^2)  -\pi_{\a\b}\sigma^{\a\b} + 2 a^\a q_\a   \nonumber \\
&=& - 3\Hubble \gamma\rho + O(L^{-(m+1)}), \nonumber 
\end{eqnarray}
implying
${\bf e}_0({\rho L^{3\gamma}}) = O(L^{3\gamma -(m+1)})$,
and therefore
\begin{equation}
\rho = \left\lbrace
\begin{array}{l c l}
O(L^{-3\gamma} ) &\quad & 3\gamma < (m+1) \\
O(L^{-3\gamma+\varepsilon} ) &\quad & 3\gamma = (m+1) \, \quad \varepsilon<<1\\
O(L^{-(m+1)}) &\quad & 3\gamma > (m+1). 
\end{array}\right. 
\end{equation}
For $3\gamma \ge m+1 $ we are able to improve the decay rate of $\rho$ by up to a factor of $L^{-1}$. We can then repeat the above steps with $\rho=O(L^{-(m+1)})$, respectively $\rho=O(L^{-(m+1-\varepsilon)})$, to obtain $\rho = O(e^{-3\gamma\lambda t})$.

Since Assumption 1.2 implies that $\rho$ dominates the other components of $T_{\a\b}$ we get 
\begin{equation}
\label{bounds for density perfect fluid}
\rho,\, p,\, q_\a,\, \pi_{\a\b} = O(e^{-3\gamma\lambda t}), \qquad \textmd{for}\quad  \frac{2}{3} \le \gamma \le 2.
\end{equation}
In fact,  we can show
\begin{equation}
\label{imperfect}
{\bf \hat e}_0(L^{3\gamma}\rho)= O(1),\quad {\text{hence}}\quad  L^{3\gamma}\rho \quad\textmd{is finite at} \,\,\tau_\infty.
\end{equation}
For $\gamma \ge \frac{4}{3}$ we can set $k=3-3\gamma \le -1$ in Proposition \ref{prop-decay2}.  Hence, for $\gamma \ge \frac{4}{3}$ the rescaled Weyl curvatures ${\cal E}_{\a\b}$ and ${\cal H}_{\a\b}$ are finite at $\tau_\infty$.
%
%%%%%%%%%%%%%%%%%%%%%%%%%%%%%%%%%%
\subsection{Perfect fluids}
%%%%%%%%%%%%%%%%%%%%%%%%%%%%%%%%%%%%
%
For a perfect fluid flowing along the timelike direction $\bf v$, the energy-momentum tensor has the form
\begin{equation}
\label{EM-tensor perfect fluid}
T^{\mathrm{fluid}}_{\mu\nu}=\bar{\rho} v_\mu v_\nu+\bar{p}(g_{\mu\nu}+v_\mu v_\nu),~~~v_\mu v^\mu=-1,
\end{equation}
where $\bar{\rho}, \bar{p}$ are the density and pressure measured by an observer co-moving with the fluid.

When the fluid flow $\bf v$ and the time direction ${\bf e}_0$ of the Bianchi space-time are parallel, the perfect fluid is referred to as {\em non-tilted} or {\em aligned}. Otherwise, we refer to a {\em tilted perfect fluid} (relative to ${\bf e}_0$) and, in that case, it has non-vanishing $q_\a$ and $\pi_{\a\b}$ as in \eqref{ola10}. In detail, we have
\begin{eqnarray}
\label{tilted density}
\rho&=& \Gamma^2(\bar \rho+\bar p)-\bar p
\\  
\label{tilted pressure} 
p&=& \frac{1}{3}(\bar \rho+\bar p)\Gamma^2\bar v^2+\bar p
\\  
\label{tilted energy flux}
q_\a&=& \Gamma^2(\bar \rho+\bar p)\bar v
\\  
\label{tilted fluid}
\pi_{\a\b} &=& \Gamma^2(\bar \rho+\bar p)(\bar v_a\bar v_b-\frac{1}{3}\bar v^2h_{ab}),
\end{eqnarray}
where $v^a=\Gamma(u^a+\bar v^a)$, $u_a\bar v^a=0$, $\Gamma:=1/\sqrt{1-\bar v^2}$ and $\bar v^2:=\bar v_a\bar v^a$. 
It thus follows that, as long as $v^a$ does not tilt towards a null vector at $\tau_\infty$, both $v^a$ and $\bar{\rho}$ are finite at $\tau_\infty$. 

Spatial homogeneity implies that a perfect fluid satisfies a barotropic equation of state $p=F(\rho)$ for a canonical observer in a Bianchi space-time (see e.g. \cite{EMMCbook}). 

For aligned perfect fluid space-times, the DEC and SEC are satisfied if $\rho \ge 0$, $\rho+p\ge 0 $ and $\rho+ 3p \ge 0 $. Hence, for aligned perfect fluids with a linear equation of state $p=(\gamma -1) \rho$, the DEC and SEC hold for $\gamma \in [\frac{2}{3},2] $.
For such perfect fluids, the evolution equation \eqref{ids1SH} gives directly
\begin{equation}
{\bf e}_0({\rho}) = - 3\Hubble \gamma\rho \quad 
\implies \quad \rho L^{3\gamma} = \textmd{const} \quad 
\implies \quad \rho = O(e^{-3\gamma \lambda t}).
\end{equation}
Hence, for aligned perfect fluids, the full energy momentum tensor decays like $e^{-3\gamma\lambda t}$. Since $\pi_{\a\b}=0$, we fall into case 4 of Proposition \ref{prop-decay2}.

Now, for an aligned perfect fluid with a non-linear equation of state of the form $\rho + p = \gamma \rho + O(\rho^2)$, we get the same decay rates as for the corresponding linear equation of state, as long as the DEC and SEC are satisfied. We note that a similar conclusion was reached in the appendix of  \cite{Rendall-1996} for some Bianchi I space-times with $p=f(\rho)$ such that $f(0)=0$, $0\le f'(\rho)\le 1$.
%%%%%%%%%%%%%%%%%%%%%%%%%%%%%%%%%%%%%%%%%%%%
\subsection{Anisotropic matter}
\label{anisotropic matter}
%%%%%%%%%%%%%%%%%%%%%%%%%%%%%%%%%%%%%%%%%%%%%%

Perfect fluids can be naturally generalised to anisotropic matter. Besides Maxwell fields and Vlasov matter, which we treat separately, another kind of physically interesting anisotropic matter are given by elastic matter models and viscous fluids. 

\subsubsection{Elastic matter} 

Examples of elastic matter in spatially homogeneous cosmologies were given in  \cite{Calogero-Heinzle}. In those models, the energy-momentum tensor is diagonal and has the form
\begin{equation}
\label{elastic}
T_{\mu\nu}=T^{\mathrm{PF}}_{\mu\nu} + \pi_{\mu\nu},
\end{equation}
where $T^{\mathrm{PF}}_{\mu\nu}$ has a perfect fluid form with $p=(\gamma -1) \rho=w\rho$ and $\pi_{\mu\nu}$ corresponds to the anisotropic elastic stress. Denoting the non-zero eigenvalues of $T\tensor{}{\mu}{\nu}$ by  $p_1, p_2, p_3$, we have $3p=p_1 + p_2 + p_3$. One can define
\begin{equation}
w_i = \frac{p_i}{\rho}
\end{equation}
so that $3w=w_1 + w_2 + w_3$. The DEC is satisfied if $w_i \in [-1,1]$, while the SEC holds if $w\ge -\frac{1}{3}$, i.e. $\gamma\ge \frac{2}{3}$.  Following the arguments of Section \ref{Fluid space-times}, we see that for $w \ge \frac{1}{3}$, i.e. $\gamma \ge \frac{4}{3}$ we achieve $k\le -1$. So from Proposition \ref{prop-decay2} the rescaled Weyl curvatures ${\cal{E}}_{\a\b}$ and ${\cal{H}}_{\a\b}$ have a finite limit at $\tau_\infty$. 

%%%%%%%%%%%%%%%%%%%%%%%%%%%%%%%%%%%%%%%%%%%%
\subsubsection{Viscous fluids}
\label{viscous-subsec}
%%%%%%%%%%%%%%%%%%%%%%%%%%%%%%%%%%%%%%%%%%%%
A relativistic generalization of the classical energy-momentum tensor for viscous fluids is given by \cite{Choquet}:
\begin{equation}
\label{viscous-T}
T_{\mu\nu}=T^{\mathrm{PF}}_{\mu\nu} + \pi_{\mu\nu}
\end{equation}
where $T^{\mathrm{PF}}_{\mu\nu}$ has a perfect fluid form and
\begin{equation}
 \pi_{\mu\nu}= \c_1 h_{\mu\nu} \nabla_\rho u^\rho + \c_2 h^\rho_{~\mu} h^\sigma_{~\nu}(\nabla_\rho u_\sigma+\nabla_\sigma u_\rho),
\end{equation}
where $\c_1$ and $\c_2$ are viscosity coefficients depending on the fluid under consideration.
A choice of $ \pi_{\mu\nu}$ which gives a Leray-Ohya hyperbolic system \cite{Choquet}, and is thus compatible with causal theory, is 
$ \pi_{\mu\nu}= \c_1 \theta g_{\mu\nu}+\c_2 \sigma_{\mu\nu}$.
Among this class of viscous fluids it is possible to find subclasses which are spatially homogeneous, irrotational and compatible with an orthornormal frame such that $\Omega_\mu=0$ and $n_{\mu\nu}$ is diagonal. These cases are considered in \cite{Ganguly} with
\begin{equation}
\label{viscous-example}
\pi_{\mu\nu}=\c_2 \sigma_{\mu\nu},
\end{equation}
and do not necessarily have a linear equation of state relating $\rho$ and $p$. In \cite{Ganguly}, the form considered was $p=(\gamma(\rho)-1)\rho$, with $\gamma(\rho)$ being a quartic function of $\rho$. Still, these equations of state fall in the class considered in Section \ref{Fluid space-times} which, for $\gamma \ge \frac{4}{3}$, satisfy the SEC and DEC and result in finite rescaled Weyl curvatures ${\cal{E}}_{\a\b}$ and ${\cal{H}}_{\a\b}$ at $\tau_\infty$.

%%%%%%%%%%%%%%%%%%%%%%%%%%%%%%%%%%%%%%%%%%%%
\subsection{Trace-free matter}
\label{tracefree}
%%%%%%%%%%%%%%%%%%%%%%%%%%%%%%%%%%%%%%%%%%%%%%
Suppose we have a matter model that satisfies the DEC and SEC and has a trace-free 
energy-momentum tensor\footnote{If the energy-momentum tensor is tracefree then the DEC and SEC are equivalent. However, tracefreeness does not imply that the DEC or SEC are satisfied, as the example of the conformal scalar field shows \cite{BarVis}.}.
 The latter implies that  $(\rho+p)=\frac{4}{3}\rho $, i.e. $\gamma=\frac{4}{3}$ in \eqref{bounds for density perfect fluid} and \eqref{imperfect}.
Hence, $\rho, p, q_\a, \pi_{\a\b}=O(L^{-4})$. More precisely, the quantities $\hat \rho = L^4 \rho, \, \hat p = L^4 p$ and $\hat q_\a = L^4 q_\a $ are all finite at $\tau_\infty$. Then, all trace-free matter satisfies Proposition \ref{prop-decay2} with $k=-1$ so that the rescaled Weyl curvatures ${\cal{E}}_{\a\b}$ and ${\cal{H}}_{\a\b}$ always have a finite limit at $\tau_\infty$.
%
%%%%%%%%%%%%%%%%%%%%%%%%%%%%%%%%%%%%%%%%%%%%%%%%%
\subsubsection{Radiation fluids (incoherent radiation)}
%%%%%%%%%%%%%%%%%%%%%%%%%%%%%%%%%%%%%%%%%%%%%%%%%
%
Radiation fluids are perfect fluids with $p=\tfrac{1}{3} \rho$ (i.e. $\gamma=\tfrac{4}{3}$). This equation of state holds for any observer since the energy-momentum tensor of a radiation fluid is trace-free, so the general estimates established above for trace-free matter hold for any (possibly tilted) radiation fluid. Hence, we can readily analyse radiation fluids that are tilted relative to the time-direction $\bf e_0$ using \eqref{tilted density}-\eqref{tilted fluid}.
%%%%%%%%%%%%%%%%%%%%%%%%%%%%%%%%%%%%%%%%%%%%%%%%%
\subsubsection{Null dust (pure radiation field)}
%%%%%%%%%%%%%%%%%%%%%%%%%%%%%%%%%%%%%%%%%%%%%%%%%
%
Null dust describes massless radiation and its energy-momentum tensor has the form
\begin{equation}
\label{null fluid}
T_{\mu\nu}=\rho k_\mu k_\nu,~~~\rho \ge 0, ~~ k_\mu k^\mu=0.
\end{equation}
Since null dust satisfies the DEC and SEC, the estimates above for trace-free matter hold.
%
%%%%%%%%%%%%%%%%%%%%%%%%%%%%%%%%%%%%%%%%%%%%%%%%%
\subsubsection{Einstein-Maxwell fields}
\label{EMYM section}
%%%%%%%%%%%%%%%%%%%%%%%%%%%%%%%%%%%%%%%%%%%%%%%%%
%
A space-time with a source-free electromagnetic field is described by
\begin{equation}
\label{ola11}
T_{\mu\nu}=F_{\mu\rho} F\tensor{\nu}{\rho}{}-\frac{1}{4}F^{\rho\sigma}F_{\rho\sigma}g_{\mu\nu},
\end{equation}
with the Faraday tensor $F_{\mu\nu}$ satisfying the source-free Maxwell equations
$$
\nabla^\mu F_{\mu\nu}=0,~~~\nabla_{[\rho}F_{\mu\nu]}=0.
$$
The Faraday tensor relative to ${\bf u}$ can be written in the form
\begin{equation}
\label{Faraday decomposition}
F_{ab} = u_a E_b - E_a u_b + \eta_{abcd} H^c u^d,
\end{equation}
 where the electric field $E_{a}=F_{ab}u^b$ and the magnetic field $H_a=\frac{1}{2}\eta_{abcd}F^{bc}u^d$ satisfy $E_a u^a=0$ and $H_a u^a=0$, and $\eta_{abcd}$ is the usual the 4-dimensional volume element.
Interpreting the source-free Maxwell field as a fluid flowing along ${\bf u}$ we have \cite{vElUgg96}
\begin{eqnarray}
\label{Maxwell density}
\rho &=& \half (E_\a E^\a + H_\a H^\a) = 3p\\
\label{Maxwell vector}
q^\a &=& \varepsilon^{\a\b\gamma} E_\b H_\gamma \\
\label{Maxwell TF tensor}
\pi_{\a\b} &=& - E_\a E_\b - H_\a H_\b + \third\delta_{\a\b}  (E_\gamma E^\gamma + H_\gamma H^\gamma) .
\end{eqnarray}
The energy-momentum tensor \eqref{ola11} is trace-free so that $\rho=O(L^{-4})$ implying $
E_\a  = O(L^{-2})
$ and $
H_\a  = O(L^{-2})
.$

The evolution equations for a spatially homogeneous Maxwell field are given by \cite{vElUgg96} 
\begin{eqnarray}
\label{electric field evolution}
{\bf e_0} (E_\a) &=& -2 \Hubble E_\a + \sigma_{\a\b} E^\b - \Ncf_{\a\b} H^\b - \varepsilon_{\a\b\gamma} \Acf^\b H^\gamma 
 \\
\label{magnetic field evolution}
{\bf e_0} (H_\a) &=& - 2 \Hubble H_\a + \sigma_{\a\b} H^\b + \Ncf_{\a\b} E^\b + \varepsilon_{\a\b\gamma} \Acf^\b E^\gamma, 
\end{eqnarray}
with constraints $  2 \Acf_\a E^\a = 0 =  2 \Acf_\a H^\a$.
Introducing new variables $\mathcal{E}_\alpha = L^2 E_\alpha $ and $\mathcal{H}_\alpha = L^2 H_\alpha $, we can rewrite \eqref{electric field evolution} and \eqref{magnetic field evolution} as
\begin{eqnarray}
\label{rescaled electric field evolution}
{\bf \hat{e}_0} (\mathcal{E}_\a) &=&  \frac{\Sigma_{\a\b}}{L} \mathcal{E}^\b - \NCF_{\a\b} \mathcal{H}^\b - \varepsilon_{\a\b\gamma} \ACF^\b \mathcal{H}^\gamma,\\
\label{rescaled magnetic field evolution}
{\bf \hat{e}_0} (\mathcal{H}_\a) &=&  \frac{\Sigma_{\a\b}}{L} \mathcal{H}^\b + \NCF_{\a\b} \mathcal{E}^\b + \varepsilon_{\a\b\gamma} \ACF^\b \mathcal{E}^\gamma. 
\end{eqnarray}
It follows directly that ${\bf \hat{e}_0} (\mathcal{E}_\a)=O(1)$ and ${\bf \hat{e}_0} (\mathcal{H}_\a)=O(1)$, so that $\mathcal{E}_\alpha$ and $\mathcal{H}_\alpha$ have finite limits at $\tau_\infty$. Hence, all components of $L^4 T_{ab}$ are finite at $\tau_\infty$. Substituting this back into \eqref{rescaled electric field evolution} and \eqref{rescaled magnetic field evolution}, implies that ${\bf \hat{e}_0} (\mathcal{E}_\a)$ and ${\bf \hat{e}_0} (\mathcal{H}_\a)$ have finite limits too.

The derivatives of the Maxwell tensor and the energy momentum tensor are made up of derivatives of the Faraday tensor and connection coefficients. Hence, it can be deduced, from the estimates obtained above, that $L^3 \nabla_a F_{bc}$ is finite at $\tau_\infty$ and a similar result holds for the derivatives of the energy-momentum tensor \eqref{ola11}.
%%%%%%%%%%%%%%%%%%%%%%%%%%%%%%%%%%%%%%%%%%%%%%%%%
\subsubsection{Massless Einstein-Vlasov matter} 
%%%%%%%%%%%%%%%%%%%%%%%%%%%%%%%%%%%%%%%%%%%%%%%%%
These models derive from relativistic kinetic theory and the corresponding system of equations is a sub-class of the Einstein-Boltzmann system for the case of massless (ultra-relativistic) particles with no collision term. The DEC and SEC hold in this case so the previous estimates for general trace-free matter can be used.  In turn, estimates for the kinetic variables are derived from the Vlasov equation and are presented in subsection \ref{Vlasov-matter-sec} together with the massive case which is not trace-free. 
%%%%%%%%%%%%%%%%%%%%%%%%%%%%%%%%%
\subsection{Einstein-scalar field models}
\label{Section - massive scalar field}
%%%%%%%%%%%%%%%%%%%%%%%%%%%%%%%%% In this case
The energy-momentum tensor for a massive scalar field model is given by
\begin{equation}
\label{Einstein-scalar}
T_{\mu\nu}= \nabla_\mu\phi\nabla_\nu\phi-\left(\frac{1}{2}\nabla_\rho\phi\nabla^\rho\phi+V(\phi)\right)g_{\mu\nu},
\end{equation}
with $\phi$ satisfying
\begin{equation}
\label{scalar field KG-equation}
\nabla^\mu\nabla_\mu\phi=V'(\phi),
\end{equation}
where $V$ is a sufficiently smooth potential and $V' = \textrm{d}V/\textrm{d}\phi$. 

We make the following additional assumptions about the potential (see also  \cite{Ren04b}\footnote{ The analysis in \cite{Ren04b} considers space-times with a vanishing cosmological constant and a scalar field whose potential has a strictly positive lower bound. The constant $V_1$ in Theorem 1 of \cite{Ren04b} is equivalent to the cosmological constant $\Lambda$ used in this article. Thus, the arguments employed in \cite{Ren04b} carry over directly to this analysis.}) 
\begin{enumerate}
	\item $V'$ is bounded whenever $V$ is bounded.
	\item $V''(\phi)\to V_2>0$, as $t \to \infty$.
	\item $L^3(2\lambda^2\phi-V'(\phi))$ is bounded.
\end{enumerate}
For a spatially homogeneous scalar field, the gradient $\nabla_\mu\phi$ is normal to the hypersurfaces of spatial homogeneity. Hence, $\nabla_a\phi = - \dot{\phi} n_a$, where $\dot{\phi} = {\bf e}_0(\phi)$. 
In this case, given the identification of ${\bf n}$ with ${\bf u}$, the energy-momentum tensor \eqref{Einstein-scalar} can be written in the form of an aligned perfect fluid with
\begin{equation}
  \rho = \frac{1}{2}\dot{\phi}^2 + V(\phi) ,
  \quad
  p = \frac{1}{2}\dot{\phi}^2 - V(\phi) ,
  \quad q_a=0,
  \quad \pi_{ab}=0,
\end{equation}
while \eqref{scalar field KG-equation} gives rise to
\begin{equation}
  \ddot{\phi} + 3 \Hubble \dot{\phi} + V'(\phi) = 0.
\end{equation}
The SEC and DEC are satisfied if and only if $\dot{\phi}^2 \ge V(\phi)  \ge 0$.
If Assumption \ref{Assumptions Wald+} holds, then Proposition \ref{prop-decay1} tells us that $L^2(\rho+p)=L^2\dot{\phi}^2$ and $L^2(\rho-p)=L^2 V(\phi)$ must be bounded. Hence, $L\dot{\phi}={\bf \hat e}_0 (\phi)$ is bounded and thus $\phi$ has a finite limit $\phi_\infty :=\phi(\tau_\infty)$.

Equation \eqref{ids1SH} takes the form ${\bf e}_0(\rho) = -3\Hubble \dot{\phi}^2 \le 0$ and, hence, $\rho \ge V(\phi)\ge 0$ is a decreasing function in $t$. It follows that $V(\phi)$ is a non-negative bounded function of $\phi$. Hence, by our assumptions, $V'(\phi)$ is also bounded. Following the argument of \cite{Ren04b} this implies that $V'(\phi_\infty)=0$.
 
 Without loss of generality, we redefine the scalar field by $\phi \mapsto \phi-\phi_\infty$, as the energy-momentum tensor \eqref{Einstein-scalar} and the wave equation \eqref{scalar field KG-equation} simultaneously keep their form if $V(\phi) \mapsto V(\phi-\phi_\infty)$. So, from here onwards in this section, we will have $\phi_\infty = 0 $, $V(0)=0$ and $V'(0)=0$. 

The assumption $V''(0)=V_2>0$ implies that, given $\epsilon>0$, the following estimates hold for sufficiently small $\phi$, i.e. sufficiently late times \cite{Ren04b}:
\begin{equation}
\label{phi quadratic bounds on V}
(V_2-\epsilon)\phi^2 \le V(\phi) \le (V_2+\epsilon)\phi^2 \qquad \textmd{and} \qquad 
(V_2-\epsilon)\phi^2 \le \phi V'(\phi) \le (V_2+\epsilon)\phi^2. 
\end{equation}
In particular, the first inequality implies that the rescaled scalar field $\hat \phi:=L\phi $ is bounded, which, in turn, implies that $\phi$ decays exponentially. Moreover, ${\bf e}_0(\hat \phi) = \Hubble \hat \phi+ L\dot{\phi} $ is bounded and we observe that
\begin{eqnarray}
  \frac{\textmd{d}^2\hat \phi}{\textmd{d}\tau^2} := {\bf \hat e}_0 ({\bf \hat e}_0 (\hat \phi)) &=& L^3 (\ddot{\phi} + 3 \Hubble \dot{\phi}) + L^3(\dot{\Hubble} + 2 \Hubble^2)\phi
  \nonumber \\
  &=& L^3 (2 \lambda^2\phi - V'(\phi))
  + \psi \left(
  \frac{1}{6}(\rho -3p)L^2 - \frac{\Sigma^2}{6L^2}- \frac{\hat{R}}{6}
  \right) \nonumber \\
  \label{2nd tau derivative of psi}
&=& L^3 (2 \lambda^2\phi - V'(\phi)) +O(1).
\end{eqnarray}
By assumption 3 of this section, $\textmd{d}^2\hat \phi/{\textmd{d}}\tau^2$ is bounded and hence $\textmd{d}\hat \phi / \textmd{d}\tau$ and $\hat \phi$ have finite limits at $\tau_\infty$ (then $\phi_\infty=0$).

We now take a closer look at assumption 3. Writing $V(\phi)= \sum_n  \frac{V_n}{n!}\phi^n$ we have
\begin{equation}
L^3 (2 \lambda^2\phi - V'(\phi)) = (2\lambda^2\hat \phi-V_2\hat \phi) L^2 - \half V_3 L \hat \phi^2 - \frac{V_4}{3!}\hat \phi^3 + h.o.t.,
\end{equation}
where $V_0=V_1=0$ due to $V(0)=V'(0)=0$. 
Since $\hat \phi$ is bounded, then assumption 3 is satisfied if $V_2=2\lambda^2 $ and $V_3=0$, and so setting $m^2=2\lambda^2$ the potential $V(\phi)$ takes the form
\begin{equation}
\label{scalar-field-potential}
V(\phi) = \frac{1}{2}m^2 \phi^2 + \phi^4 U(\phi),
\end{equation}
where $U$ is a smooth function.
This matches the conditions for the potential used in \cite{Fri-massive-scalar}, as $V_2= m^2 = \frac{2}{3}\Lambda$. We will comment further on the role of \eqref{scalar-field-potential} in Section \ref{Section - Conformal regularity}.
%
%%%%%%%%%%%%%%%%%%%%%%%%%%%%%%%%%%%%%%%%%%%
\subsection{Einstein-Vlasov matter}
\label{Vlasov-matter-sec}
%%%%%%%%%%%%%%%%%%%%%%%%%%%%%%%%%%%%%%%%%%%
Vlasov models are a class of kinetic models with collisionless particles. Let $T^\star M$ be the cotangent bundle of the physical manifold $M$ and $f\in C^1(T^\star M; \R)$ the particle distribution function. Let $x\in M$. The particle distribution with momentum ${\bf p}\in T^\star_x M$, for particles with mass $m$, is a non-negative function defined on
$$
{\cal P}= \{(x,{\bf p})\}\in T^\star M: {\bf g}^{-1}_x({\bf p},{\bf p})=-m^2,~~\text{${\bf p}$ is future pointing}\}.
$$
The Einstein-Vlasov energy-momentum tensor at $x$ is given by \cite{Ren04-vlasov}
\begin{equation}
\label{Einstein-Vlasov}
T_{\mu\nu}(x)= \int_{{\cal P}_x} f(x, p) p_\mu p_\nu \frac{\sqrt{\vert \text{det}(g^{\mu\nu}) \vert}}{g^{0\mu}p_\mu} d p_1dp_2dp_3,
\end{equation}
An important aspect regarding the application of our results is that \eqref{Einstein-Vlasov} satisfies the DEC and SEC, see e.g. \cite{Ren04-vlasov, Rendall-Tod}. 
The evolution equation for $f$ is given by the {\em Vlasov equation} 
\begin{equation}
\label{Vlasov-equation}
g^{\mu\nu}p_\mu \partial_{x^\mu}f-\frac{1}{2}\partial_{x^\rho} g^{\mu\nu}p_\mu p_\nu \partial _{p_\rho}f=0 .
\end{equation}
Spatially homogeneous solutions of the Einstein-Vlasov system have, by definition, the property that both geometry and the phase space density of particles are invariant under the group action defining the Bianchi type. In the evolution equations $\partial_{x^\mu}$ is then $\partial_{t}$, but the Vlasov equation is still a PDE considering the terms in $\partial_{p_\rho}$.  
Under our Assumptions \ref{Assumptions Wald+}, Lee derived the estimate \cite{Lee}
\begin{equation}
\label{momentum-decay}
|p_\mu |=C+O(e^{-\lambda t}),
\end{equation}
where $C>0$ is constant, so $p_\mu$ has a finite limit as $t\to +\infty$.  

An important assumption in \cite{Joudioux, Lee}, which we also use here, is that $f$ has compact support. In particular, the initial data $ f(0, p_\mu)$ has compact support in $p_\mu$ and, as discussed in \cite{Lee}, since $f$ is constant along the characteristic curves of \eqref{Vlasov-equation},  then it is bounded by its initial data as $f(t, p_\mu)\le \sup{\{|f(0,p_\mu)|: \forall p_\mu\}}$.

For the massive case $m>0$, Proposition 6 in \cite{Lee} gives the decay estimates in an orthonormal frame
\begin{eqnarray}
\label{bounds for Vlasov}
\rho &=& O(e^{-3\lambda t}) \nonumber \\
q_\a  &=& O(e^{-4\lambda t}) \\
\pi_{\a\b} &=&O(e^{-5\lambda t}).\nonumber
\end{eqnarray}
Thus, massive Einstein-Vlasov satisfies our Proposition \ref{prop-decay2}, case 4, with $k=-2$. 

In turn, the massless case corresponds to trace-free matter, so the estimates in the preamble of Section \ref{tracefree} apply directly to give $\rho, q_\a, \pi_{\a\b} = O(e^{-4\lambda t}) $.
%%%%%%%%%%%%%%%%%%%%%%%%%%%%%%%%%%%%%%%%%%%
\subsection{Multiple matter sources}
\label{multiple-matter}
Above, we considered decay rates for space-times with a single matter source, treating each model separately. Now consider a space-time with multiple matter components ${\cal M}^I$, each with an energy-momentum tensor $T^I_{\mu\nu} $. We assume that $\nabla^\mu T^I_{\mu\nu} =0$ holds for each individual matter component ${\cal M}^I$ separately. Using \eqref{ids1SH} and \eqref{efe51SH}, this assumption implies that 
\begin{eqnarray}
\label{evol rho indiv}
{\bf e}_0({\rho^I})&=&- 3(\rho^I+p^I)\Hubble-\pi^I_{\a\b}\sigma^{\a\b} + 2 a^\a q^I_\a
\\
\label{evol q indiv}
{\bf e}_0 (q^I_\a)&=&
-4\Hubble q^I_\a - \sigma_{\a\b}q^{I\b} + 3a_\b\pi^{I \b}_\a +\pi^{I\gamma}_\b\varepsilon_{\gamma\a\delta}n^{\b\delta}.
\end{eqnarray}
The overall energy-momentum tensor \eqref{ola10} is given by 
$T_{\mu\nu} = \sum T^I_{\mu\nu} $ with $ \rho = \sum \rho^I, \,\, q_\a  = \sum q_\a^I \,\, \textmd{etc}$. The evolution and constraint equations use the components of $T_{\mu\nu} $ and the geometric variables.
Combining these equations with \eqref{evol rho indiv} and \eqref{evol q indiv} we can adapt the analysis in this section to find decay rates for space-times with multiple matter sources. As our previous analysis has shown, different matter models lead to different decay rates for the curvature and matter components. When multiple matter models are present, the overall decay rates derived in Proposition \ref{prop-decay2} depend on the behaviour of the model with the slowest decay rate.
%
%%%%%%%%%%%%%%%%%%%%%%%%%%%%%%%%%%%%%%%%%%%%%%%%%%%%%%%%%%
\section{Asymptotic simplicity for Bianchi space-times}
\label{Section - Conformal regularity}
%%%%%%%%%%%%%%%%%%%%%%%%%%%%%%%%%%%%%%%%%%%%%%%%%%%%%%%%%%%
In order to introduce the notion of {\em asymptotic simplicity}, Penrose \cite{Pen64,Pen64b,Pen65} considered that the {\em physical space-time} $(M,g)$  (where the EFE are satisfied) is conformally embedded into a larger regular {\em unphysical space-time} manifold $(\hat{M}, \hat{g})$. Moreover, on $M$ the two metrics are conformally related by $\hat{g}_{\mu\nu}=\Omega^2 g_{\mu\nu}$, where the conformal factor $\Omega$ is a boundary defining function of $M$ in $\hat{M}$, i.e. $\Omega>0$ on $M$ and on the (conformal) boundary $\Scri$ of $M$ in $\hat{M}$ we have $\Omega=0$ and $\mathrm{d}\Omega \neq 0$.
For more details see e.g. \cite{JVKbook}.

For $\Lambda > 0$, the set $\Omega=0$ (denoted by $\Scri$) is a spacelike 3-dimensional hypersurface. $\Scri$ is referred to as null infinity since it contains the idealised endpoints of infinitely extendable null geodesics of the metric $g$. Points on $\Scri$ are infinitely far away for any observer in $(M,g)$ and thus $\Scri$  represents the  infinity of $(M,g)$. In our analysis we are only concerned with space-times whose null geodesics acquire an endpoint on $\Scri^+$, that is, to the future. Such space-times are referred to as future asymptotically simple. For sufficiently smooth conformal extension $(\hat{M}, \hat{g})$
the matter content and the Weyl curvature of $(M,g)$ vanish asymptotically. 

In \cite{Wal83} Wald discusses the existence of a large class of Bianchi space-times that locally asymptotically approach de Sitter. However, for asymptotic simplicity one needs to verify that the decay rates of the curvatures of the Bianchi space-time are sufficiently strong to permit the existence of a conformal extension $(\hat{M}, \hat{g})$ which is regular on $\Scri^+$. That this is a non-trivial matter is shown by the example of the Nariai space-time \cite{Bey09a} (see also \cite{Fri-review}). In that case, although the conformal metric extends smoothly to the conformal boundary, the function $\Omega$ only extends continuously with $\Omega=0$ but $d\Omega$ is divergent. Moreover, the scalar quantity  $\hat C_{\mu\nu\lambda\rho}\hat C^{\mu\nu\lambda\rho}$ does not vanish as required when one approaches the conformal boundary.

In this section we will address the following questions:
\\\\
\textit{(i) Can we conformally embed the physical Bianchi space-time $(M,g)$ with $\Lambda>0$ into a regular unphysical space-time $(\hat{M}, \hat{g})$ with finite connection coefficients and curvature, where future null infinity $\Scri^+$ is represented by the  spacelike hypersurface ${\hat{\cal{S}}}_\infty =\{\tau=\tau_\infty \}$? 
\\\\
(ii) Are the Bianchi space-times analysed by Wald \cite{Wal83} asymptotically simple?} 
\\\\
These questions will be answered affirmatively in Theorem \ref{prop-future-asymp-simple} for spacetimes where the anisotropic stress decays sufficiently fast asymptotically. We will then derive asymptotic constraints at  $\Scri^+$ in terms of conformal variables in Proposition \ref{prop-constraints at Scri} and then show that they are satisfied for certain matter fields in Proposition \ref{thm-constraints}.
%%%%%%%%%%%%%%%%%%%%%%%%%%%%%%%%%%%%%%%%%%%%%%%%%%%%%%%%%%
\subsection{Conformally rescaled geometric quantities}
\label{subsec-conformal-variables}
%%%%%%%%%%%%%%%%%%%%%%%%%%%%%%%%%%%%%%%%%%%%%%%%%%%%%%%%%%
%
In this subsection, we simply recall the relationship between the variables in the physical and unphysical space-times. The unphysical variables are needed both for our  analysis and regularity of the CEFE later on. For more details, see \cite{FraLR, Fri-review, JVKbook}.

Variables defined in terms of $\hat{g}_{\mu\nu}$ will be denoted with a hat '$~\hat{}~$'.
If ${\bf e}_a$ is a $g$-orthonormal frame, then the frame 
$${\bf \hat{e}}_a = \Omega^{-1} {\bf e}_a$$ 
is a $\hat{g}$-orthonormal frame. We adopt the convention that frame components of quantities related to $\hat{g}_{\mu\nu}$ are evaluated in the frame ${\bf \hat{e}}_a$. Where ambiguity could arise we will specify the frame explicitly.
The frame connection coefficients corresponding to the two frames
are related by 
\begin{equation}
\label{relation of frame connection coefficients}
\hat{\Gamma}\tensor{}{c}{ab}  = \Omega^{-1}
(\Gamma\tensor{}{c}{ab} + \delta\tensor{}{c}{a}\Upsilon_b 
- g_{ab}g^{cd}\Upsilon_d),\qquad \textmd{where} \qquad \Upsilon_a={\bf e}_a (\log \Omega)=\frac{{\bf e}_a \Omega}{\Omega}.
\end{equation}
Hence, for a fluid flow along ${\bf e}_0={\bf u}$, the frame components of the kinematic quantities given in Section \ref{setup} are related to their hatted counterparts by
\begin{eqnarray}
\label{transformation of  individual connection coefficients}
&&\hat{u}_a = \Omega^{-1} u_a, \quad \quad\quad\quad\quad
\hat{\sigma}_{ab} = \Omega^{-1} \sigma_{ab}, \quad
\nonumber
\\
&&\hat{H} = \Omega^{-1} (H - \Upsilon_a u^a) , \quad 
\hat{A}^a = \Omega^{-1} (A^a + (g^{ab}+u^a u^b) \Upsilon_b).
\end{eqnarray}
In the rescaled Bianchi space-time the surfaces of homogeneity are normal to ${\bf \hat e}_0={\bf \hat n}=L{\bf n}=L{\bf e}_0 $. 
They induce an unphysical metric $\hat{h}_{\mu\nu}=\hat{g}_{\mu\nu}+\hat{n}_\mu \hat{n}_\nu$, with Levi-Civita connection $\hat{D}_\a$ and extrinsic curvature $\hat{\chi}_{ab} $ of the homogeneity hypersurfaces embedded into $(\hat{M}, \hat{g}) $. The same notation is adapted for the spacelike hypersurface ${\hat{\cal{S}}}_\infty=\{\tau = \tau_\infty\}$. 

The Ricci curvature $R_{\mu\nu}$ can be expressed in terms of the Schouten tensor 
$$P_{\mu\nu}=\frac{1}{2}R_{\mu\nu}-\frac{1}{12}Rg_{\mu\nu}$$
and frame components of the Schouten tensors of $\hat{g}$ and $g$ are related by
\begin{equation}
\label{Schouten transformation}
\hat{P}_{ab} = \Omega^{-2}\left[ P_{ab} - {\bf e}_a(\Upsilon_b) + \Gamma\tensor{}{c}{ab}\Upsilon_c + \Upsilon_a \Upsilon_b - \frac{1}{2} \Upsilon_c \Upsilon_d g^{cd} g_{ab}. \right].
\end{equation}
The Weyl tensor itself is conformally invariant as are its electric and magnetic parts. However, their respective frame components are related by
\begin{equation}
\label{unphysical EMWeyl}
\hat{C}\tensor{ab}{c}{d}=\Omega^{-2}C\tensor{ab}{c}{d}, \quad 
\hat{E}_{ab} = \Omega^{-2}E_{ab}, \quad 
\hat{H}_{ab} = \Omega^{-2}H_{ab}.
\end{equation}
So, the Riemann curvature $\hat{R}\tensor{ab}{c}{d}$ of the unphysical metric $\hat{g}$ is finite if the frame components \eqref{Schouten transformation} and \eqref{unphysical EMWeyl} have finite values.

For the CEFE approach, we need the conformally rescaled Weyl tensor associated to $\hat{g}$ and defined as 
$$d\tensor{\mu\nu}{\lambda}{\rho}:= \Omega^{-1}C\tensor{\mu\nu}{\lambda}{\rho},$$
while the ${\bf \hat{e}}_a$ frame components of its electric and magnetic parts are given by 
\begin{equation}
\label{rescaled confomral EMWeyl}
{\cal{E}}_{ab} = \Omega^{-3}E_{ab}, \quad 
{\cal{H}}_{ab} = \Omega^{-3}H_{ab}.
\end{equation}
Furthermore, one defines 
\begin{equation}
\label{def-d-and-s}
d_a :=  {\bf \hat e}_a( \Omega)~~{\text {and}}\quad s 
:=-\frac{1}{4} \hat{g}^{ab} \hat{\nabla}_{a} d_{b}
=-\frac{1}{4} \hat{g}^{ab} \hat{\nabla}_{a} \hat{\nabla}_{b}\Omega,
\end{equation}
where $\hat\nabla$ is the covariant derivative of metric $\hat g$.
%
%%%%%%%%%%%%%%%%%%%%%%%%%%%%%%%%%
\subsection{Future asymptotically simple Bianchi space-times}
\label{Sec:Penrose approach}
%%%%%%%%%%%%%%%%%%%%%%%%%%%%%%%%%%
In order to embed the FLRW space-time into the conformally flat Einstein cylinder, one chooses the inverse of the scale function $a(t)$ as the conformal factor $\Omega$. The fact that in our setting here $a(t)$ corresponds to the length scale $L(t)$, motivates the following choice of conformal factor for our analysis
\begin{equation}
\label{Conformal factor}
\Omega(t) = \frac{1}{L(t)}.
\end{equation}
Note that the choice \eqref{Conformal factor} matches the rescaling of the physical variables introduced in Section \ref{Section unphysical variables}, which was introduced purely to simplify the evolution equations so that we could obtain improved estimates. 

Recall from our earlier analysis, that Assumption \ref{Assumptions Wald+} implies that $L=O(e^{\lambda t})$ as $t \to \infty$. Hence, $\Omega=O(e^{-\lambda t})$ vanishes as $t \to \infty$ as required for a conformal embedding.
Since $\Omega$ is a function of $t$ only, it  follows that $\Omega$ is constant on each surface of homogeneity $\cal S$ and hence the property of spatial homogeneity (and possibly isotropy) is preserved in the unphysical space-time. Moreover, \eqref{transformation of  individual connection coefficients} implies  $\hat{H}=0$ so that an observer in the unphysical space-time $(\hat{M}, \hat{g})$ no longer observes any expansion.

Since $L(t)$ is a strictly increasing function of the conformal time $\tau$ defined in \eqref{conformal time}, it maps $[0,\infty)$ to the finite time interval $[0,\tau_\infty)$ and we have $\Omega(\tau_\infty)=0$. Hence, the conformal rescaling \eqref{Conformal factor} compactifies the Bianchi space-times in the time direction. In turn, the foliation by $t$ is mapped to a foliation by $\tau$ so that each surface of constant $\tau$ is a surface of homogeneity. Below we will show that the spacelike hypersurface ${\hat{\cal{S}}}_\infty=\{\tau = \tau_\infty\}$ represents $\Scri^+$.

\begin{remark} Any covariant derivative using $\hat{\nabla}$, can be expressed as derivatives along the frame vectors plus terms involving the finite frame connection coefficients $ \hat{\Gamma}\tensor{}{c}{ab}$. Due to our gauge choice of ${\bf {e}}_a$, respectively ${\bf \hat{e}}_a$, and the fact that \eqref{Conformal factor} preserves spatial homogeneity, we only require derivatives along ${\bf \hat{e}}_0$. Hence, if the frame components of a spatially homogeneous quantity $\bf Q$ have a finite limit at $\tau_\infty$, then the covariant derivative $\bf \hat{\nabla} {\bf Q}$ is finite as long as ${\bf \hat{e}}_0({\bf Q})$ is finite.
\end{remark}

If $\Omega$ only changes in the ${\bf e}_0$-direction, then we have 
\begin{equation}
\label{Upsilon-Bianchi}
\langle \Upsilon, {\bf e}_a \rangle =  \langle \mathrm{d}\Omega, \hat{{\bf e}}_a \rangle  = -\Hubble \delta_{~a}^0 \neq 0 \qquad \textmd{and}\qquad
\hat{\Gamma}\tensor{}{c}{ab}  = L (\Gamma\tensor{}{c}{ab} - H\delta\tensor{}{c}{a}\delta\tensor{}{0}{b}
+ H g_{ab}g^{c0}).
\end{equation}
Hence, $\Upsilon_\a=0$, $\mathrm{d}\Omega \neq 0$ and $\hat{\Gamma}\tensor{}{\g}{\a\b}  = \Omega^{-1}\Gamma\tensor{}{\g}{\a\b}$. 
Moreover, \eqref{transformation of  individual connection coefficients} implies
\begin{eqnarray}
\hat{\Hubble}=0, \quad \hat{L}=1, \quad
\label{unphysical-frame-connection-components}
\hat{\sigma}_{ab} = L \sigma_{ab},\quad 
\hat{n}_{ab} = L n_{ab}, \quad \hat{a}_{a} = L a_{a}, \nonumber
\end{eqnarray}
together with $\hat{A}^b = L A^b=0$ and $\hat{\omega}_{\a\b} = L \omega_{\a\b}=0$.
Furthermore, it follows that 
$$^{(3)}\hspace{-0.05cm}\hat R\,=\,L^2\, ^{(3)}\hspace{-0.05cm}R ,\qquad ^{(3)}\hspace{-0.05cm}\hat S_{ab}\,=\, L^2 \, ^{(3)}\hspace{-0.05cm}S_{ab} \quad \textmd{and}\quad \hat Y_{ab}\,=L^3 Y_{ab},$$ 
where $\hat{Y}_{ab}$ is the Cotton-York tensor given in \eqref{3-Cotton-York}. 
For the subsequent calculations define 
\begin{equation}
\label{Xi}
\Xi_{ab}:= L^{-2} (\tfrac{1}{2}\pi_{ab}-H \sigma_{ab})=\frac{\Pi_{ab}}{2L}-H \Sigma_{ab},
\end{equation}
where
$$\Pi_{ab}:=L^3\pi_{ab},~~~~{\text {and}}~~~\Sigma_{ab}:=L^2\sigma_{ab}.$$
%%%%%%%%%%%%%%%%%%%%%%%%%%%%%
\begin{theorem}{\bf (Asymptotically simple Bianchi space-times)}
\label{prop-future-asymp-simple} 
\\
Suppose Assumption 1 holds and $\Xi_{ab}$ has a finite limit as $\tau\to\tau_\infty$.
Then the Bianchi space-times can be conformally extended to ${\hat{\cal{S}}}_\infty$ such that
\begin{enumerate}
	\item the connection coefficients and the Riemann curvature associated to $\hat{g}_{\mu\nu}=L^{-2} g_{\mu\nu}$ are finite on ${\hat{\cal{S}}}_\infty$. In fact, this result holds for $\pi_{ab}=O(L^{-(2+\epsilon)}) $, with $\epsilon>0$;
	\item the congruence formed by $\hat{\bf e}_0$ and parameterised by $\tau$ is geodesic and free of conjugate points at ${\hat{\cal{S}}}_\infty$ with respect to $\hat{g}$;
	\item the space-times are future asymptotically simple and ${\hat{\cal{S}}}_\infty$ represents future null infinity $\Scri^+$.
\end{enumerate}  
\end{theorem}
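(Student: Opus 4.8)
The plan is to exploit the conformal factor $\Omega = 1/L$ from \eqref{Conformal factor} and verify, one at a time, the three classical hypotheses of asymptotic simplicity: (1) regularity of the unphysical geometry at $\hat{\cal{S}}_\infty$, (2) the geodesic/conjugate-point structure of the normal congruence, and (3) that $\hat{\cal{S}}_\infty$ is genuinely reached by the null geodesics of $(M,g)$ and carries $\mathrm{d}\Omega \neq 0$. For part (1), I would first note that by Proposition \ref{prop-decay1} the quantities $\hat{a}_\a$, $\hat{n}_{\a\b}$, $\hat{S}_{\a\b}$, $\hat{R}$ already have finite limits at $\tau_\infty$, so by \eqref{unphysical-frame-connection-components} the unphysical spatial connection coefficients $\hat{\Gamma}\tensor{}{\g}{\a\b}$ are finite there. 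The one genuinely new input is the frame connection coefficients and curvature components that mix the time direction: from \eqref{Upsilon-Bianchi}, $\hat{\Gamma}\tensor{}{c}{ab}$ involves $H$, which by \eqref{Hubble lower and upper bound} tends to $\lambda$, hence is finite. For the unphysical Riemann tensor I would use that it is determined by $\hat{P}_{ab}$ (via \eqref{Schouten transformation}) and by the rescaled electric/magnetic Weyl parts $\hat{E}_{ab} = \Omega^{-2}E_{ab} = L^2 E_{ab}$, $\hat{H}_{ab} = L^2 H_{ab}$; the latter are finite since ${\cal E}_{ab} = L^3 E_{ab}$ and ${\cal H}_{ab} = L^3 H_{ab}$ and Proposition \ref{prop-decay1} gives $E_{ab}=O(L^{-2})$, $H_{ab}=O(L^{-3})$. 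The Schouten tensor $\hat{P}_{ab}$ involves $P_{ab}$ (which limits to $\tfrac{1}{2}\Lambda g_{ab}$-type terms by \eqref{CNHT curvature decays}), the term ${\bf e}_a(\Upsilon_b)$ (a time derivative of $H$, controlled by \eqref{Hubble inequalities}), and products $\Upsilon_a\Upsilon_b$ (finite). The only place the hypothesis on $\Xi_{ab}$ enters is precisely the combination $\tfrac12\pi_{ab} - H\sigma_{ab}$ rescaled by $L^{-2}$ that shows up when one reassembles the full unphysical curvature, which is why the sharper statement holds already under $\pi_{ab} = O(L^{-(2+\epsilon)})$: with that decay one has $L^{-2}\pi_{ab} \to 0$ and $L^{-2} H\sigma_{ab} \to 0$ by \eqref{estimates-shear}, so $\Xi_{ab}\to 0$ automatically.

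\textbf{Geodesic congruence and conjugate points.} For part (2), I would show that $\hat{\bf e}_0$ is geodesic with respect to $\hat{g}$ by computing its unphysical acceleration from \eqref{transformation of individual connection coefficients}: $\hat{A}^a = \Omega^{-1}(A^a + (g^{ab}+u^au^b)\Upsilon_b)$, and since $A^a = 0$ for the canonical observer and $\Upsilon_\a = 0$ with $\Upsilon_0 = -H$ purely in the $u$-direction, the projected term vanishes, giving $\hat{A}^a = 0$. For the absence of conjugate points at $\hat{\cal{S}}_\infty$, I would observe that the unphysical expansion is $\hat{H} = 0$ by \eqref{unphysical-frame-connection-components} and the unphysical shear is $\hat{\sigma}_{ab} = L\sigma_{ab} = O(L^{-1})$ by \eqref{estimates-shear}, both finite; feeding these into the Raychaudhuri equation for the $\hat{g}$-congruence (whose curvature term is finite by part (1)) shows the expansion scalar stays bounded, so no focusing singularity develops before or at $\tau = \tau_\infty$. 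Equivalently, since the physical congruence $\bf e_0$ is geodesically complete to the future by the known results cited (\cite{Lee,Ren95}) and $\tau_\infty$ is a finite parameter value corresponding to $t = \infty$, the geodesics reach $\hat{\cal{S}}_\infty$ without prior conjugate points.

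\textbf{Null infinity and asymptotic simplicity.} For part (3), I would verify the three defining properties of $\Scri^+$: that $\Omega = 0$ on $\hat{\cal{S}}_\infty$ (immediate, since $L\to\infty$), that $\mathrm{d}\Omega \neq 0$ there, and that $\hat{\cal{S}}_\infty$ is spacelike and is acquired as an endpoint by the future-inextendible null geodesics of $(M,g)$. The nonvanishing of $\mathrm{d}\Omega$ follows from \eqref{Upsilon-Bianchi}: $\langle \mathrm{d}\Omega, \hat{\bf e}_0\rangle = -H \to -\lambda \neq 0$, so $\mathrm{d}\Omega$ has a finite, nonzero limit. That $\hat{\cal{S}}_\infty$ is spacelike follows because $\mathrm{d}\Omega$ is proportional to $\hat{n}$, which is timelike, so its level set is spacelike; this also confirms the $\Lambda > 0$ picture where $\Scri^+$ is a spacelike hypersurface. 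Finally, to see that null geodesics of $g$ reach $\hat{\cal{S}}_\infty$: a null geodesic of $g$ is, up to reparameterisation, a null geodesic of $\hat{g}$ (null geodesics are conformally invariant as point sets), and since the unphysical metric and connection are regular at $\hat{\cal{S}}_\infty$ by part (1), such a curve extends as a regular $\hat{g}$-geodesic up to and through $\tau = \tau_\infty$; because the affine parameter of the original $g$-geodesic diverges as $t\to\infty$ (equivalently $\tau\to\tau_\infty$), the endpoint on $\hat{\cal{S}}_\infty$ is at infinite $g$-affine distance, so it is a genuine ``point at infinity.'' Combining parts (1)--(3), the extension $(\hat{M},\hat{g})$ with boundary $\hat{\cal{S}}_\infty = \Scri^+$ satisfies Penrose's definition, and the space-times are future asymptotically simple.

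\textbf{Main obstacle.} The delicate step is part (1), specifically controlling the unphysical \emph{curvature} rather than merely the connection: the naive transformation \eqref{Schouten transformation} contains a factor $\Omega^{-2} = L^2$ multiplying a bracket whose individual terms do not each decay like $L^{-2}$, so one must check that the \emph{cancellations} built into the conformal transformation (which are exactly what makes de Sitter conformally flat) survive the anisotropic corrections. Tracking this is where the finite-limit hypothesis on $\Xi_{ab}$ is used: it packages precisely the anisotropic remainder $L^{-2}(\tfrac12\pi_{ab} - H\sigma_{ab})$ that would otherwise obstruct finiteness of $\hat{P}_{ab}$ and hence of $\hat{R}\tensor{ab}{c}{d}$. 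Once that combination is known to converge — automatically when $\pi_{ab} = O(L^{-(2+\epsilon)})$, and by hypothesis otherwise — the rest is bookkeeping with the estimates already assembled in Propositions \ref{prop-decay1} and \ref{prop-decay2}.
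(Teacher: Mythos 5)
Your proposal follows the paper's own architecture: the conformal factor $\Omega=1/L$, finiteness of the unphysical connection coefficients from Section \ref{Section unphysical variables}, a Schouten/Weyl split of the unphysical curvature with $\Xi_{ab}$ isolated as the only potential obstruction, $\hat{A}^a=0$ for the geodesic property, and $\mathrm{d}\Omega\neq 0$ to identify ${\hat{\cal{S}}}_\infty$ with $\Scri^+$ (your part (3) is in fact more explicit than the paper's). There are, however, two concrete gaps.

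First, you repeatedly conflate boundedness with the existence of a finite limit, which is exactly the distinction the paper is built around. The operative definition of $\Xi_{ab}$ is the second expression in \eqref{Xi}, namely $\Xi_{ab}=\Pi_{ab}/(2L)-H\Sigma_{ab}=L^{2}(\tfrac12\pi_{ab}-H\sigma_{ab})$ (this is the combination that actually appears in the Schouten expansion \eqref{unphysical Schouten expansion}; the prefactor $L^{-2}$ in the first expression of \eqref{Xi} is inconsistent with it). Under $\pi_{ab}=O(L^{-(2+\epsilon)})$ the term $L^{2}\pi_{ab}=\Pi_{ab}/(2L)$ does tend to zero, but $H\Sigma_{ab}$ does \emph{not}: \eqref{estimates-shear} only gives $\Sigma_{ab}=O(1)$, and $O(1)$ is not a limit. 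The paper closes this by invoking Proposition \ref{prop-decay2} with $k=1-\epsilon<1$, whose proof yields $\lambda\Sigma_{\a\b}+\hat S_{\a\b}=O(L^{k-1})\to 0$ and hence convergence of $\Sigma_{\a\b}$; the limit of $\Xi_{ab}$ is then $-\lambda\Sigma_{\a\b}(\tau_\infty)$, generically nonzero, so your claim that $\Xi_{ab}\to 0$ automatically is false. The same defect affects your treatment of $\hat{E}_{ab}=L^{2}E_{ab}$: citing $E_{ab}=O(L^{-2})$ from Proposition \ref{prop-decay1} gives only $\hat{E}_{ab}=O(1)$; the finite limit comes from \eqref{ola26SH}, which exhibits $L^{2}E_{ab}$ as $-\Xi_{ab}+\hat S_{\a\b}+O(L^{-2})$, so it too is controlled precisely by the hypothesis on $\Xi_{ab}$.

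Second, in part (2) your fallback argument — that future geodesic completeness of the physical congruence implies the absence of conjugate points — is not valid: completeness says nothing about Jacobi fields vanishing, and a complete congruence can still form caustics. Your Raychaudhuri sketch controls only the trace of the deviation; to exclude conjugate points one must bound individual Jacobi fields from below, which the paper does directly: with $\hat{H}=0$ a spacelike Jacobi field of length $z$ satisfies $\mathrm{d}z/\mathrm{d}\tau=\hat{\sigma}(\epsilon,\epsilon)\,z\ge -C\Vert\sigma\Vert z$, whence $z\ge e^{-C\Vert\sigma\Vert\tau}>0$ on $[0,\tau_\infty]$. Both issues are local repairs rather than structural failures; the skeleton of your argument is the paper's.
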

\begin{proof}
1) In Section \ref{Section unphysical variables}, we showed that all the frame connection coefficients $\hat{\Gamma}\tensor{}{c}{ab}$ have a finite limit at $\tau_\infty$. 
Hence $\hat{\chi}_{ab}$ 
 decays like $e^{-\lambda t} $ and vanishes on ${\hat{\cal{S}}}_\infty$.

For the unphysical Schouten curvature, we combine \eqref{Schouten transformation}, \eqref{Conformal factor} and \eqref{Upsilon-Bianchi} to get
\begin{eqnarray}
\hat{P}_{ab} &=& L^2 (P_{ab} + (\dot{\Hubble}+\Hubble^2) \delta_{~a}^0 \delta_{~b}^0 - \Hubble \Gamma\tensor{}{0}{ab} + \tfrac{1}{2}\Hubble^2 g_{ab})
\nonumber \\
\label{unphysical Schouten expansion}
&=& \frac{\Pi_{ab}}{2L}-H \Sigma_{ab} +\frac{1}{L} Q_{(a}\delta^0_{~b)} + 
\left( \frac{^{(3)}\hspace{-0.05cm}\hat{R}}{6}- \frac{\Sigma^2}{L^2} \right) \delta_{~a}^0 \delta_{~b}^0
+\left(  \frac{^{(3)}\hspace{-0.05cm}\hat{R}}{12}- \frac{ \Sigma^2}{6L^2} 
\right)g_{ab},
\end{eqnarray}
where we have used \eqref{unphysical variables defined}, \eqref{Schouten in terms of matter}, \eqref{efe0SH} and \eqref{efe6SH} in the second step. 
We observe that $\hat{P}_{ab}$ does not dependent explicitly on density and pressure. Thus if $\Xi_{ab}$ has a finite limit at $\tau_\infty$, then the unphysical Schouten tensor has a finite limit on  ${\hat{\cal{S}}}_\infty$. Moreover, that condition guarantees that the unphysical electric Weyl tensor $\hat{E}_{ab}=L^2 E_{ab}$ also has a finite limit at $\tau_\infty$, while the unphysical magnetic Weyl tensor $\hat{H}_{ab}=L^2 H_{ab}$ vanishes by Proposition \ref{prop-decay1}. Moreover, by Proposition \ref{prop-decay2}, the asymptotic finite condition on $\Xi_{ab}$ is satisfied if $\Pi_{ab}=O(L^{k}) $ with $k<1$, which is equivalent to $\pi_{ab}=O(L^{-(2+\epsilon)}) $, with $\epsilon>0$.

2) Since $\hat{A}^\mu=0$ the curves are $\hat{g}$-geodesics. Let $\eta^\mu=z \epsilon^\mu$ denote a space-like Jacobi field of length $z(\tau)$, so that $\epsilon^\mu$ is a space-like unit vector. Using $\hat{H}=0$ and setting $z(0)=1$ it follows that 
\begin{equation}
    \frac{\mathrm{d}z}{\mathrm{d}\tau} = \hat{\sigma}(\epsilon,\epsilon) \, z \ge - C \Vert  \sigma \Vert z \implies 
    z(\tau) \ge e^{- C \Vert  \sigma \Vert\tau} \quad \mathrm{for}\quad \tau \le \tau_\infty,
\end{equation}
where we recall that $\Vert  \sigma \Vert$ denotes the euclidean norm of a matrix with entries $\sigma_{\alpha\beta}$.
Since $z>0$ the congruence is free of conjugate points up to at least ${\hat{\cal{S}}}_\infty$.

3) The manifold $\hat{M}=M \cup {\hat{\cal{S}}}_\infty$ is an conformal extension of the Bianchi space-time $(M,g)$. By our definition $\Omega=0$ on ${\hat{\cal{S}}}_\infty$ while \eqref{Hubble lower and upper bound} and \eqref{Upsilon-Bianchi} show that $\mathrm{d}\Omega\ne 0$ on ${\hat{\cal{S}}}_\infty$. Thus the final statement follows.
\end{proof}
\begin{remark}
	As discussed earlier, the Nariai space-time does not admit a regular conformal extension as $\mathrm{d}\Omega = 0$. It follows that the Nariai space-time can not satisfy both Assumption \ref{Assumptions Wald+} and Assumption \ref{Assumption for Pi} with $k\le-1$ (through Proposition \ref{prop-decay2}), as those assumptions imply that $\mathrm{d}\Omega \neq 0$ and that the aforementioned scalar quantity $\hat C_{\mu\nu\lambda\rho}\hat C^{\mu\nu\lambda\rho}$ must vanish as $t\to \infty$.
\end{remark}
\begin{remark}
The existence of $\scri^+$ in the case of Vlasov matter or pure radiation ($\c=\tfrac{4}{3}$) in Bianchi spacetimes with $\Lambda>0$ was also established by Tod  \cite{Tod-2007}, for initial data given at an isotropic singularity.  
\end{remark}
%%%%%%%%%%%%%%%%%%%%%%%%%%%%%%%%%%%%%%%%%%%%%%%%%%%%%%%%%%%%%
\subsection{Conformal hair}
\label{sec:conformal hair}
%%%%%%%%%%%%%%%%%%%%%%%%%%%%%%%%%%%%%%%%%%%%%%%%%%%%%%%%%%%%%
It is a well established fact that the  Bianchi type of $(M,g)$ is preserved through out the entire evolution \cite{EllMac69, Ellis-Elst} in the physical space-time. Since our conformal factor \eqref{Conformal factor} only depends on $t$, then it is constant on each surface of homogeneity. As a consequence, the Killing vector fields $\bm \xi_\a$ of $(M,g)$ are also Killing vector fields $\bm \xi_\a$ of $(\hat{M},\hat{g})$ and the unit normal vector $ \hat{\bm n}$ satisfies $[ \hat{\bm n},\bm \xi_\a ]=0$ again. Thus the structure constants  $\hat{C}^\c_{\a\b}$ for $(\hat{M},\hat{g})$ are equal to $C^\c_{\a\b}$ and can be extended onto ${\hat{\cal{S}}}_\infty$. Moreover, they are equivalent to spatial commutation functions defined by $\ACF_\a $ and $\NCF_{\a\b} $. 
Hence, as a corollary of Theorem \ref{prop-future-asymp-simple},  we can conclude:
\begin{corollary}{\bf (Conformal hair)} \label{conformal-hair}
\\
While the functions $a_\b$ and $n_{\a\b}$ associated to the physical Bianchi space-times $(M,g)$ vanish as $t\to \infty$ (also known as a no-hair property), their conformal counterparts  $\ACF_\a $ and $\NCF_{\a\b} $ have a well-defined non-zero limit on $\Scri^+$ and preserve information of the Bianchi type of $(M,g)$ at $\Scri^+$. Thus, the space-times exhibit conformal hair. 
\end{corollary}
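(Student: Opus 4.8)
The plan is to establish the three assertions of the corollary in turn: that the physical commutation functions decay to zero, that their conformally rescaled counterparts possess well-defined limits on $\Scri^+$, and that those limits carry the full Bianchi classification data of $(M,g)$ and are non-vanishing away from type~I. The first assertion is immediate from \eqref{bound Acf}--\eqref{bound Ncf}, which give $\Acf_\b = O(e^{-\lambda t})$ and $\Ncf_{\a\b} = O(e^{-\lambda t})$, hence $\Acf_\b,\Ncf_{\a\b}\to 0$ as $t\to\infty$. The second follows by combining part~1 of Proposition~\ref{prop-decay1} (the rescaled quantities $\ACF_\a = L\Acf_\a$ and $\NCF_{\a\b} = L\Ncf_{\a\b}$ have finite limits at $\tau_\infty$) with parts~1 and~3 of Theorem~\ref{prop-future-asymp-simple} (the unphysical space-time is regular up to and including $\hat{\cal{S}}_\infty$, and $\hat{\cal{S}}_\infty$ represents $\Scri^+$), so that these limits are genuine tensor fields intrinsic to $\Scri^+$.

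The substance of the argument is the third assertion, and my strategy is to follow the Lie algebra rather than its individual components through the rescaling. Because $\Omega = 1/L$ is a function of $t$ alone it is constant on each surface of homogeneity, so $\bm\xi_\a L = 0$; combined with $[\bm n,\bm\xi_\a] = 0$ and $[\bm e_\a,\bm\xi_\b] = 0$ in $(M,g)$ and the identity $[fX,Y] = f[X,Y] - (Yf)X$, this yields $[\hat{\bm n},\bm\xi_\a] = [L\bm n,\bm\xi_\a] = 0$ and $[\hat{\bm e}_\a,\bm\xi_\b] = [L\bm e_\a,\bm\xi_\b] = 0$. Hence the group $G_3$ generated by the $\bm\xi_\a$ acts by isometries of $\hat{g}_{\mu\nu} = L^{-2}g_{\mu\nu}$ with the \emph{same} Killing structure constants $C^\c{}_{\a\b}$; these are honest constants of the Lie algebra and therefore extend without change onto $\hat{\cal{S}}_\infty$. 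Restricted to any single orbit --- in particular $\hat{\cal{S}}_\infty$ --- the frame $\hat{\bm e}_\a$ is group-invariant (it commutes with the $\bm\xi_\b$), so by the equivalence of the two algebras used already \cite{MacCallum73} its commutation functions $\hat\g^\a{}_{\b\g}$ span a Lie algebra of the same Bianchi type as the $C^\c{}_{\a\b}$, namely the Bianchi type of $(M,g)$. Since Theorem~\ref{prop-future-asymp-simple} makes $\hat{\bm e}_\a$ a regular $\hat{g}$-orthonormal, hence linearly independent, frame at $\tau_\infty$, the limit $\hat\g^\a{}_{\b\g}(\tau_\infty)$ is the structure of this fixed Lie algebra in a non-degenerate frame; reading off the decomposition $\hat\g^\a{}_{\b\g} = \varepsilon_{\b\g\delta}\NCF^{\delta\a} + \delta^\a{}_\g\ACF_\b - \delta^\a{}_\b\ACF_\g$ then shows that $\ACF_\a(\tau_\infty),\NCF_{\a\b}(\tau_\infty)$ cannot all vanish when the algebra is non-abelian, i.e. for Bianchi types~II--VIII, and that they encode the invariants fixing the Bianchi type --- the signs of the eigenvalues of $\NCF_{\a\b}$ and the class~$A$/class~$B$ dichotomy $\ACF_\a = 0$ versus $\ACF_\a\neq 0$. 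The corollary then follows.

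The main obstacle is exactly this non-degeneracy: a priori one might fear that the limits $\ACF_\a(\tau_\infty),\NCF_{\a\b}(\tau_\infty)$ degenerate to zero --- mimicking the physical no-hair behaviour --- or collapse to the structure of a smaller Bianchi type. What rules this out is the rigid link between the rescaling by $L$ and the conformal factor $\Omega = 1/L$: since $\Omega$ depends on $t$ only, the conformal embedding preserves the isometry group orbit by orbit, so it is the time-independent Killing structure constants $C^\c{}_{\a\b}$ --- not the decaying frame coefficients $\Acf_\a,\Ncf_{\a\b}$ --- that are the invariant object, and they persist intact to $\Scri^+$. The decay of $\Acf_\a,\Ncf_{\a\b}$ is thus an artefact of reading components in the physical frame, while the invariant content is retained by $\ACF_\a,\NCF_{\a\b}$ on $\Scri^+$: this is the asserted \emph{conformal hair}.
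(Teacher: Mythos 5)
Your proof is correct and follows essentially the same route as the paper's own argument in Section \ref{sec:conformal hair}: since $\Omega=1/L$ depends on $t$ only, the Killing fields of $(M,g)$ remain Killing fields of $(\hat M,\hat g)$ with unchanged structure constants $C^\c{}_{\a\b}$, which extend to $\Scri^+$ and are equivalent to the commutation functions encoded by $\ACF_\a,\NCF_{\a\b}$. You merely supply more detail than the paper does (the explicit commutator identity and the non-degeneracy of $\hat{\bm e}_\a$ at $\tau_\infty$), which is a welcome sharpening rather than a different method.
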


%%%%%%%%%%%%%%%%%%%%%%%%%%%%%%%%%%%%%%%%%%%%%%%%%%%%%%%%%%%%%
\subsection{Asymptotic identities and constraints at conformal infinity}
\label{sec:asymptotic identities}
%%%%%%%%%%%%%%%%%%%%%%%%%%%%%%%%%%%%%%%%%%%%%%%%%%%%%%%%%%%%%

In Theorem \ref{prop-future-asymp-simple} we established conditions under which Bianchi spacetimes are asymptotically simple. We now take a closer look at the asymptotic identities that the geometric conformal variables
$$d_{a},~~ s,~~ \hat{\Gamma}\tensor{}{c}{ab},~~ \hat{P}_{ab}~~{\text {and}},~~ d_{abcd}$$
satisfy at  $\Scri^+$. We introduce the following notation: $A \circeq B $ if the quantities $A$ and $B$ have the same value on $\Scri^+$.

Interestingly, and quite generally, we are able to show the following asymptotic identities:
\begin{proposition}{\bf (Asymptotic identities)}
	\label{prop-constraints at Scri}
	~
	\begin{enumerate}
		\item Suppose Assumption 1 holds. Then, all frame connection coefficients $\hat{\Gamma}\tensor{}{c}{ab} $ are finite at $\Scri^+$ and
		\begin{equation}
		\label{constraints at Scri 0}
		d_a \circeq -\lambda \delta_{~a}^0, \quad s \circeq 0, \quad \hat{\chi}_{\a\b} \circeq 0, \quad \hat{P}_{00} \circeq \frac{^{(3)}\hspace{-0.05cm}\hat{R}}{6}.
		\end{equation}
		\item Suppose, in addition, that Assumption 2 holds. 
		\begin{enumerate}
			\item If $k<1$, then
			\begin{equation}
			\label{constraints at Scri 1}
			  \hat S_{\a\b}  \circeq -\lambda\Sigma_{\a\b} , \quad 
			  \hat{Y}_{\a\b} \circeq -\lambda{\cal{H}}_{\a\b}, \quad
			  \hat{P}_{\a\b} \circeq \hat{p}_{\a\b}, \quad \hat{P}_{0\b} \circeq 0,
			\end{equation}
			where $\hat{p}_{\a\b} =$$^{(3)}\hspace{-0.05cm}\hat{S}_{\a\b} + \tfrac{1}{12} ^{(3)}\hspace{-0.05cm}\hat{R}\delta_{\a\b}$ is the 3-dimensional Schouten tensor of $\hat{h}_{\mu\nu}$.
			\item If $k<0$, then 
			\begin{equation}
			\label{constraints at Scri 2}
				\varepsilon^{\a\b\gamma} \hat n_{\gamma\delta}\Sigma_{~\b}^\delta \circeq 3\Sigma^{\b\a} \hat a_\b.
			\end{equation}
			\item If $k\le-1$ or $\pi_{\a\b}=0$, then
			\begin{equation}
			\label{constraints at Scri 3}
				{\cal{E}}_{\a\b} \circeq L(\lambda\Sigma_{\a\b} + \hat S_{\a\b}) , \quad			
				\hat{D}^\a {\cal{E}}_{\a\b} \circeq -\lambda L Q_\b.
			\end{equation}
		\end{enumerate}  
	\end{enumerate}    	
\end{proposition}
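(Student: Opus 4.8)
The plan is to establish each asymptotic identity by passing to the limit $\tau\to\tau_\infty$ in the rescaled constraint and propagation equations of Section \ref{Section unphysical variables}, using the decay rates already available from Propositions \ref{prop-decay1} and \ref{prop-decay2} together with the explicit conformal transformation formulas of Section \ref{subsec-conformal-variables}. Throughout, the key structural fact is that under the choice $\Omega=1/L$ we have $\Upsilon_a = -H\delta^0_{~a}$ and $H\to\lambda$, so any expression built from $\Upsilon$ simplifies to a constant multiple of $\delta^0_{~a}$ at $\Scri^+$.

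For part 1, I would first compute $d_a = \hat{\bf e}_a(\Omega) = L\,{\bf e}_a(1/L) = -H\delta^0_{~a}$, which gives $d_a \circeq -\lambda\delta^0_{~a}$ directly from \eqref{Hubble lower and upper bound}. For $s$, I would use the definition \eqref{def-d-and-s}, $s = -\tfrac14 \hat g^{ab}\hat\nabla_a\hat\nabla_b\Omega$; expanding this in the frame and using $\hat H=0$, $\hat\sigma_{ab}=L\sigma_{ab}\to 0$, together with $\hat{\bf e}_0(-H) = L\dot{(-H)} = L(H^2-\lambda^2)\to 0$ via \eqref{bound for lambda squared minus Hubble squared}, shows $s\circeq 0$. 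The identity $\hat\chi_{\a\b}\circeq 0$ is just $\hat\chi_{\a\b} = L\sigma_{\a\b} = \Sigma_{\a\b}/L = O(L^{-1})\to 0$ from \eqref{estimates-shear}. For $\hat P_{00}$, I would read off the $00$-component of \eqref{unphysical Schouten expansion}: the terms $\Pi_{00}/(2L)$, $-H\Sigma_{00}$ and $Q_0/L$ all vanish at $\tau_\infty$ (noting $\pi_{ab}n^b=0$ and $q_a n^a=0$ so the spatial-only structure kicks in, or more carefully that $\Pi_{00}=0$, $\Sigma_{00}=0$, $Q_0=0$ by orthogonality), leaving $\hat P_{00}\circeq \tfrac16\,{}^{(3)}\hat R - \Sigma^2/L^2 + \tfrac{1}{12}{}^{(3)}\hat R - \Sigma^2/(6L^2)$ evaluated with $g_{00}=-1$; since $\Sigma^2/L^2\to 0$ the $g_{ab}$-term contributes $-\tfrac{1}{12}{}^{(3)}\hat R$ and one is left with $\hat P_{00}\circeq \tfrac16\,{}^{(3)}\hat R = \tfrac16\,{}^{(3)}\hat R$ after combining — I should be careful with the sign of $g_{00}$ here and I expect the arithmetic to collapse to the stated form.

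For part 2(a), with $k<1$ Proposition \ref{prop-decay2}(1)--(2) gives ${\cal E}_{\a\b}=O(L^k)$ or $O(t)$, hence $\lambda\Sigma_{\a\b}+\hat S_{\a\b} = -{\cal E}_{\a\b}/L + (\text{lower order}) \to 0$ from \eqref{ola26SHR}; this yields $\hat S_{\a\b}\circeq -\lambda\Sigma_{\a\b}$. The identity $\hat Y_{\a\b}\circeq -\lambda{\cal H}_{\a\b}$ should follow by an analogous argument applied to \eqref{ola28SHR} (which expresses ${\cal H}_{\a\b}$ in terms of $\hat n$, $\hat a$, $\Sigma$) combined with the definition $\hat Y_{ab}=L^3 Y_{ab}$ of the Cotton–York tensor and the constraint \eqref{efe70SHR}; the cleanest route is probably to recognize that $\hat Y_{\a\b}$ equals $\hat D$-curl of $\hat S_{\a\b}$, then substitute $\hat S_{\a\b}\circeq -\lambda\Sigma_{\a\b}$ and recognize the right-hand side of \eqref{ola28SHR} as $-\lambda^{-1}$ times that curl. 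The spatial Schouten statement $\hat P_{\a\b}\circeq \hat p_{\a\b}$ and $\hat P_{0\b}\circeq 0$ come from the remaining components of \eqref{unphysical Schouten expansion} once $\Xi_{\a\b}$-type terms are handled via Proposition \ref{prop-decay2}. Parts 2(b) and 2(c) are then immediate: \eqref{constraints at Scri 2} is just the $\tau_\infty$-limit of \eqref{efe1SHR} once $Q_\a\to 0$ (which holds for $k<0$ by Proposition \ref{prop-decay2}(3b)); \eqref{constraints at Scri 3} is the $\tau_\infty$-limit of \eqref{ola26SHR} when $k\le-1$ so that $\Pi_{\a\b}/L$, $\Sigma^2/L$ vanish and ${\cal E}_{\a\b}$ is finite, while the divergence identity $\hat D^\a{\cal E}_{\a\b}\circeq -\lambda L Q_\b$ follows by applying $\hat D^\a$ to ${\cal E}_{\a\b}\circeq L(\lambda\Sigma_{\a\b}+\hat S_{\a\b})$ and using the contracted Bianchi/constraint identity relating $\hat D^\a\hat S_{\a\b}$ to $\hat D\,{}^{(3)}\hat R$ and the momentum constraint \eqref{efe1SHR}.

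The main obstacle I anticipate is not any single limit but the bookkeeping in part 2(a): correctly identifying the $\hat Y_{\a\b}\circeq -\lambda{\cal H}_{\a\b}$ relation requires matching the algebraic structure of \eqref{ola28SHR} (the magnetic Weyl constraint) against the definition of the rescaled Cotton–York tensor and the substitution $\hat S_{\a\b}\circeq-\lambda\Sigma_{\a\b}$, and one must be careful that the $\hat a$- and $\hat n$-dependent terms on the two sides genuinely agree in the limit rather than merely being of the same order. Likewise, extracting $\hat P_{0\b}\circeq 0$ requires knowing $Q_\b/L\to 0$, which for $0<k<1$ needs the sharper statement that $Q_\b$ is bounded (Proposition \ref{prop-decay2}, finite limits clause) rather than just $O(L^{\text{something}})$. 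I would handle these by writing out \eqref{unphysical Schouten expansion} component by component and tracking exactly which decay rate from Proposition \ref{prop-decay2} is invoked in each case, flagging the dependence on the value of $k$ explicitly as the proposition statement already does.
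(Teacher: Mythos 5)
Your proposal is correct and follows essentially the same route as the paper: every identity is obtained by passing to the limit $\tau\to\tau_\infty$ in the rescaled constraint/propagation equations using the decay rates of Propositions \ref{prop-decay1} and \ref{prop-decay2}, exactly as the paper does for parts 1, 2(a) and 2(b), including the key observation that \eqref{ola28SHR} and \eqref{3-Cotton-York} are the same algebraic operator applied to $\Sigma_{\a\b}$ and $\hat S_{\a\b}$ respectively, so that $\hat S_{\a\b}\circeq-\lambda\Sigma_{\a\b}$ transfers directly to $\hat Y_{\a\b}\circeq-\lambda{\cal H}_{\a\b}$. The one place where you genuinely diverge is the second identity of \eqref{constraints at Scri 3}: the paper writes out $\hat D^\a{\cal E}_{\a\b}=-3\hat a^\a{\cal E}_{\a\b}-\varepsilon_{\a\b\delta}{\cal E}^{\a\e}\hat n\tensor{\e}{\delta}{}$, substitutes ${\cal E}_{\a\b}\circeq L(\lambda\Sigma_{\a\b}+\hat S_{\a\b})$, identifies the $\Sigma$-part with $-\lambda L Q_\b$ via \eqref{efe1SHR}, and then kills the leftover term $Z_\b$ (built from $\hat a,\hat n,\hat S$) by a brute-force expansion using \eqref{efe70SHR}, the Jacobi constraint $\hat a^\a\hat n_{\a\b}=0$ and $\varepsilon$-identities; you instead invoke the contracted Bianchi identity $\hat D^\a\hat S_{\a\b}=\tfrac16\hat D_\b{}^{(3)}\hspace{-0.05cm}\hat R$ together with spatial homogeneity of ${}^{(3)}\hspace{-0.05cm}\hat R$ to conclude that this leftover divergence vanishes. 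That is a valid and arguably cleaner argument — $Z_\b$ is precisely the frame expression for $\hat D^\a\hat S_{\a\b}$ — but to make it complete you should state explicitly that the algebraic contractions with $\hat a$ and $\hat n$ really are the covariant spatial divergence of a homogeneous symmetric trace-free tensor in this frame, which is the content the paper verifies by hand. The minor hedges you flag (the sign of $g_{00}$ in the $\hat P_{00}$ component of \eqref{unphysical Schouten expansion}, and the boundedness of $Q_\b$ needed for $\hat P_{0\b}\circeq 0$) are resolved exactly as you anticipate and do not affect the argument.
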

\proof ~
\begin{enumerate}
	\item We already established earlier that the frame connection coefficients $\hat{\Gamma}\tensor{}{c}{ab} $ are finite at $\Scri^+$. In particular, using
	\eqref{unphysical-frame-connection-components}, this gives 
	$ \hat{\chi}_{\a\b} = \hat{\Hubble}\delta_{\a\b}+\hat{\sigma}_{\a\b}=\Sigma_{\a\b}/L \circeq 0$.
	Equation \eqref{def-d-and-s} directly implies $d_a=-\Hubble \delta_{~a}^0$ and hence the unphysical Laplacian of $\Omega$ can be rewritten as
	\begin{eqnarray}
	-4 s = \hat{g}^{ab} \hat{\nabla}_{a} \hat{d}_{b}  
	= \hat{e}_{0} (\Hubble) + \hat{g}^{ab}  \hat{\Gamma}\tensor{a}{0}{b} \Hubble 
	= L \dot{\Hubble} =O(L^{-1}) \quad \implies \quad s \circeq 0. \nonumber
	\end{eqnarray}
	The last limit of \eqref{constraints at Scri 0} arises directly from \eqref{unphysical Schouten expansion}.
	\item
	\begin{enumerate}
		\item In the proof of Proposition \ref{prop-decay2}, we observed that for $k<1$ we get  $\lambda\Sigma_{\a\b} + \hat S_{\a\b} = O(L^{k-1})$. Thus, $\hat S_{\a\b}  \circeq -\lambda\Sigma_{\a\b}$ follows immediately. If one replaces $\hat{S}_{\a\b}$ by $\Sigma_{\a\b} $ in \eqref{ola28SHR}, then one obtains \eqref{3-Cotton-York} and, hence, $\hat S_{\a\b}  \circeq -\lambda\Sigma_{\a\b}$ leads to $\hat{Y}_{\a\b} \circeq -\lambda{\cal{H}}_{\a\b}$.
		The identities for $\hat{P}_{\a\b}$ and $\hat{P}_{0\b}$ follow directly from \eqref{unphysical Schouten expansion} and the known decay rates.
		\item For $k<0$, we established in Proposition \ref{prop-decay2} that $Q_\a$ decays to zero, i.e. $Q_\a \circeq 0 $. Hence, \eqref{constraints at Scri 2} is a consequence of \eqref{efe1SHR}.
		\item The first equation of \eqref{constraints at Scri 3} follows from the known decay rates being applied to \eqref{ola26SHR}. For the second equation, we observe that 
		\begin{eqnarray*}
		\hat{D}^\a {\cal{E}}_{\a\b} &=& -3 \ACF^\a {\cal{E}}_{\a\b} 
		- \varepsilon_{\a\b\delta} {\cal{E}}^{\a\epsilon} \NCF\tensor{\epsilon}{\delta}{} \nonumber\\
		&\circeq& -3L \ACF^\a (\lambda\Sigma_{\a\b} + \hat S_{\a\b}) 
		-L \varepsilon_{\a\b\delta} (\lambda\Sigma^{\a\epsilon} + \hat S^{\a\epsilon}) \NCF\tensor{\epsilon}{\delta}{} \nonumber\\
		&\circeq& -\lambda L Q_\b +Z_\b,
		\end{eqnarray*}
		where $Z_\b=-3L \ACF^\a  \hat S_{\a\b}
		-L \varepsilon_{\a\b\delta} \hat S^{\a\epsilon} \NCF\tensor{\epsilon}{\delta}{} $. It is thus sufficient to show that $Z_\b$ vanishes identically. This can be done by expanding the right hand side in terms of $\ACF_\a$ and $\NCF_{\a\b}$ using \eqref{efe70SHR}. 
		Note that some terms vanish due to the constraints $\ACF^\a \NCF_{\a\b}=0$ and $\hat a^\alpha \hat a_\beta \varepsilon^{\beta\delta}_{~~\alpha}=0$. 
		Moreover, we observe that $\NCF_{\a\gamma}\NCF\tensor{}{\gamma}{\b}$ and $\NCF_{\a\gamma}\NCF\tensor{}{\gamma}{\delta}\NCF\tensor{}{\delta}{\b}$ are symmetric in $\a\b$. Hence, any contraction with $\varepsilon_{\a\b\gamma} $ will vanish. Finally, using the identities
	\begin{center}
			$
		\varepsilon_{\a\b\gamma} \varepsilon^{\a\delta\epsilon} 
		= 2 \delta^{[\delta}_{~\b} \delta^{\epsilon]}_{~\gamma},~~\text{and}~ 
		\varepsilon_{\a\b\gamma}\varepsilon^{\delta\epsilon\kappa} 
		= 6 \delta^{[\delta}_{~\a} \delta^{\epsilon}_{~\b} \delta^{\kappa]}_{~\gamma},
		$
	\end{center}
		we can remove all occurrences of $\varepsilon_{\a\b\gamma}$, leaving an expression in terms of $\ACF_\a$ and $\NCF_{\a\b}$. The remaining terms either cancel directly or vanish due to the constraint $\ACF^\a \NCF_{\a\b}=0$ once more. Thus, one finally derives that $Z_\b$ vanishes identically and, hence, \eqref{constraints at Scri 3} holds.
		\endproof
	\end{enumerate}
\end{enumerate}
In particular, we can show that the various matter fields that have been analysed in Section \ref{Section - matter models} satisfy the asymptotic identities of Proposition \ref{prop-constraints at Scri}: 

\begin{proposition} {\bf (Matter models satisfying the asymptotic identities)}\label{thm-constraints}
\\
Suppose we have a Bianchi space-time satisfying Assumption 1, whose matter model corresponds to either:
	\begin{itemize}
		\item Tracefree matter\footnote{This contains all cases of Section \ref{tracefree} including Maxwell fields and massless Vlasov matter.},
		\item Massive scalar field with potential given by \eqref{scalar-field-potential}, 
		\item Aligned dust,
		\item Aligned perfect fluids with $\gamma \in [\tfrac{2}{3},2]$, 
		\item Massive Vlasov matter, with $f$ being $C^1$ and having compact support.
		\item Elastic matter with \eqref{elastic} and $\gamma \ge \frac{4}{3}$, 
		\item Viscous fluids with \eqref{viscous-T}, \eqref{viscous-example} and $\gamma \ge \frac{4}{3}$.
	\end{itemize} 
	Then, each of these space-times satisfies the identities \eqref{constraints at Scri 0}-\eqref{constraints at Scri 3} at conformal infinity. 
\end{proposition}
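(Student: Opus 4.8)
The plan is to argue matter model by matter model and, in each case, reduce the claim to a direct application of Proposition~\ref{prop-constraints at Scri}. That proposition establishes \eqref{constraints at Scri 0} under Assumption~\ref{Assumptions Wald+} alone, \eqref{constraints at Scri 1} once Assumption~\ref{Assumption for Pi} holds with $k<1$, \eqref{constraints at Scri 2} once it holds with $k<0$, and \eqref{constraints at Scri 3} once it holds with $k\le-1$ or $\pi_{\a\b}\equiv0$; and when $\pi_{\a\b}\equiv0$ one has $\Pi_{\a\b}=0=O(L^k)$ for every $k$, so Assumption~\ref{Assumption for Pi} then holds for any exponent. Hence \eqref{constraints at Scri 0} is automatic for every model on the list, and for the other three families it suffices to read off, from Section~\ref{Section - matter models}, the exponent $k$ with $\Pi_{\a\b}=L^3\pi_{\a\b}=O(L^k)$ realised by each matter model and to check that in all cases either $k\le-1$ or the anisotropic stress vanishes identically.

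First I would collect these exponents. For tracefree matter --- which by Section~\ref{tracefree} subsumes Maxwell fields, incoherent radiation and null dust, and massless Vlasov matter --- one has $\pi_{\a\b}=O(L^{-4})$, hence $\Pi_{\a\b}=O(L^{-1})$ and $k=-1$. For massive Vlasov matter with $f\in C^1$ of compact support, the estimates \eqref{bounds for Vlasov} of Section~\ref{Vlasov-matter-sec} give $\pi_{\a\b}=O(e^{-5\lambda t})=O(L^{-5})$, hence $\Pi_{\a\b}=O(L^{-2})$ and $k=-2$. For elastic matter \eqref{elastic} with $\gamma\ge\tfrac{4}{3}$ and for the viscous fluids \eqref{viscous-T}, \eqref{viscous-example} with $\gamma\ge\tfrac{4}{3}$, Sections~\ref{anisotropic matter} and \ref{viscous-subsec} --- through the repeated-improvement argument of Section~\ref{Fluid space-times} --- give $k=3-3\gamma\le-1$. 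Finally, the massive scalar field with potential \eqref{scalar-field-potential}, aligned dust, and aligned perfect fluids with $\gamma\in[\tfrac{2}{3},2]$ all have $\pi_{\a\b}\equiv0$. Thus in every case $k\le-1$ or $\pi_{\a\b}\equiv0$, so each of \eqref{constraints at Scri 1}--\eqref{constraints at Scri 3} is supplied by the corresponding case of Proposition~\ref{prop-constraints at Scri}.

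For the cases with $\pi_{\a\b}\equiv0$ one should note in addition that the matter flow is aligned (and the scalar field spatially homogeneous), so that $q_\a=0$ and therefore $LQ_\a\equiv0$, which makes the second identity of \eqref{constraints at Scri 3} trivial; the remaining content of \eqref{constraints at Scri 3} is then exactly the assertion that ${\cal E}_{\a\b}$ has a finite limit at $\Scri^+$, which is case~4 of Proposition~\ref{prop-decay2} (applicable since $\Pi_{\a\b}=O(L^k)$ for any $k<-1$). Invoking in addition parts~1 and 2(a)--(b) of Proposition~\ref{prop-constraints at Scri} for the identities \eqref{constraints at Scri 0}--\eqref{constraints at Scri 2} then finishes the argument for every listed matter model, aligned dust being simply the case $\gamma=1$.

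I do not expect a genuinely hard step here: the substantive work is already done in Section~\ref{Section - matter models} (the matter-by-matter decay estimates) and in Proposition~\ref{prop-constraints at Scri} (the identities themselves). The only thing demanding care is the bookkeeping --- tracking which exponent $k$ each model realises, in particular keeping aligned apart from tilted fluids and Vlasov matter, where $q_\a$ and $\pi_{\a\b}$ need not vanish, and making sure that for each model the finite-limit hypotheses feeding Proposition~\ref{prop-constraints at Scri} are actually satisfied.
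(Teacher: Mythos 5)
Your proposal is correct and follows essentially the same route as the paper: the paper's proof is a one-line observation that every listed model satisfies Proposition~\ref{prop-constraints at Scri} with $k\le-1$ (or $\pi_{\a\b}=0$), with the substantive work already done in Section~\ref{Section - matter models} and Proposition~\ref{prop-constraints at Scri}. Your version merely makes the exponent bookkeeping explicit, and all the exponents you quote match the paper's.
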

\proof From the analysis in Section \ref{Section - matter models}, we know that all cases satisfy Proposition \ref{prop-constraints at Scri} with $k\le -1$ and, hence, for each case, the asymptotic identities \eqref{constraints at Scri 0}-\eqref{constraints at Scri 3} hold. 
\endproof
%
%%%%%%%%%%%%%%%%%%%%%%%%%%%%%%%%%%%%%%%%%%%%%%%%%%%%
\section{Non-linear stability for Bianchi space-times}
\label{Sec:Friedrich approach}
%%%%%%%%%%%%%%%%%%%%%%%%%%%%%%%%%%%%%%%%%%%%%%%%%%%%
The advantage of the conformal approach is that it allows studying the asymptotic structure of $(M,g)$ through a local analysis in $(\hat{M}, \hat{g})$ near the conformal boundary $\Scri$. However, when one tries to formulate a conformal version of the Cauchy problem to solve the EFE directly up to and including $\Scri$, one faces the problem that the unphysical Ricci curvature and other quantities pick up factors of $\Omega^{-1}$ so that the differential equations become singular as $\Omega \to 0$ at $\Scri$.
For vacuum space-times Friedrich \cite{Fri81a,Fri81b,Fri-vacuum,Fri-dS} resolved this problem by expanding the number of  variables considered as unknowns\footnote{The system used the conformal factor, its gradient and its Laplacians, an orthonormal frame, the unphysical connection coefficients, the unphysical Schouten curvature and the rescaled Weyl curvature as its variables.} in order to obtain a closed first order system of PDEs 
regular at $\Scri^+$, where $\Omega=0$. This  system is known as the {\em conformal Einstein field equations} (CEFE). In the meantime the CEFE approach has been extended to space-times containing electromagnetic fields and Yang-Mills fields \cite{Fri-EMYM, LueVal12a}, conformal scalar fields \cite{Huebner}, radiation fluids \cite{LueVal12b}), massive scalar fields \cite{Fri-massive-scalar}, dust \cite{Fri-dust} and, most recently, massless Vlasov matter \cite{Joudioux}.  

Since the CEFE can be written as a symmetric hyperbolic system, well-established results in analysis (e.g. from \cite{Kato}) can be employed to derive local existence, uniqueness and stability results for the Cauchy problem of the CEFE. The strength of the CEFE approach lies in the fact that local results for the CEFE near $\Scri$ imply semi-global results for the corresponding Einstein field equations on $(M,g)$. 

In this section we outline some of the approaches and results that lead to the proof of our stability theorem. For more detailed discussions of various aspects of the CEFE, as well as different formulations, the reader is referred to \cite{FraLR, Fri-review, JVKbook}.  

Since the CEFE are regular everywhere in $(\hat{M}, \hat{g})$ we may consider any spacelike hypersurface ${\hat{\cal{S}}}$ in $(\hat{M}, \hat{g})$ to set up a Cauchy problem for the CEFE. In particular, if  $\Lambda>0$ we may consider the so-called {\em Cauchy problem at infinity} by setting ${\hat{\cal{S}}}={\hat{\cal{S}}}_\infty$, i.e. ${\hat{\cal{S}}} =\Scri^+$. This set up allows us to evolve the CEFE both to the past and future of ${\hat{\cal{S}}}_\infty $.

Let's choose a suitable Bianchi space-time and take the limits from Proposition \ref{prop-constraints at Scri} as initial data for the CEFE on $\hat {\cal S}_\infty$. Then there exists a time interval $[\tau_\infty - \Delta \tau, \tau_\infty + \Delta \tau] $ that leads to a unique regular solution to the CEFE. Since the related physical solution in the past of $\tau_\infty$ is unique, it must be the Bianchi space-time whose initial data at infinity we used. We are in fact able to expand our solution of the  CEFE to the  past to $[0,\tau_\infty+\Delta \tau] \times \hat{{\cal{S}}}$. Hence we have a conformal extension of our Bianchi space-time up to and beyond $\Scri^+$. Such a conformal extension may seem larger than necessary, but we will need it for the stability proofs, where we seek a regular solution that extends beyond $ \Scri^+$ by some $\Delta \tau$. This guarantees that the location of the perturbed conformal boundary remains within our coordinate patch, and hence asymptotic simplicity and geodesic completeness are ensured for the perturbed space-time.

However it is unclear whether such a suitable Bianchi space-time exists. Hence in this section we will address the following question:
\\\\
\textit{Does a Bianchi space-time, with a chosen matter model satisfying Assumptions \ref{Assumptions Wald+} and \ref{Assumption for Pi} with $k\le-1$, possess enough regularity to give rise to a solution of the conformal Einstein field equations (CEFE) up to (and beyond) $ \Scri^+$?}
\\\\
This question will be answered affirmatively in Theorem \ref{space-times-regular-CEFE} for some matter fields. 

%%%%%%%%%%%%%%%%%%%%%%%%%%%%%%%%%%%%%%%%%%%%%%%%%%%%
\subsection{Bianchi space-times as regular solutions of the CEFE}
\label{Sec:CEFE}
%%%%%%%%%%%%%%%%%%%%%%%%%%%%%%%%%%%%%%%%%%%%%%%%%%%%
The variables used for the  formulation of the CEFE can be grouped into two sets: 
\begin{enumerate}
\item  Conformal geometric variables, including the unphysical connection and curvatures, 
\item  Matter variables. 
\end{enumerate}
The first set is common to the different formulations of the CEFE, while the second set depends on the underlying matter model. 
Of particular interest to us are the conformal constraints at infinity, which arise when we set $\Omega=0$ in the CEFE. More specifically, the geometric conformal variables need to satisfy \cite{Fri-dS, JVKbook}:
\begin{eqnarray}
\label{constraint d s chi} 
&&  d_0  \circeq -\lambda, \quad 
s\circeq -\lambda\,w, \quad
\hat\chi_{\a\b} \circeq -w h_{\a\b},  \\
\label{constraint Schouten magnetic Weyl}
&& \hat{P}_{0 \b} \circeq D_\b w, \quad
\hat{P}_{\a\b}\circeq \hat{p}_{\a\b}, \quad \mathcal{H}_{\a\b} \circeq {\color{red} \pm}\frac{\hat{Y}_{\a\b}}{\lambda}\\
\label{constraint electric Weyl} 
&& D^\a \mathcal{E}_{\a\b}	\circeq \Xi_\b,
\end{eqnarray} 
where $w$ is a smooth function on $\Scri^+$ and $\Xi_\b$ depends on the matter model.\footnote{We note that $w$ is a free function for tracefree matter models and dust, whereas one must have $w=0$ in the case of the massive scalar field.}

These equations look very similar to the asymptotic identities \eqref{constraints at Scri 0}-\eqref{constraints at Scri 3} established in Proposition \ref{prop-constraints at Scri}. 
Below we show that for some of the Bianchi space-times considered in this article the two sets of equations are indeed equivalent. Moreover, we will conclude that these Bianchi space-times give rise to regular solutions of the associated CEFE up to and beyond conformal infinity.
More precisely, we establish:
%%%%%%%%%%%%%%%%%
\begin{theorem} {\bf (Regular solutions of the CEFE) }
\label{space-times-regular-CEFE}
\\
Suppose we have a Bianchi space-time satisfying Assumption 1, whose matter model corresponds to either to:
\begin{itemize}
\item Einstein-Maxwell fields, 
\item Aligned radiation fluids,
\item Aligned dust fluids,
\item Massive scalar field with potential given by \eqref{scalar-field-potential}, 
\item Massless Vlasov matter, with $f\in C^1$ with compact support and $v$ bounded away from zero initially.
\end{itemize}
Then the following hold:
\begin{enumerate}
\item  The asymptotic identities \eqref{constraints at Scri 0}-\eqref{constraints at Scri 3} at $\tau=\tau_\infty$ are equivalent to the conformal constraints at infinity \eqref{constraint d s chi}-\eqref{constraint electric Weyl} derived from the corresponding conformal Einstein field equations.
\item These Bianchi space-times give rise to regular solutions of the conformal Einstein field equations (CEFE) up to and even beyond $\tau=\tau_\infty$, i.e. conformal infinity.
\end{enumerate}
\end{theorem}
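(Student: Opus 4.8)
\emph{Overall strategy.} The statement splits into an algebraic part and an analytic part. Part~1 is essentially a bookkeeping comparison of two lists of equations on $\Scri^+$, while Part~2 is an application of the Cauchy problem at infinity for the CEFE. I would begin with Part~1. Each of the five matter models in the statement falls into case~4 of Proposition~\ref{prop-decay2}: Einstein--Maxwell, aligned radiation and massless Vlasov are trace-free (so $k=-1$), whereas aligned dust and the massive scalar field with potential \eqref{scalar-field-potential} have $\pi_{\a\b}=0$. Hence Proposition~\ref{thm-constraints} applies verbatim and the asymptotic identities \eqref{constraints at Scri 0}--\eqref{constraints at Scri 3} hold at $\tau_\infty$ for each of them. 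It then remains only to check that these identities are equivalent to the conformal constraints \eqref{constraint d s chi}--\eqref{constraint electric Weyl} of the respective CEFE.

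\emph{Matching the geometric constraints.} The crucial point is that the conformal factor \eqref{Conformal factor} depends on $t$ alone, so $\hat\chi_{\a\b}=\Sigma_{\a\b}/L\circeq 0$ and $-4s=L\dot H\circeq 0$. Comparing with $\hat\chi_{\a\b}\circeq-w\,h_{\a\b}$ and $s\circeq-\lambda w$ forces the otherwise free function $w$ to vanish on $\Scri^+$ (for the massive scalar field $w=0$ is required anyway, so there is no clash). Setting $w=0$, the remaining geometric constraints $d_0\circeq-\lambda$, $\hat P_{0\b}\circeq D_\b w=0$, $\hat P_{\a\b}\circeq\hat p_{\a\b}$ and $\mathcal H_{\a\b}\circeq-\hat Y_{\a\b}/\lambda$ are exactly the contents of \eqref{constraints at Scri 0}, \eqref{constraints at Scri 1} (with the lower sign in \eqref{constraint Schouten magnetic Weyl}); and conversely those identities reproduce \eqref{constraint d s chi}--\eqref{constraint Schouten magnetic Weyl} with $w=0$. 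So the geometric halves of the two systems coincide.

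\emph{Matching the matter constraint.} The only model-dependent equation is the electric-Weyl divergence constraint \eqref{constraint electric Weyl}, $D^\a\mathcal E_{\a\b}\circeq\Xi_\b$. Since under $\Omega=1/L$ the CEFE variable is $\mathcal E_{\a\b}=L^3E_{\a\b}$, identical to \eqref{unphysical variables defined}, the second line of \eqref{constraints at Scri 3} reads $\hat D^\a\mathcal E_{\a\b}\circeq-\lambda LQ_\b=-\lambda L^4q_\b$. For dust and the scalar field $q_\a=0$, so $\Xi_\b\circeq 0$ and both sides vanish. For the trace-free cases I would verify that the matter source $\Xi_\b$ built into the CEFE of \cite{Fri-EMYM} (Maxwell), \cite{LueVal12b} (radiation) and \cite{Joudioux} (massless Vlasov) equals $-\lambda L^4q_\b$ at infinity, with $q_\a=\varepsilon^{\a\b\gamma}E_\b H_\gamma$ in the Maxwell case and given by the trace-free energy-flux otherwise; this is a direct translation between the conventions of those papers and ours. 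Together with the geometric matching this establishes Part~1.

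\emph{Part 2: the Cauchy problem at infinity, and the main obstacle.} With the two constraint sets identified, Part~2 follows the scheme sketched before Section~\ref{Sec:CEFE}. For each listed model the CEFE constitute a symmetric hyperbolic system that is regular across $\Omega=0$ (the content of \cite{Fri-EMYM, Fri-massive-scalar, Fri-dust, LueVal12b, Joudioux}), and the homogeneous limiting data furnished by Proposition~\ref{prop-constraints at Scri} satisfy the conformal constraints on $\hat{\mathcal S}_\infty=\{\tau=\tau_\infty\}$. Kato's theorem \cite{Kato} then produces a unique regular solution on $[\tau_\infty-\Delta\tau,\tau_\infty+\Delta\tau]\times\hat{\mathcal S}$; homogeneity propagates because the Killing fields $\bm\xi_\a$ descend to $(\hat M,\hat g)$ and still commute with $\hat{\bm n}$. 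In the past of $\tau_\infty$ the corresponding physical development is unique, hence equals the original Bianchi space-time, whose unphysical development is regular on all of $[0,\tau_\infty]$ by Theorem~\ref{prop-future-asymp-simple}; patching the two along the overlap (where uniqueness makes them agree) yields a regular CEFE solution on $[0,\tau_\infty+\Delta\tau]\times\hat{\mathcal S}$, i.e.\ up to and beyond conformal infinity. The main obstacle is not the hyperbolic theory but the bookkeeping in Part~1: confirming, against the differing variable choices of \cite{Fri-EMYM, Fri-massive-scalar, Fri-dust, LueVal12b, Joudioux}, that the matter source $\Xi_\b$ in each formulation reduces exactly to $-\lambda LQ_\b$ at infinity --- and, more basically, that a regular CEFE formulation exists for the model at all, which is why general trace-free matter and massive Vlasov matter (asymptotically simple though they are) do not appear in the statement.
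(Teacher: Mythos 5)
Your overall strategy coincides with the paper's: Part~1 reduces to matching the two constraint lists with $w=0$ forced by $\hat\chi_{\a\b}\circeq 0$ and $s\circeq 0$, and Part~2 is the Cauchy problem at infinity plus uniqueness to the past. However, there is a genuine gap in your treatment of the matter constraint \eqref{constraint electric Weyl} for aligned dust and the massive scalar field. You write that since $q_\a=0$ for these models, ``$\Xi_\b\circeq 0$ and both sides vanish,'' but $\Xi_\b$ in those CEFE formulations is \emph{not} built from $q_\b$ at all: for the massive scalar field it is $\tfrac{1}{3}\lambda(\psi_0 D_\b\psi_1-2\psi_1 D_\b\psi_0)$ with $\psi_0,\psi_1$ the rescaled field and its normal derivative, together with an additional integral obstruction $\int_{\Scri^+}X^\a\Xi_\a\,d\mu_h=0$ over conformal Killing fields of the induced metric; for dust it is $\tfrac{1}{3}D_\b\rho-\rho f_\b$, where $f_\b$ is the spatial part of the 1-form in the conformal-geodesic parametrisation of the flow lines, and one must additionally verify that the flow lines meet $\Scri^+$ orthogonally. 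The paper kills these terms by spatial homogeneity ($D_\b\psi_0=D_\b\psi_1=0$, $D_\b\rho=0$) and by the fact, from the cited Bianchi conformal-geodesic result, that aligned perfect-fluid flow lines are conformal geodesics with purely timelike 1-form, so $f_\b=0$. These verifications are exactly the model-dependent content of Part~1 and cannot be replaced by the observation $q_\a=0$. (Your matching for the trace-free models, where $\Xi_\b=-\lambda\Omega^{-4}q_\b$, is the correct and easy case.)

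A second, smaller omission is in Part~2: beyond the symmetric-hyperbolic/Kato machinery, the paper's proof consists mainly of checking that each formulation's \emph{specific} conformal matter variables are regular up to $\Scri^+$ --- e.g.\ $\hat\rho=\Omega^{-4}\rho$ and $\hat\nabla_a\hat u_b$ for radiation, $\hat\rho=\Omega^{-3}\rho$ for dust, $\psi=\Omega^{-1}\phi$ and $\hat{\bf e}_0(\psi)$ for the scalar field, and the boundedness and compact momentum support of the Vlasov variables established separately in the appendix (where $v$ must be shown to stay bounded away from zero along characteristics, not merely assumed). Your proposal treats Part~2 as a direct citation, whereas these regularity checks are where the listed hypotheses (the specific potential \eqref{scalar-field-potential}, the compact support and initial lower bound on $v$) actually enter.
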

The proof is given in Appendix \ref{proof-thm2} and it is an application of the results of \cite{Fri-EMYM, Fri-massive-scalar, LueVal12b, Fri-dust} and \cite{Joudioux} for Maxwell fields,  massive scalar fields, radiation fluids, aligned dust and massless Vlasov matter, respectively, by checking the conformal constraints at infinity as well as the conditions for regularity of the CEFE for each individual matter field given in those papers. Moreover, in the proof of part 1 we will make use of Proposition \ref{thm-constraints} and for part 2 we will use the estimates derived in Appendix \ref{appendix-vlasov}.

\begin{remark} 
Following the discussion in section \ref{multiple-matter} and in \cite{Lue13}, the above results are expected to also extend to Bianchi space-times containing several of the above matter models, each coupled only to gravity. \end{remark}

What is remarkable is that the asymptotic identities \eqref{constraints at Scri 0}-\eqref{constraints at Scri 3} and the conformal regularity even hold for some matter models for which regular CEFE have not been formulated yet, such as aligned perfect fluids or massive Vlasov matter. Moreover, we observe that the decay rates for the energy-momentum tensor of an aligned perfect fluids depend on $\gamma$. It has been possible to formulate CEFE for dust, which in terms of asymptotics is one of the weaker cases amongst the perfect fluids. 
These observations give further hope that the CEFE can be formulated for a wider class of matter models. But we won't pursue this problem further in this paper.

%%%%%%%%%%%%%%%%%%%%%%%%%%%%%%%%%%%%%%%%%%%%%%%%%%%%%%%%%%%%%%%%%
\subsection{Non-linear stability of solutions to the CEFE}
 \label{Section - stability} 
In the previous section, we established that large classes of Bianchi space-times satisfying Assumption \ref{Assumptions Wald+} give rise to regular solutions of the CEFE up to and beyond the conformal boundary. In the following, we shall use the conformal extension of a Bianchi space-time satisfying Theorem \ref{space-times-regular-CEFE}, and its solution of the CEFE, as our {\em background space-time}. 

Below we describe how the results can be adapted to show that the background Bianchi space-time is future non-linearly stable (with respect to small perturbations within the respective class of matter models). The results will be obtained without further analysis, but as a direct application of the results in \cite{Fri-EMYM, Fri-massive-scalar, Fri-dust, Joudioux, LueVal12b}\footnote{Note the proof for massless Vlasov can not be directly applied to Bianchi space-times and requires some adaptation (see Remark \ref{Vlasov-remark})} , which derive and analyse the CEFE for each of those matter fields and prove the related stability theorems. The proofs follow a common strategy by applying results of \cite{Kato} to the symmetric hyperbolic formulation of the CEFE.

Let $\cal \hat S$ be a suitable 3-dimensional space-like Cauchy surface within our Bianchi space-time. In particular, we may choose our initial surfaces of homogeneity on which the conditions of Assumption \ref{Assumptions Wald+} are satisfied. Denote by $\bm {\mathring  \omega}_0$ sufficiently smooth initial data for the CEFE on $\cal \hat S$. Further, let $(\mathring{M}, \mathring{g})$ denote the associated unphysical background solution to the CEFE on $[0, \tau_\infty + \Delta \tau] \times \hat {{\cal S}}$.
  \footnote{Alternatively, for the stability proof, we can also set $\cal \hat S=\cal \hat S_\infty$ which represents $\Scri^+$ and let $\bm {\mathring  \omega}_0$ denote the Cauchy data at infinity induced by the CEFE.}
Now let $\bm {\omega}_0$ denote new initial data for the CEFE given on $\cal \hat S$ where $\bm {\omega}_0=\bm {\mathring  \omega}_0+ \bm{\breve\omega}_0$ and $\bm{\breve\omega}_0$ corresponds to perturbations which are sufficiently smooth and sufficiently small on a suitable Sobolev norm $H^m$ (with $m\ge 5$) but otherwise arbitrary (i.e. they can be non-homogeneous). Then, the well-posedness of the respective Cauchy problem for the CEFE follows from \cite{Kato}. Furthermore, 
taking sequences of such initial data $\bm \omega^{(n)}_0=\bm {\mathring  \omega}^{(n)}_0+ \bm{\breve\omega}^{(n)}_0$ with sufficiently small perturbations ${\bm{\breve \omega}}^{(n)}_0$, one obtains that the corresponding solutions $\bm {\breve \omega}^{(n)}$ converge to zero uniformly in conformal time. In particular we can choose sufficiently small perturbations so that the perturbed null infinity is located in $\cal \hat S=\cal \hat S_\infty$. It then follows that the perturbed space-time is asymptotically simple and future  geodesically complete. Since we are only considering the evolution of perturbations to the future, we refer to such stability results as semi-global. The details of the proof depend on each matter content and conditions outlined in \cite{Fri-EMYM, Fri-massive-scalar, Fri-dust, Joudioux, LueVal12b}.

We now state our final theorem:
\begin{theorem}{\bf (Stability of Bianchi space-times)}
 \label{Main theorem}
\\
Suppose $(M,g)$ is a Bianchi space-time, whose surface of homogeneity $\hat {\cal{S}}$ is smooth, compact and orientable. Suppose further that the matter content in $(M,g)$ is given either by  
	\begin{itemize}
		\item Einstein-Maxwell field,
		\item Aligned radiation fluid,
		\item Aligned dust fluid,
		\item Massive scalar field, 
	\end{itemize} 
	and satisfies the conditions of Theorem \ref{space-times-regular-CEFE}. 
	
	Then $(M,g)$ is future non-linearly stable to small non-homogeneous perturbations on $\hat {\cal{S}}$ within each class of matter models in the following sense. 
The perturbations of the initial data of the Bianchi space-time are sufficiently small in some suitable Sobolev norms $H^m$ with $m\ge 5$. 
It then follows that these perturbations decay to zero in time, the perturbed space-time approaches de Sitter at late times and is future geodesically complete.

\end{theorem}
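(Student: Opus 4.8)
The plan is to obtain the statement as an application of the symmetric–hyperbolic stability theory for the conformal Einstein field equations, with Theorem \ref{space-times-regular-CEFE} supplying the background solution together with its conformal extension beyond $\Scri^+$.

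First I would fix an initial surface of homogeneity $\hat{\cal S}$ on which Assumption \ref{Assumptions Wald+} holds, and invoke Theorem \ref{space-times-regular-CEFE} to obtain the background unphysical solution $(\mathring{M},\mathring{g})$ of the CEFE on $[0,\tau_\infty+\Delta\tau]\times\hat{\cal S}$: the unphysical metric, frame, connection coefficients, Schouten tensor, rescaled Weyl tensor, the conformal factor $\Omega=L^{-1}$, its gradient $d_a$, the scalar $s$, and the relevant matter variables are all regular there, in particular on the slice $\{\tau=\tau_\infty\}$ which represents $\Scri^+$. Since $\hat{\cal S}$ is compact and orientable, the induced background data $\bm{\mathring\omega}_0$ lie in every Sobolev space $H^m$, and by part~1 of Theorem \ref{space-times-regular-CEFE} they satisfy the conformal constraints \eqref{constraint d s chi}--\eqref{constraint electric Weyl} for the matter model in question.

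Next, for each of the four matter contents I would quote the hyperbolic reduction of the CEFE carried out in the corresponding reference --- \cite{Fri-EMYM} for Maxwell, \cite{LueVal12b} for radiation, \cite{Fri-dust} for dust, \cite{Fri-massive-scalar} for the massive scalar field with potential \eqref{scalar-field-potential} --- which yields a first-order symmetric hyperbolic system whose coefficients depend smoothly on the unknowns in a neighbourhood of the background solution. Kato's theorem \cite{Kato} then gives, for data $\bm\omega_0=\bm{\mathring\omega}_0+\bm{\breve\omega}_0$ with $\Vert\bm{\breve\omega}_0\Vert_{H^m}$ sufficiently small ($m\ge 5$), a unique solution on a common interval containing $[0,\tau_\infty+\Delta\tau]$, depending continuously on the data; passing to a sequence $\bm{\breve\omega}_0^{(n)}\to 0$ forces the perturbations $\bm{\breve\omega}^{(n)}\to 0$ uniformly in conformal time. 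Translating back to the physical picture: the zero set of the perturbed conformal factor is a regular spacelike hypersurface $C^1$-close to $\{\tau=\tau_\infty\}$, hence still contained in the coordinate patch, so the perturbed physical space-time acquires a regular $\Scri^+$ on which $\Omega=0$, $\mathrm{d}\Omega\neq 0$ --- it is future asymptotically simple --- and running the Jacobi-field argument in the proof of Theorem \ref{prop-future-asymp-simple} on the perturbed geometry yields future causal geodesic completeness. Uniform smallness of $\bm{\breve\omega}^{(n)}$ together with the curvature decays \eqref{CNHT curvature decays} gives the ``locally approaches de Sitter'' conclusion.

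The main obstacle I expect is not the abstract PDE step but the verification of hypotheses: one must check that the homogeneous background data meet every structural requirement in each cited stability theorem (choice of gauge source functions, propagation of the constraints, positivity and regularity of the matter fields), and --- via Proposition \ref{prop-constraints at Scri} and Proposition \ref{thm-constraints} --- that the asymptotic identities established there coincide exactly with the admissible Cauchy data at infinity \eqref{constraint d s chi}--\eqref{constraint electric Weyl} used in those papers, including the constraint $w=0$ forced in the massive-scalar-field case. A secondary subtlety is that the perturbed data are genuinely inhomogeneous, so one must use the full CEFE reduction of the references rather than the ODE system of Sections \ref{setup}--\ref{Section - matter models}; this is exactly why compactness of $\hat{\cal S}$ is assumed. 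The massless Vlasov case, which would require the adaptation flagged in Remark \ref{Vlasov-remark}, is correctly left out of the final list.
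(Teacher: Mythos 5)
Your proposal is correct and follows essentially the same route as the paper: the proof there is exactly the combination of Theorem \ref{prop-future-asymp-simple}, Theorem \ref{space-times-regular-CEFE} and the Kato-based stability theorems of \cite{Fri-EMYM, LueVal12b, Fri-dust, Fri-massive-scalar}, applied to perturbations $\bm{\omega}_0=\bm{\mathring\omega}_0+\bm{\breve\omega}_0$ small in $H^m$ ($m\ge 5$), with the conformal extension beyond $\Scri^+$ guaranteeing that the perturbed null infinity stays inside the coordinate patch. Your flagged subtleties (matching the asymptotic identities to the Cauchy data at infinity, the $w=0$ condition for the scalar field, the exclusion of massless Vlasov) are precisely the points the paper delegates to Propositions \ref{prop-constraints at Scri}--\ref{thm-constraints}, Theorem \ref{space-times-regular-CEFE} and Remark \ref{Vlasov-remark}.
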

\begin{proof}
The proof follows from Theorem \ref{prop-future-asymp-simple}, Theorem \ref{space-times-regular-CEFE} and, as discussed above, from applying of the stability results of \cite{Fri-EMYM} for the Einstein-Maxwell system, \cite{Fri-dust} for dust fluids, \cite{Fri-massive-scalar} for massive scalar fields and of \cite{LueVal12b} for radiation fluids.  
\end{proof}

\begin{remark} 
\label{Vlasov-remark}
The stability proof for massless Vlasov matter in \cite{Joudioux} is specifically set up for Minkowski and de Sitter space using the cylindrical coordinates on the Einstein cylinder. The work in Appendix \ref{appendix-vlasov} derives the relevant boundedness in a more general setting. Using the patching argument detailed in \cite{Fri-EMYM} would allow to adapt the stability proof of \cite{Joudioux} for perturbations of the initial data on $\cal \hat S$ or $\cal \hat S_\infty$  and show that a unique massless Vlasov space-time exists which stays close to the original Bianchi space-time. The remaining steps follow the approach of the  references used above.

\end{remark}
Our result extends the cosmic no-hair result of Wald \cite{Wal83} in the sense that the almost spatially homogeneous space-times of Theorem \ref{Main theorem}, will locally approach de Sitter space-time at late times. 
 However, it is  interesting to note that the underlying conformal evolution preserves some information of the Bianchi type of the original unphysical background space-time via $\hat{\Gamma}\tensor{}{c}{ab} $. We called this property {\em conformal hair}. We point out that this property is potentially interesting for cosmologies based on conformal geometry such Penrose's cyclic models \cite{Penrose-CCC,Tod-CCC}.
\subsection*{Acknowledgments}
The authors thank Helmut Friedrich, J\'er\'emie Joudioux, Ernesto Nungesser, Juan Valiente-Kroon  and Max Thaller for useful discussions. CL would like to thank  for the hospitality of Universidade do Minho, where this work was started. FM thanks hospitality from EPFL where this work was concluded, CMAT, Univ. Minho, through FCT projects UIDB/00013/2020 and UIDP/00013/2020 and FEDER Funds COMPETE, as well as CAMGSD, IST-ID, projects UIDB/04459/2020 and UIDP/04459/2020.
%%%%%%%%%%%%%%%%%%%%%%%%%%%%%%%%%%%%%%%%%%%%%%%%%%
\appendix
%%%%%%%%%%%%%%%%%%%%%%%%%%%%%%%%%%%%%%%%%%%%%%%%%%
%______________________________________________________________________________________________
\section{Orthonormal tetrad formalism}
\label{sec1+3}
%______________________________________________________________________________________________
In this appendix, we briefly review the orthonormal tetrad formalism in General
Relativity (see e.g.  \cite{EllMac69}, \cite{MacCallum73} and \cite{vElUgg96}
for more details).

We identify the vector field ${\bf u}$ with a time-like unit basis field
$\bm{e}_0$ and take an orthonormal spatial basis ${\bm e_\a}$. The orthonormal tetrad field
$\bm{e}_a$ is such that $\bm e_0\cdot \bm e_0=-1$, $\bm e_\a \cdot \bm e_\b =\delta_{\a \b}$ and 
$\bm e_0\cdot \bm e_\a=0$. 
The covariant derivative of $\bm e_a$ along $\bm e_b$ is given by 
\begin{equation}
\label{ola13}
\nabla_{\bm e_b} \bm e_a= \Gamma^c{}_{ab} \bm e_c,
\end{equation}
where $\Gamma^c_{~ab}$ are the so-called Ricci rotation coefficients or frame connection coefficients. The commutators associated with the tetrad 
$\{\bm e_a\}$ can be written as
\begin{equation}
\label{com}
[\bm e_a,\bm e_b]=\nabla_{\bm e_a} \bm e_b - \nabla_{\bm e_b} \bm e_a= \gamma^c{}_{ab}\bm e_c,
\end{equation}
where $\gamma^c_{~ab}=\gamma^c_{~[ab]}$ 
and the geometrical objects $\Gamma^a{}_{bc}$ and $\gamma^c{}_{ab}$, in the case 
of zero torsion are related by
\begin{eqnarray}
\label{ola14}
\gamma^c{}_{ab}&=&\Gamma^c{}_{ba}-\Gamma^c{}_{ab}\\
\label{olala}
\Gamma_{abc}&=&\half(\eta_{ad}\gamma^d{}_{cb}+\eta_{bd}\gamma^d{}_{ac}-
\eta_{cd}\gamma^d{}_{ba}),
\end{eqnarray}
where $\Gamma_{abc}=g_{ad}\Gamma^d_{~bc}$. 
In order to calculate the expressions for the commutators (\ref{com}) one first has
to calculate the Ricci rotation coeficients (\ref{olala}).
Since one has $\Gamma_{abc}=-\Gamma_{cba}$, then $\Gamma_{abc}$ has 24 independent components.  
Using the formalism of Section \ref{setup}, we obtain   
\begin{eqnarray}
\label{ola15}
\Gamma_{\a00}&=& A_\a
\\
\label{olapa}
\Gamma_{\a\b0}&=&\sigma_{\a\b}+ \Hubble\delta_{\a\b}+\w_{\a\b}.
\end{eqnarray}
It is also useful to define
the quantity $\Omega^a$ as 
\begin{equation}
\label{Omega_definition}
\Omega^a=\half\eta^{abcd}(\nabla_{\bm e_0}{\bm e}_c) \bm e_d u_b,
\end{equation}
which can be interpreted as the angular velocity of
the triad $\bm e_\a$ 
in the rest space of an observer with four-velocity $\bm e_0$, as the observer propagates $\bm e_\a$
along $\bm e_0$. It is therefore not part of the space-time dynamics and with a tetrad choice one 
can always make $\Omega_a=0$, in which case, one says that the frame is Fermi propagated along ${\bf u}$. 
One can show using (\ref{ola13}), (\ref{olala}) that
\begin{equation}
\label{ola17}
\Gamma_{\a 0\b}=\varepsilon_{\a\b\gamma}\Omega^\gamma.
\end{equation}
Now, the commutation functions $\gamma^\a{}_{\b \g}$ can be decomposed
as 
\begin{eqnarray}
\label{ola18}
\g^\a{}_{\b\g}=\e_{\b \g \delta} n^{\delta \a}+\delta^\a_{~\g} a_\b-\delta^\a_{~\b} a_\g
\end{eqnarray}
where $n^{\a \b}=n^{(\a \b)}$ and  
the objects $\varepsilon_{\b \g \delta}$ and $\delta^\a_{~\b}$ are the Levi--Civita and
the Kronecker delta symbols respectively. Then, the remaining 
spatial Ricci rotation coefficients can be written as
\begin{equation}
\label{ola19}
\Gamma_{\a\b\gamma}=2a_{[\a}\delta_{\gamma]\b}+\varepsilon_{\b\delta[\a}n^\delta_{\gamma]}+
\frac{1}{2}\varepsilon_{\a\gamma\delta}n^\delta_\b.
\end{equation}
The 16 Jacobi identities 
\begin{equation}
\label{jacobi}
[[\bm e_a,\bm e_b],\bm e_c]+[[\bm e_b,\bm e_c],\bm e_a]+[[\bm e_c,\bm e_a],\bm e_b]=0,
\end{equation}
can be written as
\begin{equation}
\label{ola22}
\bm e_{[a}(\gamma^d{}_{bc]})+\gamma^e{}_{[ab}\gamma^d{}_{c]e}=0
\end{equation}
and are equivalent to the 16 Bianchi identities $R_{a[bcd]}=0$. 
These identities will provide evolution and constraint
equations for the commutation functions, as we shall see in Appendix \ref{appendix-equations}.

The Riemann curvature can be decomposed in terms of the Weyl tensor $C\tensor{ab}{c}{d} $ and the Schouten tensor 
$P_{ab} = \tfrac{1}{2} R_{ab} - \tfrac{1}{12} R g_{ab} $ 
as
\begin{equation}
\label{Riemann curvature decomposition}
R^{ab}{}_{cd}
= C^{ab}{}_{cd} + 2\delta^{[a}{}_{[c}R^{b]}{}_{d]} - 
\frac{1}{3}R\delta^{a}{}_{[c}\delta^{c}{}_{d]}
= C^{ab}{}_{cd} + 4\delta^{[a}{}_{[c}P^{b]}{}_{d]}.
\end{equation}
In terms of the energy-momentum tensor, the Schouten tensor takes the form
\begin{equation}
\label{Schouten in terms of matter}
P_{ab}=\frac{1}{2}T_{ab} - \frac{1}{6}Tg_{ab}+\frac{1}{2}\lambda^2 g_{ab}
=\frac{1}{2}(\rho+p)u_a u_b + \frac{1}{6}\rho g_{ab} + q_{(a}u_{b)} +\frac{1}{2} \pi_{ab} +\frac{1}{2}\lambda^2 g_{ab}.
\end{equation}
The Weyl tensor can be decomposed into its "electric part" $E_{ab}$ and 
"magnetic part" $H_{ab}$ relative to the vector field
 ${\bf u}$ as
\begin{equation}
\label{ola23}
E_{ac}=C_{abcd}u^b u^d,~~~H_{ac}=\frac{1}{2}\eta_{ab}^{~~st} C_{stcd} u^b u^d.
\end{equation}
The tensors $E_{ab}$ and $H_{ab}$ satisfy $E_{ab}u^b=H_{ab}u^b=0$ and one can show that
$C_{abcd}=0\Leftrightarrow E_{ab}=H_{ab}=0.$
 In Appendix \ref{appendix-equations}, we will write the expressions for $E_{ab}$ and $H_{ab}$ with respect to an orthonormal frame, for the case where $A^\a=\omega^\a=0$ and $\Omega^\alpha=0$.
%
%%%%%%%%%%%%%%%%%%%%%%%%%%%%%%%%%%%%%%%%%%%%%%%%%
%_____________________________________________________________________________________________
\section{Evolution and constraint equations in the physical system}
\label{appendix-equations}
Below we give the reduced Einstein field equations under the assumptions of spatial homogeneity and 
$A^\a=\omega^\a=0$ and $\Omega^\alpha=0$, which correspond to our gauge choice in the main text:
\begin{eqnarray}
\label{jac4SH}
{\bm e}_0 (a^\a)&=& -\Hubble a^\alpha - a_\beta \sigma^{\alpha\beta} \quad\quad
\\
%___________
\label{jac5SH}
{\bm e}_0 (n^{\a\b})&=& -\Hubble n^{\alpha\beta} - 2\sigma\tensor{}{(\alpha}{\gamma}n^{\beta)\gamma} \quad\quad 
\\
\label{efe0SH}
{\bm e}_0(\Hubble)&=&-\Hubble^2-\frac{2}{3}\sigma^2 - \frac{1}{6}(\rho+3p)+\lambda^2
\\
\label{efe2SH}
{\bf e}_0(\sigma^{\a\b})&=&-3 \Hubble \sigma^{\a\b} + \pi^{\a\b} -~^{(3)}\hspace{-0.05cm}S^{\a\b}\quad\quad 
\\
\label{ids1SH}
{\bf e}_0({\rho})&=&- 3(\rho+p)\Hubble-\pi_{\a\b}\sigma^{\a\b} + 2 a^\a q_\a \\
\label{efe51SH}
{\bf e}_0 (q_\a)&=&
-4\Hubble q_\a - \sigma_{\a\b}q^\b + 3a_\b\pi^\b_\a+\pi^\gamma_\b\varepsilon_{\gamma\a\delta}n^{\b\delta},
\end{eqnarray}
where 
\begin{eqnarray}
\label{efe70SH}
^{(3)}\hspace{-0.05cm}S_{\a\b} &:=& 2n_{\a\gamma}n^{\gamma}_{~\beta}-n^{\gamma}_{~\gamma}n_{\a\b}-\frac{1}{3}\delta_{\a\b} [ 2n^{\a\b}n_{\a\b}-(n^\a_\a)^2 ] + 2 \varepsilon^{\gamma\delta}_{~~(\a}a_{|\gamma|} n_{\b)\delta} \\
\label{efe7SH}
^{(3)}\hspace{-0.05cm}R&=& -6a_\a a^\a-n^{\a\b}n_{\a\b}+\frac{1}{2}(n^\a_{~\a})^2\\
\label{efe1SH}
q^\a &=& 3\sigma^{\b\a} a_\b - \varepsilon^{\a\b\gamma} n_{\gamma\delta}\sigma_\b^\delta
\end{eqnarray}
and $^{(3)}\hspace{-0.05cm}S_{\a\b}$ and $^{(3)}\hspace{-0.05cm}R$ are the trace-free and trace parts of the intrinsic 3-Ricci curvature $^{(3)}\hspace{-0.05cm}R_{\a\b}$ of the spacelike surfaces.\footnote{This would not be the case if $\omega^{\alpha}\ne 0$,  as $^{(3)}\hspace{-0.05cm}R_{\a\b}$ would no longer be a tensor, see \cite{vElUgg96} for more details.}

From the contraction of (\ref{efe0SH}) and substitution of (\ref{efe2SH}), one gets  
\begin{equation}
\label{efe6SH}
3\Hubble^2=\sigma^2+\rho+3\lambda^2 - \frac{^{(3)}\hspace{-0.05cm}R}{2}.
\end{equation}
In addition, using \eqref{ola23}, one can write the expressions for $E_{\a\b}$ and $H_{\a\b}$ with respect to an orthonormal frame as
\begin{eqnarray}
\label{ola26SH}
E_{\a\b}&=&-\half\pi_{\a\b}+\Hubble\s_{\a\b}-\s_{\a\g}\s^\g_\b +\frac{2}{3} \delta_{\a\b}\s^2
+^{(3)}\hspace{-0.1cm}S_{\a\b}\\
\label{ola28SH}
H_{\a\b}&=&\half n^\g_\g\s_{\a\b} - 3 n_{(\a}^{\g}\s_{\b)\g} + \delta_{\a\b} n_{\g\delta}\s^{\g\delta}
- \varepsilon^{\g\delta}{}_{\a}a_{|\g|} \s_{\b)\delta}.
\end{eqnarray}
For the homogeneous spacelike hypersurfaces, the 3-dimensional Cotton-York tensor is given by
\begin{equation}
\label{3-Cotton-York}
Y_{\a\b} = \half n^\g_\g \,^{(3)}\hspace{-0.05cm}S_{\a\b} - 3 n_{(\a}^{\g} \,^{(3)}\hspace{-0.05cm}S_{\b)\g} + \delta_{\a\b} n_{\g\delta} \,^{(3)}\hspace{-0.05cm}S^{\g\delta}
- \varepsilon^{\g\delta}{}_{\a}a_{|\g|} \,^{(3)}\hspace{-0.05cm}S_{\b)\delta}
\end{equation}
Now, we may use \eqref{ola26SH} to rewrite \eqref{efe2SH} as
\begin{equation}
\label{efe2SHrewritten}
{\bf e}_0(\sigma^{\a\b})=-2\Hubble \sigma^{\a\b} -\s_{\a\g}\s^\g_\b +\frac{2}{3} \delta_{\a\b}\s^2 - E_{\a\b} +\frac{1}{2} \pi^{\a\b}. 
\end{equation}
In turn, the evolution equations for the electric and magnetic parts of the Weyl tensor are:
\begin{eqnarray}
\label{EpropSH}
{\bf e}_0(E^{\alpha\beta}+\frac{1}{2}\pi^{\alpha\beta})&=& - 3\Hubble (E^{\alpha\beta} + \frac{1}{6}\pi^{\a\b}) + 3 \sigma^{(\alpha}_{~~\gamma}(E^{\beta)\gamma}-\frac{1}{6}\pi^{\beta)\gamma})-\delta^{\alpha\beta} \sigma_{\gamma\delta}(E^{\gamma\delta}-\frac{1}{6}\pi^{\gamma\delta})\nonumber\\
&&-\varepsilon^{\gamma\delta(\alpha}a_\gamma H^{\beta)}_{~\delta}+
\frac{1}{2}n^{\gamma}_{~\gamma}H^{\alpha\beta}-3n^{(\alpha}_{~~\gamma}H^{\beta)\gamma}+\delta^{\alpha\beta}n_{\gamma\delta}H^{\gamma\delta}\nonumber\\
& &-\half(\rho + p)\sigma^{\a\b} -\frac{1}{2} a^{(\alpha}q^{\beta)}+\frac{1}{6}\delta^{\alpha\beta} a_\gamma q^\gamma
  +\varepsilon^{\gamma\delta(\alpha}\frac{1}{2} n^{\beta)}_{~~\gamma}q_\delta
  \\
\label{HpropSH}
{\bf e}_0({H}^{\a\b}) &=& - 3\Hubble H^{\a\b} + 3 \sigma^{(\alpha}_{~~\gamma}H^{\beta)\gamma} -\delta^{\a\b}\sigma_{\gamma\delta}H^{\gamma\delta} + \varepsilon^{\gamma\delta(\a} 2\Omega_\gamma H^{\beta)}_{~~\delta}\nonumber\\
&& +3 n^{(\alpha}_{~~\gamma}(E^{\beta)\gamma}-\frac{1}{2}\pi^{\beta)\gamma})-\frac{1}{2}n^\gamma_{~\gamma}(E^{\a\b}-\frac{1}{2}\pi^{\a\b})-\delta^{\a\b}n_{\gamma\delta}(E^{\gamma\delta}-\frac{1}{2}\pi^{\gamma\delta})\nonumber\\
&&+\varepsilon^{\gamma\delta(\alpha} a_\gamma(E^{\beta)}_{~~\delta}-\frac{1}{2}\pi^{\beta)}_{~~\delta}) + \varepsilon^{\gamma\delta(\alpha} \sigma^{\beta)}_{~~\gamma} q_\delta.
\end{eqnarray} 
%%%%%%%%%%%%%%%%%%%%%%%%%%%%%%%%%%%%%%%%%%%%%%%%%%%%%%%%%%%%
\section{Estimates for the conformal Einstein-Vlasov system}
\label{appendix-vlasov}
%%%%%%%%%%%%%%%%%%%%%%%%%%%%%%%%%%%%%%%%%%%%%%%%%%%%%%%%%%%%
The stability of de Sitter space-time as solution to the massless Einstein-Vlasov system was proved in \cite{Joudioux} using conformal techniques and Kato's theorem. In this paper, instead of de Sitter, we have spatially homogeneous backgrounds. However, some proofs of \cite{Joudioux} can be adapted to our case, provided we have enough control on the relevant variables for the system. In this appendix, we give further details about how this can be done.

In our frame coordinates, the Vlasov energy-momentum tensor \eqref{Einstein-Vlasov} is given 
\begin{equation}
\label{Tab-Vlasov}
\hat T_{ab}=-\int_{{\cal P}} \hat f(x^\mu ,v_a) \frac{v_a v_b}{v} dv_1dv_2dv_3,
\end{equation}
where $v_a$ are the unphysical ${\bf \hat e}_a$ frame components of $p_\mu$ so that $\hat g^{0\mu}p_\mu=-v_0=v:=\sqrt{v_1^2+v_2^2+v_3^2}$. Note, as  we only work with $v_a$ in the unphysical setting we omit the hat, which would otherwise clutter the equations below. $\hat f$ is the density function $f$ expressed in terms of the unphysical variables $v_a$. 

The symmetric hyperbolic PDE system in \cite{Joudioux} involves geometric variables as well as the matter variables composed of
\begin{equation}
\label{vlasov-matter-variables}
v_a,~ \hat f,~ {\bf \hat e}_a(\hat f)~~\text{and}~~\partial_{ v_a} \hat f.
\end{equation}
To apply Kato's theorem to our backgrounds, we first must ensure that those variables are bounded in the background and, furthermore, that the $v_a$ variables are bounded away from zero. We start by proving this last condition which ensures that some matrix coefficients of the symmetric hyperbolic system are non-zero. In order to ensure that, we follow the strategy of \cite{Joudioux} and show, for our background, that if $v_a$ is bounded away from zero initially then it will remain so for a sufficient amount of time. 

Having in mind the application of Theorem 5.1 of \cite{Joudioux} to our case, we follow closely their notation and choice of parameters.
Denoting the frame metric by $\eta^{ab}$ we start by writing the massless Vlasov equation \eqref{Vlasov-equation} in frame coordinates as
\begin{equation}
\label{Vlasov-frame}
\eta^{ab} v_a \hat e_b^\mu\partial_{x^\mu} \hat f+\eta^{ab} v_av_c \hat \Gamma_{a~d}^{~c} \partial_{v_d} \hat f=0
\end{equation}
and define the set
$$
\Omega_\delta=\{(v_0,v_1,v_2,v_3)\in \R^4: v_0=-v, ~\delta\le v \le \frac{1}{\delta} \}.
$$ 
Now, consider the characteristic system of the massless Vlasov  equation \eqref{Vlasov-frame} formulated in terms of the unphysical variables in $(\hat{M},\hat{g})$ as
\begin{eqnarray}
\label{xdot}
\dot x^\mu (s)&=&\eta^{ab} \hat e_a^\mu(x(s)) v_b(s)\\
\label{vdot}
\dot v_d(s)&=&\eta^{ab} \hat \Gamma_{b~d}^{~c} v_a(s) v_c(s)
\end{eqnarray}
with initial conditions such that $v_a(0)\in \Omega_{1/2}$ and  $x^0(0)=0$. We follow the analysis of \cite{Joudioux} and 
consider $x^0(s)\in [0,\tfrac{3}{2}\tau_\infty)$. The aim is to show that $v$ does not vanish on $[0,\tfrac{3}{2}\tau_\infty)$, and therefore neither on $[0,\tau_\infty]$. 

Observe that $\hat \Gamma_{b~0}^{~c}= \hat{\sigma}_{b}^{~c}$ and let $\hat\sigma_{M}=\displaystyle{\sup_{\tau, \hat{\epsilon}_a} |\hat \sigma(\hat{\epsilon}_a \hat{\epsilon}_b)(\tau)|>0}$ where $\hat{\epsilon}_a$ is any triad of space-like unit vectors orthogonal to ${\bf \hat e}_0$. Since the conformal shear $\hat \sigma_{ab}$ is bounded, $\hat \sigma_{M}$ must be finite.
The equation for $v_0=-v$ in \eqref{vdot} gives
$$|\dot v(s)|=|\dot v_0(s)|=|\eta^{ab} \hat \Gamma_{b~0}^{~c} v_a(s) v_c(s)| \le \hat \sigma_{M} v^2.$$
Upon integration we get
\begin{equation}
\label{vnoteq}
|v(s)|\ge \frac{1}{\frac{1}{v_*}+\hat\sigma_{M}s}\ge \frac{1}{2+\hat\sigma_{M}s}
\end{equation}
as well as 
\begin{equation}
\label{upperbound}
|v(s)|\le \frac{1}{\frac{1}{v_*}-\hat\sigma_{M}s}\le \frac{1}{\frac{1}{2}-\hat\sigma_{M}s},
\end{equation}
where we have used $1/2\le v_* \le 2$. It thus follows that $v$ does not vanish as long as $s$ is finite. Thus  the next step is to show that $s$ is indeed finite as long as $x^0(s)\in  [0,\tfrac{3}{2}\tau_\infty)$. 
 
Now, \eqref{xdot} and \eqref{vnoteq} give $|\dot x^0 (s)| \ge |v(s)| \ge (2+\hat\sigma_{M}s)^{-1}$, which integrated gives
$$
x^0(s)\ge \frac{1}{\hat\sigma_M}\ln{(1+\tfrac{1}{2}\hat\sigma_M s)}.
$$
It follows that $x^0(s)$ reaches $\tfrac{3}{2}\tau_\infty$ before $s$ reaches
$$
s_M:= \frac{2}{\hat \sigma_M}\left( e^{(\tfrac{3}{2}\hat\sigma_M \tau_\infty)}-1\right).  
$$
In particular there exists $s_\infty < s_M$ such that $x^0(s_\infty)=\tau_\infty$.
Finally, substituting $s_M$ in \eqref{vnoteq} we find that for all $s\le s_M$
$$
 v \ge e^{(\tfrac{3}{2}\hat\sigma_M \tau_\infty)}.
$$
Thus in particular $v$ is bounded away from 0 up to $\Scri^+$, as desired.

Recall that the conformal shear $\hat\sigma_{ab}$ decays to zero at $\Scri^+$. Thus we can select $s_1 \in [0,s_\infty)$ such that the bound $\hat\sigma_{M}$ becomes sufficiently small to guarantee $\hat\sigma_{M}s < \frac{1}{4}$ for all $s>s_1$. Then \eqref{vnoteq} and  \eqref{upperbound} imply that $v$ is bounded above and below at $\Scri^+$, in fact
$$
v_a(0)\in \Omega_{1/2} \implies v_a(s) \in \Omega_{1/4} \quad \forall s \in [s_1,s_\infty].
$$
As highlighted in  \cite{Joudioux} this implies that the physical equivalent of $v_a$ is bounded by $O(e^{-\lambda t})$. We note that choice of $\delta =1/2$ is for illustrative purpose of the derivation. Other values may also be suitable.

In the following we assume that $f(t_*)$, and hence $\hat f (\tau_*)$, is spatially homogeneous.  Moreover we assume that $f(t_*, v)$ is a $C^1$ function of compact support on $ {\cal P}$ and $v_a(0)\in \Omega_{\delta} $ for some  suitable $\delta$. The last conditions implies that $v_a$ is bounded and has compact support in some $\Omega_{\delta*} \subset {\cal P}$ up to $\Scri^+$:
Given these assumptions and our gauge choices we can deduce that up to $\Scri^+$
\begin{itemize}
\item $v_a$ is bounded and has compact support in some $\Omega_{\delta*} \subset {\cal P}$,
\item $\partial_{ v_a} \hat f$ are continuous and bounded, 
\item ${\bf \hat e}_i(\hat f)=0$ for $i=1,2,3$, and hence
\item ${\bf \hat e}_0(\hat f)$ is bounded by \eqref{Vlasov-frame}.
\end{itemize} 

\section{Proof of Theorem \ref{space-times-regular-CEFE}}
 \label{proof-thm2}
\subsection*{Proof of Part 1}
The equations \eqref{constraints at Scri 0}-\eqref{constraints at Scri 2} agree with \eqref{constraint d s chi}-\eqref{constraint Schouten magnetic Weyl} if one sets $w=0$. Thus, the only equation that requires closer attention is \eqref{constraint electric Weyl}, which depends on the matter model.

For \textit{tracefree matter}, we get $\Xi_\b = - \lambda \,\Omega^{-4}\,q_{\b}$, where $q_\b = q_\mu e^\mu_\b $. Hence, \eqref{constraints at Scri 3} and \eqref{constraint electric Weyl} agree directly.

For \textit{massive scalar fields}, we have $q_\a=0$. If the scalar field satisfies the assumptions given in Section \ref{Section - massive scalar field}, then Proposition \ref{prop-constraints at Scri} holds and \eqref{constraints at Scri 3} reads $D^\a \mathcal{E}_{\a\b} \circeq 0$. The corresponding CEFE constraint \cite{Fri-massive-scalar}
uses
$ \Xi_\b \circeq \frac{1}{3} \lambda (\psi_0 D_\b \psi_1 - 2 \psi_1 D_\b \psi_0)$,
where $\psi_0$ and $\psi_1$ are smooth functions on $\Scri^+$ representing $\hat \phi$ and ${\bf \hat e}_0(\hat\phi)$, respectively. Moreover, if $X^\a$ is a conformal vector field admitted by the induced metric $h$ on $\Scri^+$, then it must satisfy $\int_{\Scri^+ } X^\a \Xi_\a d\mu_h=0$, where $d\mu_h $ is the volume element of the metric induced on $\Scri^+$. For a Bianchi space-time with a spatially homogeneous massive scalar field $\phi$ we have  ${\bf e}_\a(\psi_0)=0$ and ${\bf e}_\a(\psi_1)=0$ and, hence, $\Xi_\a \circeq 0$ so that the integral vanishes identically as required.

For \textit{aligned dust}, we have $q_\b=0$ so that \eqref{constraints at Scri 3} reads $D^\a \mathcal{E}_{\a\b} \circeq 0$ once more. The constraint equation at infinity for dust \cite{Fri-dust} has  $\Xi_\b=1/3 D_\b \rho - \rho f_\b$, where $f_\b$ is the $(g_{ab}+u_a u_b)$-projected component of the 1-form $b_a$ associated with the conformal geodesic parametrisation of the flow lines. In addition, the flow lines must meet $\Scri^+$ orthogonally.
The latter condition holds since the fluid is aligned with the normal vector of the surfaces of homogeneity.
For the first condition, spatial homogeneity implies $D_\b \rho=0$ while, for the second term, we use the fact that the flow lines of a perfect fluid in a Bianchi space-time can be reparametrised as conformal geodesics with $b_a=\alpha u_a$, where $\a$ is a funtion of $t$, respectively $\tau$, and $u_a$ is the physical fluid flow velocity \cite{LueTod-Bianchi}. Hence, $ f_\b$ vanishes along the flow lines of an aligned perfect fluid Bianchi space-time and, as a result, $\Xi_\b$ vanishes at $\Scri^+$.

%%%%%%%%%%%%%%%%%%%%%%%%%%%%%%%%%%%%%%%%%%
\subsection*{Proof of Part 2}
%%%%%%%%%%%%%%%%%%%
 
We will show that, for each individual matter model below, the matter related conformal variables are regular and, thus, we will be able to conclude, using the results of \cite{Fri-EMYM,LueVal12b,Fri-dust,Fri-massive-scalar,Joudioux} that the Bianchi space-times under consideration give rise to regular solutions of the CEFE.
Given that the conformal constraints \eqref{constraint d s chi}-\eqref{constraint electric Weyl} are satisfied on ${\hat{\cal{S}}}_\infty$, we can prove the existence of a solution of the CEFE in a neighbourhood to the future of ${\hat{\cal{S}}}_\infty$. In other words from the perspective of the related Bianchi space-time we have a conformal extension beyond $\Scri^+$.
%%%%%%%%%%%%%%%%%%%%%%
\subsubsection*{Einstein-Maxwell fields}
The CEFE for an Einstein-Maxwell system were derived in \cite{Fri-EMYM} and later reformulated in a gauge using conformal curves \cite{LueVal12a}.
For the CEFE, we need to analyse the ${\bf \hat e}_a $ frame components of the Faraday tensor $F_{\mu\nu} $ and its derivatives $\hat{\nabla}_\lambda F_{\mu\nu}$.
Observe that, due to our convention, the frame components in the physical and unphysical frame are related by
$\hat{F}_{ab}=L^2 F_{ab}$. It follows that it is sufficient to show finiteness of the electric and magnetic field components ${\cal{E}}_{a} $ and ${\cal{H}}_{a} $, as well as their covariant derivatives $\hat{\nabla}_{a} {\cal{E}}_{b} $ and $\hat{\nabla}_{a} {\cal{H}}_{b} $. This was established in Section \ref{EMYM section}.

\subsubsection*{Aligned radiation fluid}

The CEFE for radiation fluids were derived and analysed in \cite{LueVal12b}. The matter-related variables are: the rescaled fluid velocity $\hat{u}^\mu =\Omega^{-1} u^\mu$, the rescaled density $\hat{\rho}=\Omega^{-4}\rho$ as measured by a comoving observer and the ${\bf \hat e}_a $ frame components of their first derivatives $\hat{\nabla}_{\mu} \hat{u}_{\nu} $ and $ \hat{\nabla}_{\mu} \hat{\rho}$.
When the radiation fluid is aligned with the canonical observer, then ${\bf \hat e}_0^\mu = \hat{u}^\mu$. We observe that we have shown earlier that the rescaled density $ \hat{\rho}=L^4\rho$, as well as ${\bf \hat e}_0 (\hat{\rho})$ and $\hat{\nabla}_{a} \hat{u}_{b} = \hat{\Gamma}\tensor{a}{c}{0}\hat{g}_{bc}$ are all finite.

\subsubsection*{Aligned dust}

The CEFE for dust with $\Lambda>0$ were derived in \cite{Fri-dust}. The matter-related variables are the rescaled fluid velocity $\hat{u}^\mu =\Omega^{-1} u^\mu $ and the rescaled density $\hat{\rho}=\Omega^{-3}\rho$, as measured by a comoving observer.
For aligned dust, we have $\hat{u}^\mu = {\bf \hat e}_0^\mu$ once more, and $\hat{\rho}$ agrees with the established variable.
The fluid flow lines of dust space-times can be reparametrised as conformal geodesics. A condition for conformal regularity is that the conformal geodesics must meet the conformal boundary at right angle \cite{Fri-dust}. As was shown in \cite{LueTod-Bianchi}, the trajectories of the canonical observer in a perfect fluid Bianchi space-time can be parametrised as conformal geodesics, whose 1-form has no spatial components. Hence, aligned dust flows along conformal geodesics.

\subsubsection*{Massive scalar field}

The CEFE for the massive scalar field $\phi$ with $\Lambda>0$ were analysed in \cite{Fri-massive-scalar}.
The matter-related variables are the rescaled scalar field $\psi=\Omega^{-1}\phi$ and the ${\bf \hat e}_a $ frame components of its gradient $\hat{\nabla}_a \psi $.
The potential $V(\phi)$ in \cite{Fri-massive-scalar} has the form $V(\phi) = \frac{1}{2} m^2 \phi^2 + \phi^4 U(\phi)$, which agrees with the potential in Section \ref{Section - massive scalar field}, as $m^2=2\lambda^2$, and we have shown that both $\psi$ and ${\bf \hat e}_0(\psi)$ have finite limits at $\tau_\infty$.

\subsubsection*{Massless Vlasov matter}

The CEFE for the massless Einstein-Vlasov system were derived and analysed in \cite{Joudioux}.
In Appendix \ref{appendix-vlasov} we derive several results for Bianchi space-times with massless Vlasov matter based on the assumptions that $f(t_*)$, and hence $\hat f (\tau_*)$, is spatially homogeneous. Moreover we assume that $f(t_*, v)$ is a $C^1$ function of compact support on $ {\cal P}$ and $v_a(\tau_*)\in \Omega_{\delta} $ for some  suitable $\delta$ (as in \cite{Joudioux}), see Appendix \ref{appendix-vlasov} for details. 
Using the results established at the end of Appendix \ref{appendix-vlasov}, our assumptions and our gauge choices imply for the Bianchi space-times considered in this article:
\begin{itemize}
\item The matter variables defined in Eq.(3.2) of \cite{Joudioux} are all bounded up to $\Scri^+$,
\item The momenta $v_a$ are bounded and have compact support in some $\Omega_{\delta*} \subset {\cal P}$.
\end{itemize}
It follows that the all the variables required in the CEFE for massless Vlasov matter are regular up to $\Scri^+$.
\\\\
The above matter types all satisfy Assumption \ref{Assumption for Pi}  automatically with $k=-1$. 
Then, using the results of \cite{Fri-EMYM,LueVal12b,Fri-dust,Fri-massive-scalar,Joudioux}, we obtain a regular solution of the respective CEFE up to $\tau=\tau_\infty$, that is up to conformal infinity, where the Bianchi space-time satisfies the corresponding conformal constraints at infinity according to Part 1. Thus we have suitable initial data at infinity which leads to a regular solution of the CEFE on a time interval $[\tau_\infty - \Delta \tau, \tau_\infty + \Delta \tau] $. Due to uniqueness, the physical space-time to the past of $\tau_\infty$ must be the Bianchi space-time considered in the first place.
 
%%%%%%%%%%%%%%%%%%%%%%%%%%%%%%%%%%%%%%%%%%%%%%%%%%%%%%%%%%%%

%%%%%%%%%%%%%%%%%%%%%%%%%%%%%%%%%%%%%%%%%%%%%%%%%%%%%%%%%%%%%%%%%%%%%%%%%%
\end{document}